\let\counterwithin\relax  
\definecolor{dark-gray}{gray}{0.3}
\definecolor{dkgray}{rgb}{.4,.4,.4}
\definecolor{dkblue}{rgb}{0,0,.5}
\definecolor{medblue}{rgb}{0,0,.75}
\definecolor{rust}{rgb}{0.5,0.1,0.1}
\newtheoremstyle{myThm} 
    {\topsep}                    
    {\topsep}                    
    {\itshape}                   
    {}                           
    {\sffamily\bfseries}                   
    {.}                          
    {.5em}                       
    {}  
\newtheoremstyle{myRem} 
    {\topsep}                    
    {\topsep}                    
    {}                   
    {}                           
    {\sffamily}                   
    {.}                          
    {.5em}                       
    {}  
\newtheoremstyle{myDef} 
    {\topsep}                    
    {\topsep}                    
    {}                   
    {}                           
    {\sffamily\bfseries}                   
    {.}                          
    {.5em}                       
    {}  
\theoremstyle{myThm}
\newtheorem{theorem}{Theorem}[section]
\newtheorem{lemma}[theorem]{Lemma}
\newtheorem{proposition}[theorem]{Proposition}
\newtheorem{corollary}[theorem]{Corollary}
\newtheorem{assumption}[theorem]{Assumption}
\theoremstyle{myRem}
\newtheorem{remark}[theorem]{Remark}
\theoremstyle{myDef}
\let\originalleft\left
\let\originalright\right
\renewcommand{\left}{\mathopen{}\mathclose\bgroup\originalleft}
\renewcommand{\right}{\aftergroup\egroup\originalright}
\newcommand{\eps}{\varepsilon}
\providecommand{\mathbbm}{\mathbb} 
\newcommand{\R}{\mathbbm{R}}
\newcommand{\OO}{\mathcal{O}}
\renewcommand{\L}{\mathcal{L}}
\definecolor{mygreen}{rgb}{0.1,0.75,0.2}
\newcommand{\nc}{\normalcolor}
\newcommand{\X}{\mathbb{X}}
\newcommand{\M}{\mathcal{M}}
\newcommand{\Nc}{\mathcal{N}}
\newcommand{\LT}{\text{Lip}(\tau)}
\newcommand{\LK}{\text{Lip}(\kappa)}
\title{The SPDE Approach to Mat\'ern Fields: Graph Representations} 
\author{Daniel Sanz-Alonso and Ruiyi Yang}
\date{University of Chicago}
\makeatletter\@addtoreset{section}{part}\makeatother%
\numberwithin{equation}{section}
\newcommand{\upperRomannumeral}[1]{\uppercase\expandafter{\romannumeral#1}}
\begin{document}
\maketitle 

\begin{abstract}

This paper investigates Gaussian Markov random field approximations to nonstationary Gaussian fields using graph representations of stochastic partial differential equations. 
We establish approximation error guarantees building on the theory of spectral convergence of graph Laplacians. The proposed graph representations provide a generalization of the Mat\'ern model to unstructured point clouds, and facilitate inference and sampling using linear algebra methods for sparse matrices. In addition, they bridge and unify several models in Bayesian inverse problems, spatial statistics and graph-based machine learning. We demonstrate through examples in these three disciplines that the unity revealed by graph representations facilitates the exchange of ideas across them. 
\end{abstract}

\section{Introduction}

The stochastic partial differential equation (SPDE) approach to Gaussian fields (GFs) has been one of the key developments in spatial statistics over the last decade \cite{lindgren2011explicit}.  
The main idea is to represent GFs as finite element solutions to SPDEs, reducing the computational cost of inference and sampling by invoking a Gaussian Markov random field (GMRF) approximation \cite{rue2005gaussian}. This paper investigates \emph{graph representations} of stationary and nonstationary Mat\'ern fields following the SPDE perspective, contributing to and unifying the extant theoretical, computational and methodological literature on GFs in Bayesian inverse problems, spatial statistics and graph-based machine learning.  We demonstrate through transparent mathematical reasoning that, under a manifold assumption, graph representations give GMRF approximations to the Mat\'ern model with error guarantees. In addition, we show that graph representations \emph{generalize} the Mat\'ern model to unstructured point clouds and graphs where existing finite element representations are not applicable. 


Recall that a random function $u(x),$  $x\in\R^d,$ is a GF if all finite collections $\{u(x_i)\}_{i=1}^n$ have self-consistent multivariate Gaussian distributions \cite{abrahamsen1997review,adler2010geometry}. A GF can be specified using a mean function $\mu(\cdot)$ and a covariance function $c(\cdot, \cdot),$ so that the mean vector and covariance matrix of the finite dimensional distributions are $\{\mu(x_i)\} \in \R^n$ and $\Sigma = \{c(x_i,x_j)\} \in \R^{n\times n}.$
GFs are natural models for spatial, temporal and spatio-temporal data. They have desirable analytic properties, including an explicit normalizing constant and closed formulae when conditioning  on  Gaussian data. 
However, in practice GFs have two main caveats. First, it is crucial and non-trivial to find \emph{flexible} covariance functions with few but interpretable parameters that can be learned from data. Second, inference of these parameters from Gaussian data of size $n$ ---or sampling the field at $n$ locations--- involves factorizing a kernel matrix $\Sigma \in \R^{n\times n},$  leading to a $\OO (n^3)$  computational cost and $\OO(n^2)$ memory cost unless further structure is assumed or imposed on the covariance model. For this reason, many recent works have investigated novel ways to deal with large datasets, some of which are reviewed in \cite{heaton2019case}. 

The SPDE approach tackles the big $n$ problem by replacing the GF with a GMRF approximation. A GMRF is a discretely indexed GF $u_n(i)$, $i\in \{1, \ldots, n\},$  such that the full conditional distribution at each site $1\le i \le n$ depends only on a (small) set of neighbors $\partial i$ to site $i$. This conditional independence structure is fully encoded in the precision matrix $Q$ of the multivariate Gaussian distribution of $u_n\in \R^n$: it holds that $Q_{ij} \ne 0$ iff  $i \in \partial j.$  Computationally, the main advantage comes from using numerical linear algebra techniques and Markov chain Monte Carlo algorithms that exploit, respectively, the sparsity of $Q$ and the characterization of the GMRF in terms of its full conditionals. The speed-up can be dramatic, with a typical computational cost $\OO (n),$ $\OO (n^{3/2}),$ and $\OO (n^2)$ for GMRFs in time, space, and space-time in two spatial dimensions, see \cite{rue2005gaussian}. In addition to alleviating the computational burden of GF methods, the SPDE approach  also alleviates the modeling challenges by suggesting nonstationary generalizations of Mat\'ern fields and extensions beyond Euclidean settings.

In this paper we employ graph-based discretizations of SPDEs to represent stationary and nonstationary Mat\'ern models. With few exceptions e.g. \cite{garcia2018continuum,bertozzi2018uncertainty,trillos2017consistency,harlim2020kernel}, previous work stemming from the SPDE approach considered representations based on finite element or finite difference discretizations \cite{lindgren2011explicit,bolin2019rational,bolin2020numerical, bolin2018weak,roininen2019hyperpriors,wiens2020modeling,bolin2014spatial}. Graph representations provide a way to \emph{generalize} the Mat\'ern model to discrete and unstructured point clouds, and thus to settings of practical interest in statistics and machine learning where only similarity relationships between abstract features may be available. \nc  Moreover, in contrast to finite elements, graph representations require minimal pre-processing cost: there is no need to compute triangulations and finite element basis or to define ghost domains as in \cite{lindgren2011explicit}. This is an essential advantage when interpolating manifold data living in a high dimensional ambient space, particularly so when the underlying manifold or its dimension are unknown. Finally, a wide range of problems in Bayesian inversion, spatial statistics and graph-based machine learning can be formulated as latent Gaussian models, and using graph representations of Mat\'ern fields as priors allows us to unify and contribute to the exchange of ideas across these disciplines. 

A disadvantage of the graph-based approach is that error guarantees are weaker than for finite element or finite difference representations.  Our belief is that this  is due to the generality of the graph-based approach, and also to the underdevelopment of existing theory.   Here we provide an up-to-date perspective of spectral convergence of graph Laplacians which overviews and generalizes some of the recent literature  \cite{burago2015graph,trillos2019error,ruiyilocalregularization} and we further show how these results can be used to establish the convergence of GMRFs to GFs. We view graph representations as being complementary to, rather than a replacement for, finite element representations. If the underlying domain is known and a suitable mesh can be obtained, then finite element representations would be recommended on the grounds of better error guarantees and sparsity, see Subsection \ref{sec:unstruct}. However, graph-based methods are more broadly applicable, and in particular  generalize the Mat\'ern model to unstructured point clouds as will be demonstrated in our numerical examples in Subsections \ref{sec:SS} and \ref{sec:ML}.

\subsection{Literature Review}
The ubiquity of GFs in statistics, applied mathematics and machine learning has led, unsurprisingly, to the reinvention and relabeling of many algorithms and ideas. GFs play a central role in spatial statistics \cite{gelfand2010handbook,heaton2019case}, especially in the subfield of geostatistics \cite{stein2012interpolation}, where they are used to interpolate data in a procedure called  \emph{kriging} and as a building block of modern hierarchical spatial models \cite{banerjee2014hierarchical}. In machine learning, GFs are called Gaussian processes and kriging is known as Gaussian process regression \cite{williams2006gaussian}. Gaussian processes are one of the main tools in Bayesian non-parametric inference \cite{williams1996gaussian,van2008rates,trillos2017consistency} and are an alternative to neural networks for supervised and semi-supervised regression \cite{mackay1997gaussian,trillos2017consistency}. They are also related to, or used within, other machine learning algorithms including splines, support vector machines and Bayesian optimization \cite{sollich2002bayesian,seeger2000relationships,brochu2010tutorial,frazier2018tutorial}.
GFs are standard prior models for statistical Bayesian inverse problems \cite{kaipio2006statistical,calvetti2007introduction,AS10,sanzstuarttaeb} with applications in medical imaging, remote sensing and ground prospecting \cite{bardsley2013gaussian,dunlop2017hierarchical,somersalo1992existence,dunlop2016bayesian,trillos2016bayesian}. Within Bayesian inversion, GFs are also employed as surrogates for the likelihood function \cite{stuart2018posterior}.  GFs have found numerous applications, allowing for \emph{uncertainty quantification} \cite{sullivan2015introduction} in astrophysics \cite{bardeen1985statistics},  biology \cite{taylor2007detecting,stathopoulos2014bat}, calibration of computer models \cite{kennedy2001bayesian,martin2005use}, data-driven learning of partial differential equations  \cite{raissi2018numerical,raissi2017machine}, geophysics \cite{isaac2015scalable,bui2013computational}, hydrology  \cite{sanchez2006representative}, image processing and medical imaging \cite{cohen1991classification,somersalo1992existence,roininen2014whittle}, meteorology \cite{bolin2011spatial,lindgren2011explicit} and probabilistic numerics \cite{hennig2015probabilistic,kersting2016active}, among others.

The emphasis of this paper is on  Mat\'ern models \cite{matern2013spatial} and generalizations thereof. Mat\'ern models are widely used in spatial statistics \cite{stein2012interpolation,gelfand2010handbook}, machine learning \cite{williams2006gaussian} and uncertainty quantification \cite{sullivan2015introduction}, with applications in various scientific fields \cite{guttorp2006studies,cameletti2013spatio}. The SPDE approach to construct GMRF approximations to GFs was proposed in the seminal paper \cite{lindgren2011explicit} and was further popularized through the software R-INLA \cite{bakka2018spatial}. GMRFs in statistics were pioneered by Besag \cite{besag1974spatial,besag1975statistical} and their computational benefits and applications are overviewed in the monograph \cite{rue2005gaussian}. 

In an independent line of work, the desire to define positive semi-definite kernels using only similarity relationships between features motivated the introduction of diffusion kernels \cite{kondor2002diffusion},  which can be interpreted as limiting cases of Mat\'ern models. The main idea underlying the construction of diffusion kernels is to exploit that graph Laplacians \cite{chung1997spectral,von2007tutorial} and their powers satisfy the positive semi-definiteness requirement. This observation has permeated the construction of graph-based regularizations in manifold learning and machine learning applications, as well as the design of model reduction techniques e.g. 
\cite{zhu1965semi,ng2018bayesian,li2018graph,liu2014bayesian,belkin2004regularization,belkin2005towards,belkin2004semi}.
Our work aims to demonstrate that a wide family of graph-based kernels in machine learning may be interpreted, in a rigorous way, as discrete approximations of standard GF models in spatial statistics. 

Large sample limits of graph Laplacians have been widely studied. Most results concern either pointwise convergence \cite{hein2005graphs,belkin2005towards,GK,Hei2006,singer06,THJ} or variational and spectral convergence \cite{belkin2007convergence,SinWu13,tao2020convergence,burago2015graph,ruiyilocalregularization},  with \cite{calder2019improved} reconciling both perspectives to  obtain improved rates. This paper builds on and generalizes  spectral convergence theory ---that is, convergence of eigenvalues and eigenfunctions of the graph-based operators to those defined in the continuum--- to study GMRF approximations of GFs. Unsurprisingly, we shall see that optimal transport ideas are key to linking discrete and continuum objects.

\subsection{Main Contributions and Outline}
Further to providing a unified narrative of existing literature, this paper contains some original contributions. We introduce GMRF approximations of nonstationary GFs defined on manifolds through graph representations of the corresponding SPDEs and generalize the constructions to arbitrary point clouds.   Our  main theoretical result, Theorem \ref{thm:Rate}, covers nonstationary models and, to our knowledge, is the first to give rates of convergence of graph-based representations of GFs.
 We also demonstrate through numerical examples that the mathematical unity that comes from viewing graph-based methods as discretizations of continuum ones facilitates the transfer of methodology and theory across Bayesian inverse problems, spatial statistics and graph-based machine learning.   In particular, we introduce nonstationary models for graph-based classification problems, which to our best knowledge has not been considered before, and empirically observe an improvement of performance that deserves further research.

This paper is organized as follows. 
Section \ref{sec:maternGFs} introduces the SPDE formulation of the Mat\'ern model and extends it to incorporate nonstationarity. 
Section \ref{sec:graphdiscretizations} introduces the graph-based approach and constructs graph approximations of the Mat\'ern fields. 
Section \ref{sec:theory} presents the main result on the convergence of the graph Mat\'ern model towards its continuum counterpart and discusses the ideas of the proof.
Section \ref{sec:numericalexamples} illustrates the application of graph Mat\'ern models in  Bayesian inverse problems,  spatial statistics and graph-based machine learning. 
Section \ref{sec:conclusions and Open Directions} discusses further research directions. Our aim is to provide a digestible narrative and for this reason we postpone all proofs and most of the technical material to an appendix. 

We close this section by introducing some notation. The symbol $\lesssim$ will denote less than or equal to up to a universal constant. For two real sequences $\{a_n\}$ and $\{b_n\}$, we denote (i) $a_n\ll b_n$ if $\operatorname{lim}_n (b_n/a_n)=0$; (ii) $a_n=O(b_n)$ if $\operatorname{lim\, sup}_n (b_n/a_n)\leq C$ for some positive constant $C$; and  (iii) $a_n\asymp b_n$ if  $c_1\leq \operatorname{lim\,inf}_n (a_n/b_n) \leq \operatorname{lim\,sup}_n (a_n/b_n) \leq c_2$ for some positive constants $c_1,c_2$.

\section{Mat\'ern Models and the SPDE Approach}\label{sec:maternGFs}
In this section we provide some background on GFs and the SPDE approach. We introduce the Mat\'ern family in Subsection \ref{ssec:stationary} and a nonstationary generalization in Subsection \ref{sec:nonstationary}. 
All fields will be assumed to be centered and we focus our attention on their covariance structure.

\subsection{Stationary Mat\'ern Models}\label{ssec:stationary}
Recall that a GF in $\R^m$ belongs to the Mat\'ern class if its covariance function can be written in the form 
\begin{align}\label{eq:materncovaria}
    c_{\sigma,\nu,\ell}(x,y) = \sigma^2\frac{2^{1-\nu}}{\Gamma(\nu)} \left(\frac{|x-y|}{\ell}\right)^{\nu}K_{\nu}\left(\frac{|x-y|}{\ell}\right),  \quad \quad x,y \in \R^m,
\end{align}
where $|\cdot |$ is the Euclidean distance in $\R^m,$ $\Gamma$ denotes the Gamma function and $K_{\nu}$ denotes the modified Bessel function of the second kind. 
The parameters $\sigma, \nu$ and $\ell$ control, respectively, the marginal variance (magnitude), regularity and correlation length scale of the field. While being defined in terms of three interpretable parameters, the modeling flexibility afforded by the Mat\'ern covariance \eqref{eq:materncovaria} is limited by its \emph{stationarity} (the value of the covariance function depends only on the difference between its arguments) and \emph{isotropy} (it depends only on their Euclidean distance).

An important characterization by Whittle \cite{whittle1954stationary, whittle1963stochastic} is that Mat\'ern fields can be defined as the solution to certain fractional order stochastic partial differential equation (SPDE). 
Precisely, setting
$\tau:= \ell^{-2}, s:=\nu+\frac{m}{2}$,
a Gaussian field with covariance function \eqref{eq:materncovaria} is the unique stationary solution to the SPDE
\begin{align} \label{eq:SPDE}
    (\tau I- \Delta)^{\frac{s}{2}} u(x) =\mathcal{W}(x),  \quad \quad x \in \R^m,
\end{align}
where the marginal variance of $u$ is 
\begin{align}
    \sigma^2=\frac{\Gamma(s-\frac{m}{2})}{(4\pi)^{\frac{m}{2}}\Gamma(s)\tau^{s-\frac{m}{2} }}\,. \label{eq:marginalvariance}
\end{align}
Throughout this paper, fractional power operators such as $(\tau I- \Delta)^{\frac{s}{2}} $ will be defined spectrally \cite{lischke2020fractional} and $\mathcal{W}$ denotes spatial Gaussian white noise with unit variance. 

As discussed in \cite{lindgren2011explicit}, the SPDE formulation of Mat\'ern GFs has several advantages. First, it allows to approximate the solution to \eqref{eq:SPDE} by a GMRF and thereby to  reduce the computational cost of inference and sampling \cite{rue2005gaussian,simpson2012think}.  Second, it suggests natural nonstationary and anisotropic generalizations of the Mat\'ern model by letting $\tau$ depend on the spatial variable \cite{roininen2019hyperpriors}
or by replacing the Laplacian with an elliptic operator with spatially varying coefficients \cite{fuglstad2015exploring, fuglstad2015does}. Third, it allows to \emph{define}  Mat\'ern models in manifolds and in bounded spatial, temporal and spatio-temporal domains  by using modifications of the SPDE \eqref{eq:SPDE}, possibly supplemented with appropriate boundary conditions \cite{khristenko2019analysis}.  In order to gain theoretical understanding, in subsequent sections we will work under a manifold assumption and analyze the convergence of graph representations of Mat\'ern fields defined on manifolds. This setting is motivated by manifold learning theory \cite{belkin2004regularization,ruiyilocalregularization} and will allow us to build on the rich literature on GFs on manifolds \cite{adler2010geometry}.

In more mathematical terms, the SPDE characterization shifts attention from the covariance function (or spectral density) description of Gaussian measures to the covariance (or precision) operator description \cite{bogachev1998gaussian}: keeping only the $\tau$ term in the marginal variance given by equation  \eqref{eq:marginalvariance}, we see that the law of the field $u(x)$ defined by equation \eqref{eq:SPDE} is ---up to a scaling factor independent of $\tau$ that we drop in what follows--- the Gaussian measure 
\begin{align}\label{eq:stationaryGMcontinuum}
   \mathcal{N}(0,\mathcal{C}), \quad  \mathcal{C}=\tau^{s-\frac{m}{2}} (\tau I -\Delta)^{-s}, 
\end{align}
where the factor $\tau^{s-\frac{m}{2}}$ can be interpreted as a normalizing constant. This observation motivates our definition of the nonstationary Mat\'ern field in Subsection \ref{sec:nonstationary}, which facilitates the theory.

\subsection{Nonstationary Mat\'ern Models}  \label{sec:nonstationary}
In this subsection we introduce a family of  nonstationary Mat\'ern fields by modifying the SPDE \eqref{eq:SPDE}. We consider a manifold setting which does not hinder the understanding of the modeling and will later allow us to frame the analysis in a concrete setting of applied significance. To that end, we let $\M$ be an $m$-dimensional smooth, connected, compact Riemannian  manifold without boundary that is embedded in $\mathbb{R}^d$. We will let $\tau$ depend on the spatial variable and replace the Laplacian by an elliptic operator $\nabla \cdot ( \kappa(x) \nabla)$, where differentiation is defined on $\M$. Formally,  we consider the  SPDE
\begin{align} \label{eq:SPDEnoncont}
    \Big[\tau(x) I-\nabla \cdot ( \kappa(x) \nabla)\Big]^{\frac{s}{2}} u(x) = \mathcal{W}(x), \quad \quad x \in \M,
\end{align}
where $\mathcal{W}$ is a spatial Gaussian white noise with unit variance on $\M$.
The additional $\kappa$ acts as a change of coordinate $\tilde{x}=\sqrt{\kappa(x)}x$ and introduces a factor of $\kappa(x)^{-\frac{m}{2}}$ for the marginal variance, whence the field $u(x)$ in equation \eqref{eq:SPDEnoncont} has marginal variance proportional to $\tau(x)^{\frac{m}{2}-s}\kappa(x)^{-\frac{m}{2}}$ at each location. If $\M = \R^d$, the solution $u$ to \eqref{eq:SPDEnoncont} defines a nonstationary field, and in analogy we will use the term nonstationary for fields defined by the SPDE \eqref{eq:SPDEnoncont}, or approximations thereof, in manifold and more abstract settings. In such settings, stationarity or ``shift-invariance'' is not well-defined without introducing an algebraic action, and nonstationarity should be understood as nonhomogeneity.

Following again the covariance operator viewpoint, we formally consider the Gaussian measure $\mathcal{N}(0,[\L^{\tau,\kappa}]^{-s})$, where $\L^{\tau,\kappa}:= \tau I-\nabla \cdot ( \kappa \nabla)$, with a proper normalization to be made precise below. Assuming sufficient regularity, $\L^{\tau,\kappa}$ is self-adjoint with respect to the $L^2(
\M)$ inner product and admits a spectral decomposition.
Therefore we shall \emph{define} our nonstationary Mat\'ern field through the following Karhunen-Lo\'eve expansion,
\begin{align}\label{eq:KLcontinuumNonstationary}
    u(x)
    &:= \tau(x)^{\frac{s}{2}-\frac{m}{4}}\kappa(x)^{\frac{m}{4}} \sum_{i=1}^{\infty} \left[\lambda^{(i)}\right]^{-\frac{s}{2}} \xi^{(i)} \psi^{(i)}(x),
\end{align}
where $\{\xi^{(i)}\}_{i=1}^{\infty}$ is a sequence of independent standard normal random variables and $\{(\lambda^{(i)},\psi^{(i)})\}_{i=1}^{\infty}$ are the eigenpairs of $\L^{\tau,\kappa}:= \tau I-\nabla \cdot ( \kappa \nabla) $. The factor $\tau(x)^{\frac{s}{2}-\frac{m}{4}}\kappa(x)^{\frac{m}{4}}$ serves as a normalizing constant for the marginal variance at each point.  For the theory outlined in Section \ref{sec:theory} we will assume that $\tau$ is Lipschitz, $\kappa$ is continuously differentiable and both are bounded from below by a positive constant,  whence Weyl's law \cite{canzani2013analysis}[Theorem 72] implies that $\lambda^{(i)}\asymp i^{\frac{2}{m}}$. Therefore by setting $s>\frac{m}{2}$, we have $\mathbb{E}\|u\|^2_{L^2(\M)}<\infty$ and the series \eqref{eq:KLcontinuumNonstationary} converges in $L^2(\M)$ almost surely. The idea of viewing the functions $\tau$ or $\kappa$ as hyperparameters and learning them from data has been investigated in \cite{fuglstad2015exploring,roininen2019hyperpriors,monterrubio2020posterior,fuglstad2015does,wiens2020modeling} and has motivated the need to penalize the complexity of priors \cite{fuglstad2019constructing}. We note that other approaches to introduce nonstationarity that do not stem directly from the SPDE formulation have been considered in the literature (e.g. \cite{anderes2008estimating,gramacy2008bayesian,kim2005analyzing,montagna2016computer,sampson2001advances}).

\begin{remark} \label{rmk:normalization}
The normalizing factors $\tau(x)^{\frac{s}{2}-\frac{m}{4}}\kappa(x)^{\frac{m}{4}}$ are crucial for hierarchical models in that they balance the marginal variances at different locations. To gain more intuition on the powers, consider the case where both $\tau$ and $\kappa$ are constant. Weyl's law then implies that  the eigenvalues of $\L^{\tau,\kappa}$ satisfy 
 $$\lambda^{(i)} \asymp \tau+ C\kappa  i^{\frac{2}{m}},$$
and therefore
\begin{align*}
	\mathbb{E}\left\|\sum_{i=1}^{\infty} \left[\lambda^{(i)}\right]^{-\frac{s}{2}}\xi^{(i)} \psi^{(i)}\right\|^2_{L^2(\M)} 
	& \asymp \sum_{i=1}^{\infty} \left[\tau +C\kappa i^{\frac{2}{m}} \right]^{-s} \\
	& \asymp \sum_{i: \tau \gtrsim \kappa i^{2/m}} \tau^{-s} + \sum_{i: \tau \lesssim \kappa i^{2/m}} \kappa^{-s}  i^{-\frac{2s}{m}} \\
	& \asymp \tau^{-s} \left(\frac{\tau}{\kappa}\right)^{\frac{m}{2}} + \kappa^{-s} \int_{\left(\frac{\tau}{\kappa}\right)^{\frac{m}{2}}} ^{\infty}  x^{-\frac{2s}{m}} \\
	& \asymp \tau^{\frac{m}{2}-s} \kappa^{-\frac{m}{2}}.
\end{align*}
We thus see that the normalizing factor above balances the expected norm. 
 \qed
\end{remark}

\begin{remark} \label{rmk:tau&kappa}
Both parameters $\tau$ and $\kappa$ control the local length scales of the sample paths. To see this,  note that  when both $\tau$ and $\kappa$ are constant \eqref{eq:KLcontinuumNonstationary} simplifies to 
\begin{align*}
    u=\left(\frac{\tau}{\kappa}\right)^{\frac{s}{2}-\frac{m}{4}}\sum_{i=1}^{\infty}\left[\frac{\tau}{\kappa}+\lambda^{(i)}\right]^{-\frac{s}{2}} \xi^{(i)}\psi^{(i)},
\end{align*}
where $\{(\lambda^{(i)},\psi^{(i)})\}_{i=1}^{\infty}$ are eigenpairs of $-\Delta$. Therefore $\frac{\tau}{\kappa}$ acts as a threshold on the essential frequencies of the samples, where those frequencies with corresponding eigenvalue on the same order of $\frac{\tau}{\kappa}$ have effective contributions. Hence a large $\tau$ (or a smaller $\kappa$) incorporates higher frequencies and gives sample paths with small length scale. Their opposite role in controlling the local length scale can be seen in Figure \ref{figure:tau&kappa},  which represent two random draws from Gaussian fields defined on the unit circle with different choices of $\kappa$ and $\tau$. \qed 
\end{remark}
\nc

\begin{figure}[!htb]
\minipage{1\textwidth}
\centering
\minipage{0.4\textwidth}
  \includegraphics[width=\linewidth]{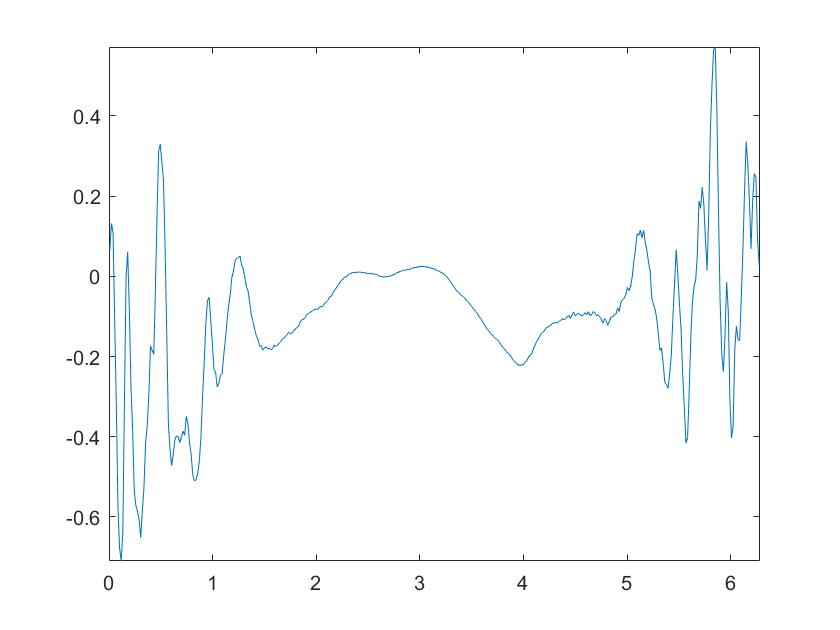}
\label{fig:awesome_image1}
\vspace{-25pt}\subcaption{}
\endminipage
\minipage{0.4\textwidth}
  \includegraphics[width=\linewidth]{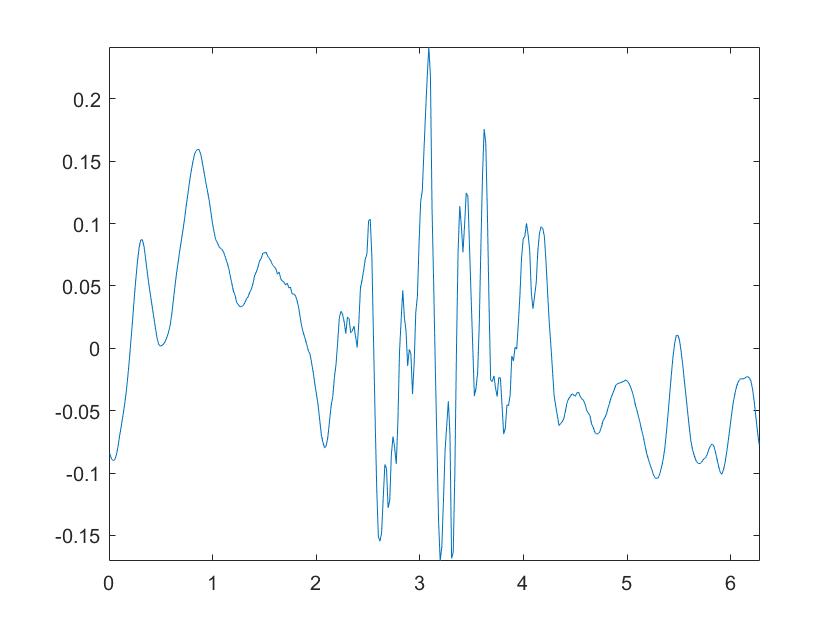}
\label{fig:awesome_image2}
\vspace{-25pt}\subcaption{}
\endminipage
\endminipage
\vspace{-10pt}
\caption{Random draws from nonstationary GFs on the unit circle; (a). $\kappa = 0.01; \tau = \exp \bigl(\cos(x) \bigr)$;  (b). $\kappa = \exp\bigl(\cos(x)\bigr), \tau = 100$.} 
\label{figure:tau&kappa}
\end{figure}

\section{GMRF Approximation with Graph Representations of SPDEs}\label{sec:graphdiscretizations}
In this section we study GMRF approximations of the Mat\'ern models introduced in Section \ref{sec:maternGFs}. Since the work \cite{lindgren2011explicit}, a burgeoning literature has been devoted to linking GFs and GMRFs, doing the modeling with the former and computations with the latter \cite{bakka2018spatial}. The main idea of \cite{lindgren2011explicit} is to introduce a stochastic weak formulation of the SPDE \eqref{eq:SPDE}: 
\begin{align*}
    \langle (\tau I -\Delta)^{\frac{s}{2}}u,\phi_i \rangle_{L^2}\overset{\mathcal{D}}{=}\langle \mathcal{W},\phi_i\rangle_{L^2},  \quad i=1,\ldots,M,
\end{align*}
where $\{\phi_i\}_{i=1}^M$ is a set of test functions and $\overset{\mathcal{D}}{=}$ denotes equal in distribution. 
Then one constructs a finite element (FEM) \emph{representation} of the solution 
$$u(x) = \sum_{k=1}^n w_{k} \varphi_{k}(x),$$
where $n$ is the number of vertices in the triangulation, $\{\varphi_{k}\}$ are interpolating piecewise linear hat functions and $\{w_{k}\}$ are Gaussian distributed weights. Importantly, these finite dimensional representations allow to obtain a GMRF precision matrix with computational cost $\OO(n)$. The convergence of the FEM representation to the GF has been studied in \cite{lindgren2011explicit} and in more generality in  \cite{bolin2019rational,bolin2020numerical, bolin2018weak}.

The FEM representation requires triangulation of the domain, possibly adding artificial nodes to obtain a suitable mesh, and in practice it is rarely implementable in dimension higher than 3. However for many applications e.g. in machine learning, interest lies in interpolating or classifying input data in high dimensional ambient space with moderate but unknown intrinsic dimension, making FEM representations of GFs impractical. Graph Laplacians, discussed next, provide a canonical way to construct GMRF approximations in the given point cloud.

\subsection{Graph Mat\'ern Models}\label{ssec:mainconstruction}
Let $\X = \{ x_1, \ldots, x_n\}$ be a given point cloud, over which we put a graph structure by considering a symmetric weight matrix $W\in \R^{n\times n}$ whose entries $W_{ij}\ge 0$ prescribe the closeness between points.  In applications including classification and regression, each $x_i$ will represent either a feature or an auxiliary point used to improve the accuracy of the GMRF approximations described in this subsection.  The graph structure encodes the geometry of the point cloud and can be exploited through the graph Laplacian. 

Several definitions of graph Laplacians co-exist in the literature. Defining $D := \text{diag}(d_1, \ldots, d_n)$ the degree matrix with $d_i := \sum_{j=1}^n W_{ij}$, three popular choices are unnormalized $\Delta_n^{\text {un}}: = D-W,$ symmetric $\Delta_n^{\text{sym}} := D^{-1/2} \Delta_n^{\text{un}} D^{-1/2}$ and random-walk $\Delta_n^{\text {rw}}:= D^{-1} \Delta_n^{\text{un}}$ graph Laplacians, see \cite{von2007tutorial}. To streamline the presentation, we use $\Delta_n\in \R^{n\times n}$ as placeholder for \emph{a} graph Laplacian with $n$ data points; its choice will be made explicit whenever it is relevant to the problem at hand.  

To gain some intuition, let us consider the unnormalized graph Laplacian, whose positive semi-definiteness is verified by the relation  
\begin{align}
	u_n^T(D-W) u_n = \frac{1}{2} \sum_{i=1}^n\sum_{j=1}^n W_{ij} |u_n(i)-u_n(j)|^2 \geq 0.  \label{eq:utLu}
\end{align}
Here $u_n = [u_n(1), \ldots, u_n(n)]^T \in \R^n$ is an arbitrary vector in $\R^n,$ interpreted as a function on $\X$ with the identification $u_n(x_i)  \equiv  u_n(i) .$
Note that $\Delta_n^{\text {un}} = D-W$  annihilates constant vectors (in agreement with the intuition that the Laplacian annihilates constant functions) and 0 is always an eigenvalue. For a fully connected graph, one can see that the eigenvalue 0 has multiplicity 1, with the constant vectors as its only eigenspace. If we consider $\mathcal{N}(0,\Delta_n^{-1})$ (with $\Delta_n^{-1}$ representing the Moore-Penrose inverse) as a degenerate Gaussian distribution in $\R^n$ with support on the orthogonal complement of the constant vectors, then \eqref{eq:utLu} is the negative log-density of this distribution (up to an additive constant), which suggests that functions  that take similar values on close nodes are favored, with closeness quantified by the weight matrix $W$. Moreover, it can be shown that the second eigenvector $\psi_n^{(2)}$ of $\Delta_n$ solves a relaxed graph cut problem \cite{von2007tutorial}, so that $\psi_n^{(2)}$ encodes crucial information about partition of the points $x_i$'s. Hence $\mathcal{N}(0,\Delta_n^{-1})$ naturally serves as a prior for clustering and classification \cite{bertozzi2018uncertainty}. Various choices of the weight matrix have been considered in the literature, including $\eps$-graphs and $k$-NN graphs,  which set $W_{ij}$ to be zero if $d(x_i,x_j)>\eps$ and if $x_i$ is not among the $k$-nearest neighbors of $x_j$ (or vice versa) respectively for some distance function $d$.  Both of them introduce sparsity in the weight matrix, which is inherited by the graph Laplacian. Under such circumstances, the graph Laplacian can be viewed as a sparse precision matrix, which gives rise to a GMRF.

\emph{ It is important to note that the preceding discussion makes no assumption on the points $x_i$ or how their closeness is defined.} Therefore, the graph-based viewpoint allows to \emph{generalize} the Mat\'ern model to  unstructured point clouds, and thus to settings of practical interest in statistics and machine learning where only similarity relationships between abstract features may be available. For instance, the points may represent books and their closeness may be based on a reader's perception of similarity between them.   However, an important example in which we will frame our theoretical investigations arises from making a \emph{manifold assumption}.
\begin{assumption}[Manifold Assumption]\label{manifoldassumption}
The points $x_i$ are independently sampled from the uniform distribution $\gamma$ on an $m$-dimensional smooth, connected, compact manifold $\M$ without boundary that is embedded in Euclidean space $\R^d$,  with bounded sectional curvature and Riemannian metric inherited from $\mathbb{R}^d$.   Assume further that $\M$ is normalized so that vol$(\M)$=1.
\end{assumption}

To emphasize the stronger structure imposed by the manifold assumption we denote the point cloud by  $\X \equiv \M_n=\{x_1,\ldots,x_n\} \subset \M.$ For many applications, the manifold assumption is an idealization of the fact that the point cloud has low dimensional structure despite living in a high dimensional ambient space, e.g., the MNIST dataset that we will study in Subsection \ref{sec:ML}.
From a theoretical viewpoint, Assumption \ref{manifoldassumption} allows us to establish  a precise link between graph Laplacians and their continuum counterparts, as we now describe heuristically. Define the weight matrix $W$ on $\M_n$   by  
\begin{align} \label{eq:weight}
	 W_{ij}& := \frac{2(m+2)}{n\nu_m h_n^{m+2}} \mathbf{1} \bigl\{|x_i-x_j|<h_n \bigr\},
\end{align}
where $| \cdot |$ is the Euclidean distance in $\R^d,$ $h_n$ is the graph connectivity and $\nu_m$ is the volume of the $m$-dimensional unit ball. Then the unnormalized graph Laplacian $\Delta_n^{\text {un}} $ is a discrete approximation of the Laplace-Beltrami operator $-\Delta$ on $\M$. 

Indeed, for a smooth function $f:\mathbb{R}^m \rightarrow \mathbb{R}$ we have by Taylor expansion  of $f$ around $X$
\begin{align*}
    \int_{B_{h_n}(X)} [f(Y)   - f(X)] dY &\approx \int_{B_{h_n}(X)} \nabla f(X)^T(Y-X) dY + \frac{1}{2} \int_{B_{h_n}(X)} (Y-X)^T \nabla^2 f(X) (Y-X) dY,
\end{align*}
where $B_{h_n}(X)$ is the Euclidean ball centered at $X$ with radius $h_n$. 
By symmetry, the first integral is zero and the second integral reduces to (after a change of variable $Z=Y-X$)
\begin{align*}
    \frac{1}{2} \sum_{i=1}^m \frac{\partial^2 f}{\partial X_i^2}(X)\int_{B_{h_n}(0)}  Z_i^2  dZ =  \frac{\nu_m h_n^{m+2}}{2(m+2)} \Delta f(X), 
\end{align*}
where $X_i$ and $Z_i$ represent the $i$-th coordinates of $X$ and $Z$. 
This gives 
\begin{align}
    -\Delta f(x_i) \approx  \frac{2(m+2)}{\nu_m h_n^{m+2}} \int_{B_{h_n}(x_i)} [f(x_i)-f(Y)]dY \approx \sum_{j=1}^n W_{ij} [f(x_i)-f(x_j)],  \label{eq:motivation} 
\end{align}
which is exactly the way $\Delta_n$ is defined. Since $\M$ is locally homeomorphic to $\mathbb{R}^m$ and the geodesic distance between any two points is well approximated by the Euclidean distance, the heuristic argument above can be formalized to show \emph{point-wise} convergence of $\Delta_n$ to $-\Delta$ in the manifold case. A rigorous result on \emph{spectral} convergence will be given in more generality in Section \ref{sec:theory}.

The previous discussion suggests to introduce the following discrete analog to the Gaussian measure \eqref{eq:stationaryGMcontinuum}
\begin{align*}
    \mathcal{N}(0,\mathcal{C}_n), \quad \mathcal{C}_n=\tau^{s-\frac{m}{2}}(\tau I_n+\Delta_n)^{-s}, 
\end{align*}
whose samples admit a Karhunen-Lo\`eve expansion
\begin{align} \label{eq:KLdiscreteStationary}
    u_n=\tau^{\frac{s}{2}-\frac{m}{4}}\sum_{i=1}^n \left[\tau+\lambda_n^{(i)}\right]^{-\frac{s}{2}} \xi^{(i)} \psi_n^{(i)},
\end{align}
where $\{\xi^{(i)}\}_{i=1}^n$ are independent standard normal random variables and $\bigl\{(\lambda_n^{(i)},\psi_n^{(i)}) \bigr\}_{i=1}^n$ are eigenpairs of $\Delta_n$. This will be our definition of the stationary graph Mat\'ern field. 

\begin{remark}
We note once again that the model 
\eqref{eq:KLdiscreteStationary} can be used in wide generality: it only presupposes that the practitioner is given a weight matrix associated with an abstract point cloud, and it only requires to specify a graph-Laplacian. We will show that \eqref{eq:KLdiscreteStationary} generalizes the stationary Mat\'ern model in the sense that if the point cloud is sampled from a manifold, the weights are defined through an appropriate $\eps$-graph, and an unnormalized graph-Laplacian is used, then the graph-based model approximates the Mat\'ern model on the manifold. Similar convergence results could be established with $k$-nearest neighbor graphs and other choices of graph Laplacian. Our numerical examples in Section \ref{sec:numericalexamples} will illustrate the application of graph-based Mat\'ern models both in manifold and abstract settings, and using a variety of graph Laplacians.  \qed
\end{remark}

\begin{remark} \label{rmk:density}
The above construction can be adapted when the points $x_i$ are distributed according to a Lipschitz density $q$ that is bounded below and above by positive constants. In this case, \eqref{eq:motivation} should take the form
\begin{align*}
	-\Delta f(x_i) \approx \sum_{j=1}^n W_{ij} [f(x_i)-f(x_j)] q(x_j)^{-1} \approx \frac12\sum_{j=1}^n W_{ij} [f(x_i)-f(x_j)] [q(x_i)^{-1}+q(x_j)^{-1}],
\end{align*}
where the last step follows from the Lipschitzness of $q$ and the fact that $q^{-1}$ is bounded away from 0 and is needed to ensure symmetry of the new weights.  Setting $f=q$ in \eqref{eq:motivation} we have 
\begin{align*}
	q(x_i) \approx \frac{1}{\nu_mh_n^m} \int_{B_{h_n}(x_i)} q(y) dy -h_n^2 \Delta q(x_i) \approx  \frac{1}{n\nu_mh_n^m} \sum_{j=1}^n \mathbf{1}\{|x_i-x_j|<h_n\}:=q_{h_n}(x_i)  
\end{align*}
where we have dropped $h_n^2\Delta q$ since it is of lower order. Hence the new weights should be adjusted as 
\begin{align*}
	W_{ij}= \frac{m+2}{n\nu_m h_n^{m+2}} \mathbf{1}\{|x_i-x_j|<h_n\} [q_{h_n}(x_i)^{-1}+q_{h_n}(x_j)^{-1}] .
\end{align*} \qed
\end{remark}

\subsection{Nonstationary Graph Mat\'ern Models}\label{ssec:modeling}
Now we are ready to construct nonstationary graph Mat\'ern fields that approximate the nonstationary  Mat\'ern field in Section \ref{sec:nonstationary}. In analogy with the previous subsection, the crucial step is to obtain a graph discretization of the operator \ $\L^{\tau,\kappa}=\tau I - \nabla \cdot (\kappa \nabla)$ with spatially varying $\tau$ and $\kappa$. Notice that we have 
\begin{align*}
    \nabla \cdot (\kappa\nabla f) = \sqrt{\kappa} [\Delta(\sqrt{\kappa}f)-f\Delta\sqrt{\kappa}]. 
\end{align*}
Applying \eqref{eq:motivation} to $\Delta(\sqrt{\kappa}f)$ and $f\Delta\sqrt{\kappa}$ gives
\begin{align*}
    -\nabla\cdot(\kappa\nabla f) \approx \int_{B_{h_n}(x)} \sqrt{\kappa(x)\kappa(y)} \, [f(x)-f(y)] \approx \sum_{i=1}^n W_{ij} \sqrt{\kappa(x_i)\kappa(x_j)} \, [f(x_i)-f(x_j)].
\end{align*}
Hence  $-\nabla\cdot(\kappa\nabla\cdot)$ can be approximated by $\Delta_n^{\kappa}=\tilde{D}-\tilde{W}$, where 
\begin{align}
    \tilde{W}_{ij}&= W_{ij} \sqrt{\kappa(x_i)\kappa(x_j)}=\frac{2(m+2)}{n\nu_m  h_n^{m+2}} \mathbf{1} \bigl\{|x_i-x_j|< h_n  \bigr\} \sqrt{\kappa(x_i)\kappa(x_j)} \label{eq:defweights} \, ,\\
    \tilde{D}_{ii}&=\sum_{j=1}^n \tilde{W}_{ij}.
\end{align}
Denoting $\tau_n=\operatorname{diag} \bigl(\tau(x_1),\ldots,\tau(x_n)\bigr)$ and   $\kappa_n=\operatorname{diag}\bigl(\kappa(x_1),\ldots,\kappa(x_n)\bigr)$, we define ---similarly as in Subsection \ref{sec:nonstationary}--- the nonstationary graph Mat\'ern field through the Karhunen-Lo\'eve expansion 
\begin{align}\label{eq:KLdiscreteNonstationary}
    u_n:= \tau_n^{\frac{s}{2}-\frac{m}{4}}\kappa_n^{\frac{m}{4}}\sum_{i=1}^n \left[\lambda_n^{(i)}\right]^{-\frac{s}{2}} \xi^{(i)} \psi_n^{(i)},
\end{align}
where $\{\xi^{(i)}\}_{i=1}^n$ are independent standard normal random variables and $\bigl\{(\lambda_n^{(i)},\psi_n^{(i)}) \bigr\}_{i=1}^n$ are eigenpairs of $L^{\tau,\kappa}_n:=\tau_n+\Delta_n^{\kappa}$. Equation \eqref{eq:KLdiscreteNonstationary} is a natural finite dimensional approximation of \eqref{eq:KLcontinuumNonstationary} and one should expect that spectral convergence of $\Delta_n$ towards $-\Delta$ will translate into convergence of \eqref{eq:KLdiscreteNonstationary} towards \eqref{eq:KLcontinuumNonstationary} in the large $n$ limit. This will be rigorously shown in Section \ref{sec:theory}.

In the covariance operator view, $u_n$ follows a Gaussian distribution  $\Nc(0,C_ n^{\tau,\kappa})$ with
\begin{equation}\label{eq:covarianceanysotropicdiscrete}
  C_n^{\tau,\kappa} := \tau_n^{\frac{s}{2}-\frac{m}{4}}\kappa_n^{\frac{m}{4}} [L^{\tau,\kappa}_n]^{-s}\kappa_n^{\frac{m}{4}}\tau_n^{\frac{s}{2}-\frac{m}{4}}.
\end{equation}  
Therefore samples can be generated by solving 
\begin{align}
    [L_n^{\tau,\kappa}]^{\frac{s}{2}}u_n=\xi_n,\quad \quad \xi_n\sim\mathcal{N}(0,I_n) \label{eq:generate sample}
\end{align}
and then multiplying with the diagonal matrix $\tau_n^{s/2-m/4}\kappa_n^{m/4}$. For $0<s<2$, \eqref{eq:generate sample} can be solved with a sparse approximation as in \cite{harizanov2018optimal,bolin2019rational} and for $s\geq 2$ an iterative scheme can be employed. We remark that \eqref{eq:generate sample} can be solved exactly by performing a spectral decomposition of $L_n^{\tau,\kappa}$, which is computationally more expensive and not recommended for large $n$'s.  

\begin{remark} \label{rmk:graphnormalization}
As discussed in Remark \ref{rmk:tau&kappa}, both $\tau_n$ and $\kappa_n$ control the local length scale.  For many applications e.g. in machine learning, we shall focus on the modeling choice with $\tau_n$ only,  because the operator $\nabla \cdot (\kappa \nabla) $ is less motivated for general $x_i$'s  that do not come from a manifold. In such cases, one can construct a nonstationary Mat\'ern field similarly as above, by using a graph Laplacian built with the $x_i$'s, e.g. with a $k$-NN graph. Indeed, the only key step is to normalize properly the marginal variances, which are largely determined by the growth of the spectrum as in Remark \ref{rmk:normalization}. Hence one can find an integer $m$ so that the first several $\lambda_n^{(i)}$'s grow roughly as $i^{\frac{2}{m}}$ and use $m$ as an effective dimension of the problem for normalization. Moreover, both the $k$-NN and $\eps$-graphs result in sparsity in $\Delta_n$, and numerical linear algebra techniques for sparse systems can be employed to attain speed-up.  \qed
\end{remark}

\subsection{A Simulation Study}

In this subsection we perform a simulation study on the unit sphere to demonstrate the graph approximation of Mat\'ern fields and its sparsity. Let $\mathcal{M}$ be the two-dimensional unit sphere embedded in $\mathbb{R}^3$ and formally consider  the Mat\'ern model specified by the SPDE 
\begin{align*}
    (I-\Delta)^{-\frac{s}{2}} u(x) =  \mathcal{W}(x),
\end{align*}
where $\Delta$ is the Laplace-Beltrami operator on $\mathcal{M}$ and $\mathcal{W}$ is spatial white noise with unit variance. More precisely, we consider the Mat\'ern field defined  by  the  Karhunen-Lo\`eve expansion 

\begin{align*}
    u(x)=\sum_{i=1}^{\infty}\left[1+\lambda^{(i)}\right]^{-\frac{s}{2}}\xi^{(i)} \psi^{(i)}, \quad \xi^{(i)}\overset{i.i.d.}{\sim} \mathcal{N}(0,1),
\end{align*}
where $\{(\lambda^{(i)},\psi^{(i)})\}_{i=1}^{\infty}$ are the eigenpairs of $-\Delta$. It is known that the eigenvalues are $\ell(\ell-1)$ with multiplicity $2\ell-1$ for $\ell=1,2,\ldots,$ and the eigenfunctions are the spherical harmonics. The covariance function associated with this field is 
\begin{align}
    c(x,y)=\sum_{i=1}^{\infty}\left[1+\lambda^{(i)}\right]^{-s}\psi^{(i)}(x)\psi^{(i)}(y). \label{eq:covariance}
\end{align}
We shall investigate the approximation of this covariance function by the covariance function of a graph Mat\'ern field. In Subsection \ref{sec:unstruct} we consider the case where only unstructured samples from the sphere are available to demonstrate the generality of the graph-based method and in Subsection \ref{sec:struct} we restrict ourselves to triangulations of the sphere for comparison with the FEM-based approximation. 
\subsubsection{Unstructured Grids} \label{sec:unstruct}
In this subsection we consider ``pseudo-unstructured'' point clouds generated as follows. The idea is to parametize points on the sphere in polar coordinates $(\theta,\phi)\in [0,\pi]\times [0,2\pi]$. so that the uniform distribution on the sphere can be generated with the formula
\begin{align*}
    \theta=\arccos(1-2U), \quad \phi=2\pi V,
\end{align*}
where $U,V$ are independent unif$(0,1)$ random variables. Now instead of generating $n$ i.i.d. pairs of $(U,V)$, we will partition the domain $[0,\pi]\times [0,2\pi]$ uniformly into subgirds of size $\frac{\pi}{M}$ by $\frac{2\pi}{M}$ for an integer $M$ and then pick one point from each subgrid randomly and uniformly. The reason is that due to the rotational symmetry of the spherical harmonics, the computed graph eigenfunctions may be out-of-phase versions of the true eigenfunctions and hence we need some structure from the point cloud in order to make them well aligned. Therefore the generated point cloud is only close to being unstructured.

Let $\{x_i\}_{i=1}^{n=M^2}$ denote the generated point cloud. We construct an $\eps$-graph over the $x_i$'s by setting the weights as in \eqref{eq:weight}. Precisely, we define
\begin{align}
    W_{ij}=\frac{2(m+2)\text{vol}(\M)}{n\nu_mh^{m+2}}\mathbf{1}\{|x_i-x_j|<h_n\}=\frac{32}{nh_n^4}\mathbf{1}\{|x_i-x_j|<h_n\},\label{eq:weight sphere}
\end{align}
where $m=2$ in this case and the additional factor vol$(\M)$ is needed to account for the fact that vol$(\M)\neq 1$.  
Let $\Delta_n=D-W$ be the unnormalized graph Laplacian. Then \eqref{eq:covariance} is approximated by 
\begin{align}
    c_n(x_j,x_k)=\sum_{i=1}^n \left[1+\lambda^{(i)}_{n}\right]^{-s} \psi_{n}^{(i)}(x_j)\psi_{n}^{(i)}(x_k), \label{eq:approx covariance}
\end{align}
where the $\psi^{(i)}_{n}$'s are suitably normalized eigenfunctions of $\Delta_n$. 

\begin{figure}[!htb]
\minipage{1\textwidth}
\centering
\minipage{0.249\textwidth}
  \includegraphics[width=\linewidth]{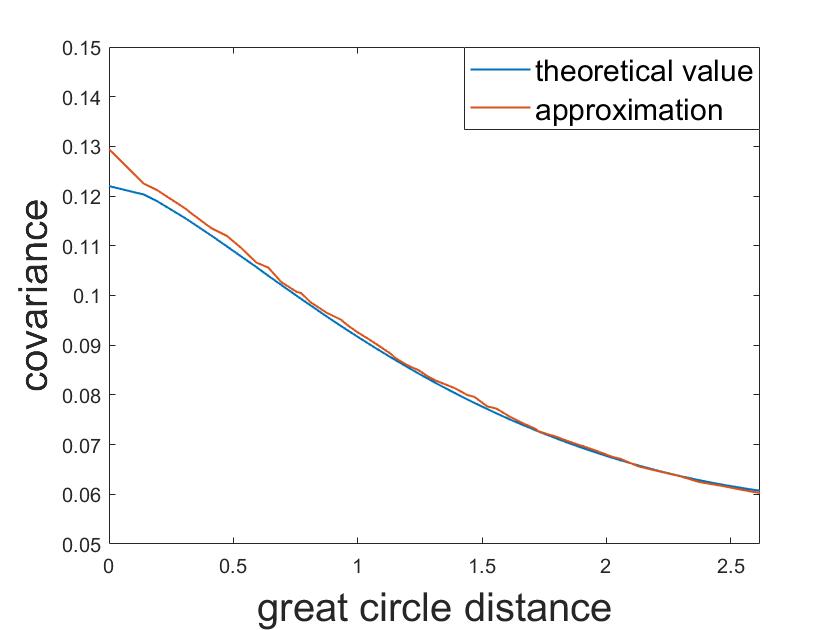}
\label{fig:covariance_full}
\endminipage
\minipage{0.249\textwidth}
  \includegraphics[width=\linewidth]{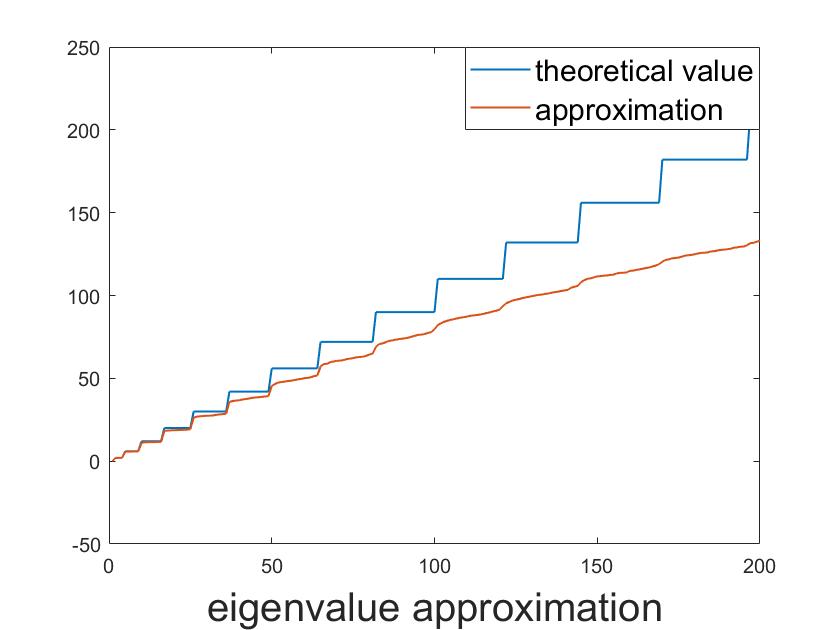}
\label{fig:spec}
\endminipage
\minipage{0.249\textwidth}
  \includegraphics[width=\linewidth]{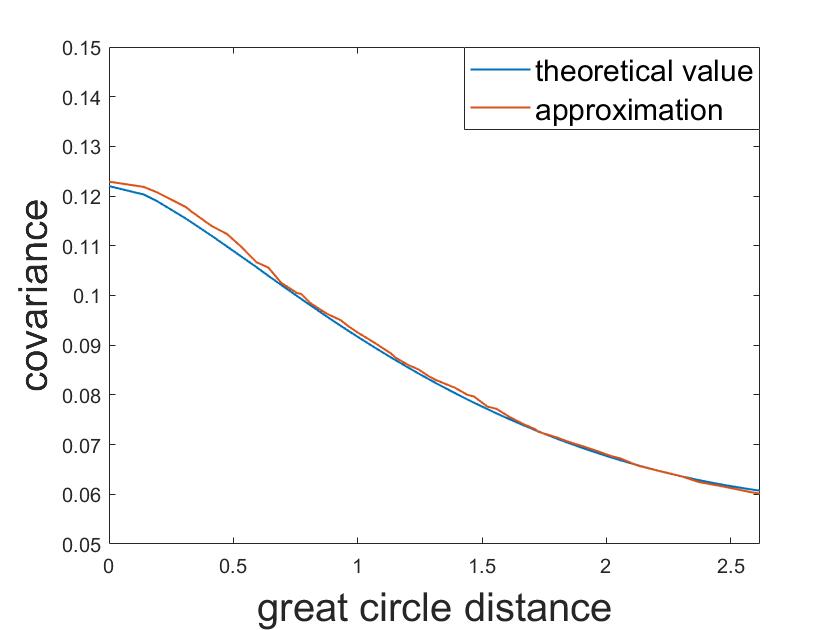}
\label{fig:covariance}
\endminipage
\minipage{0.249\textwidth}
  \includegraphics[width=\linewidth]{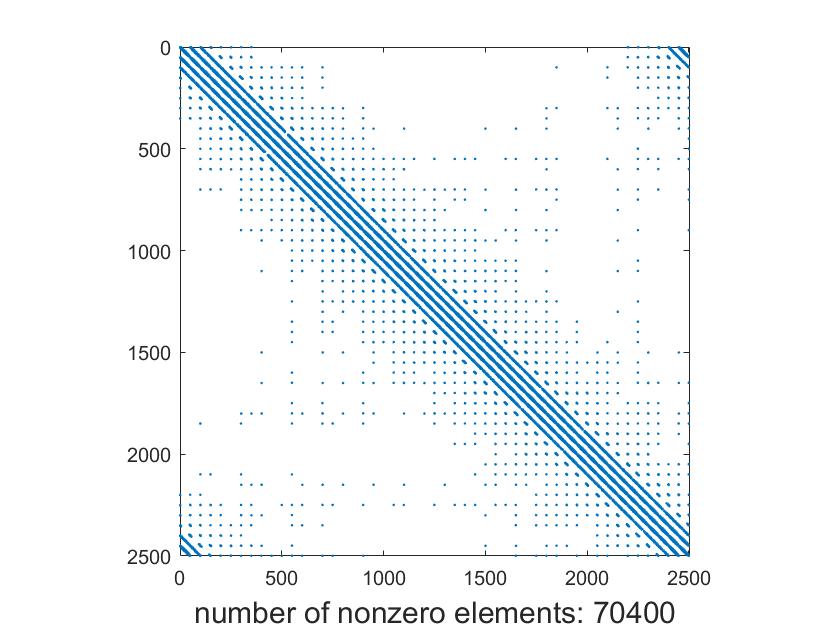}
\label{fig:sparsity}
\endminipage
\endminipage\hfill
\minipage{1\textwidth}
\centering
\minipage{0.249\textwidth}
  \includegraphics[width=\linewidth]{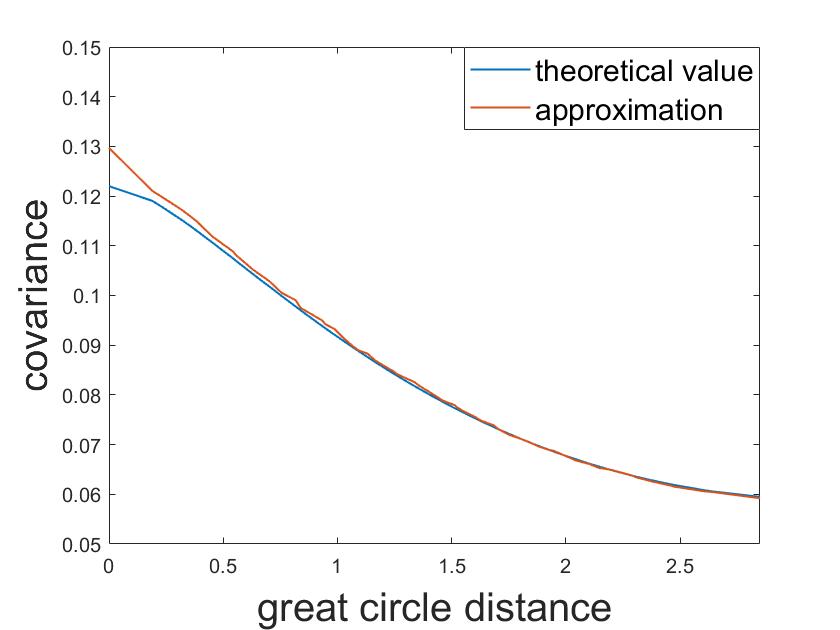}
\label{fig:covariance_full}
\endminipage
\minipage{0.249\textwidth}
  \includegraphics[width=\linewidth]{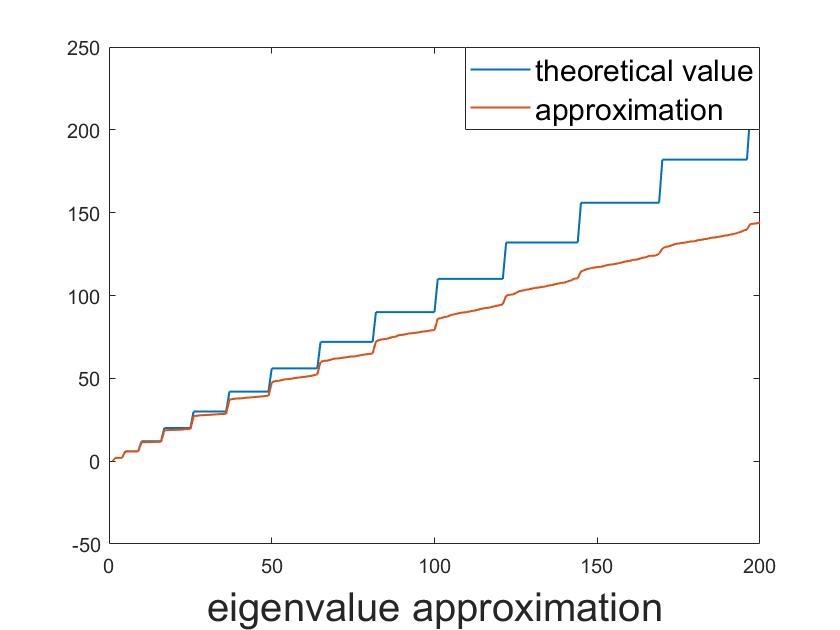}
\label{fig:spec}
\endminipage
\minipage{0.249\textwidth}
  \includegraphics[width=\linewidth]{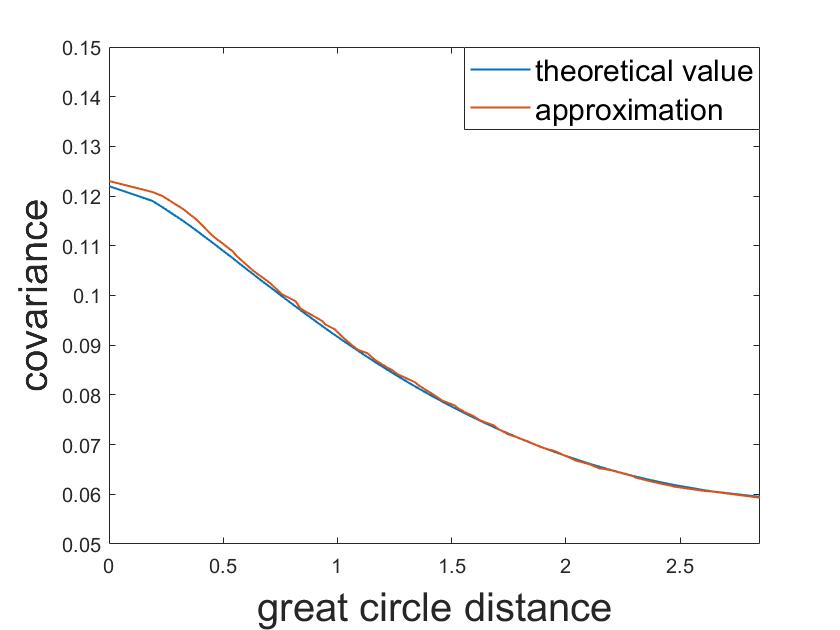}
\label{fig:covariance}
\endminipage
\minipage{0.249\textwidth}
  \includegraphics[width=\linewidth]{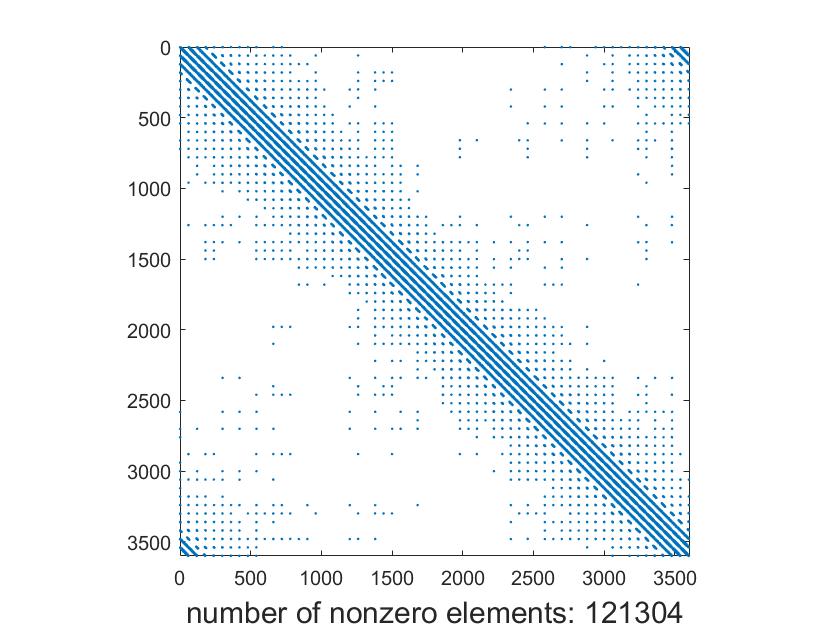}
\label{fig:sparsity}
\endminipage
\endminipage\hfill
\minipage{1\textwidth}
\centering
\minipage{0.249\textwidth}
  \includegraphics[width=\linewidth]{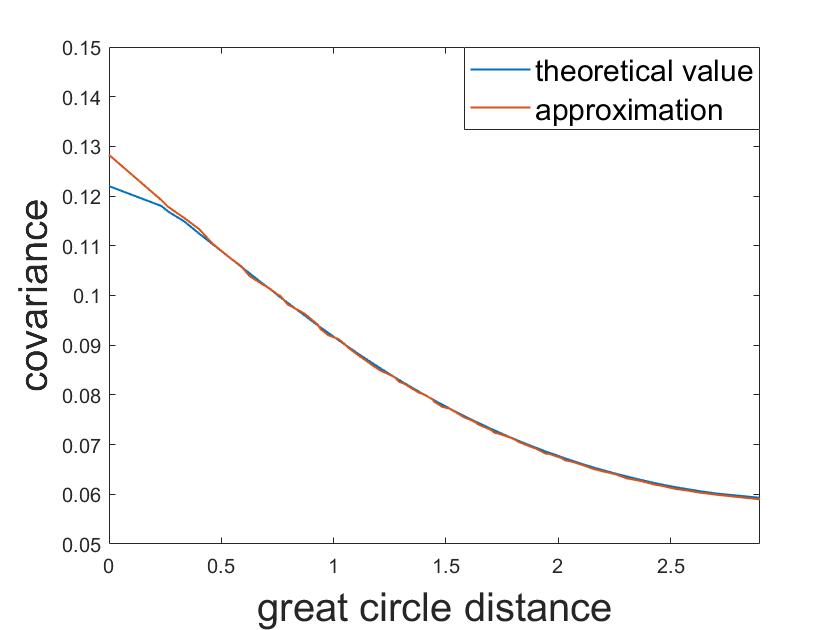}
\label{fig:covariance_full}
\endminipage
\minipage{0.249\textwidth}
  \includegraphics[width=\linewidth]{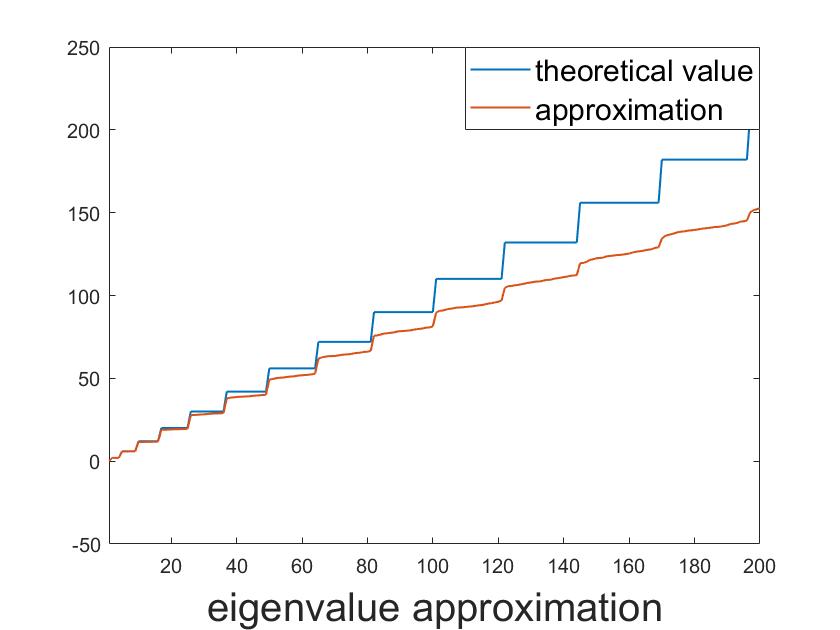}
\label{fig:spec}
\endminipage
\minipage{0.249\textwidth}
  \includegraphics[width=\linewidth]{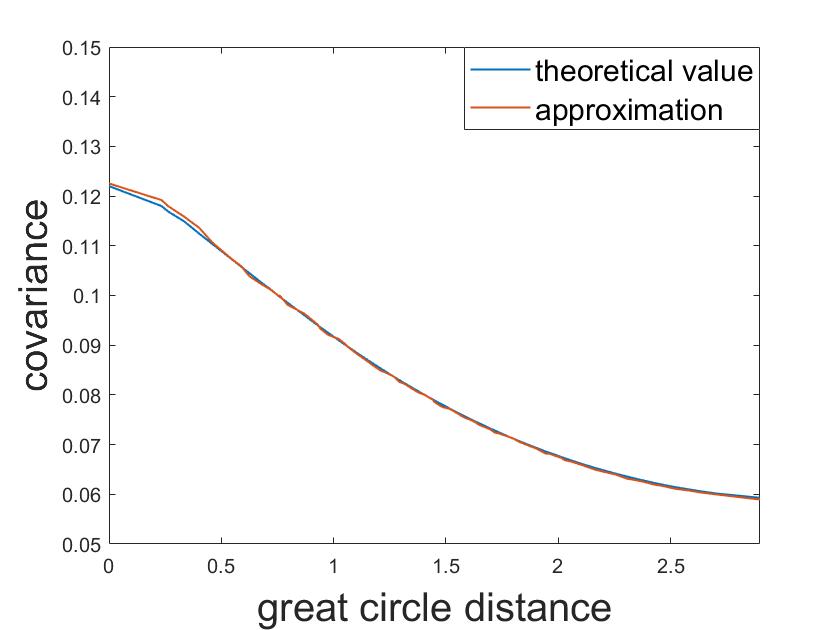}
\label{fig:covariance}
\endminipage
\minipage{0.249\textwidth}
  \includegraphics[width=\linewidth]{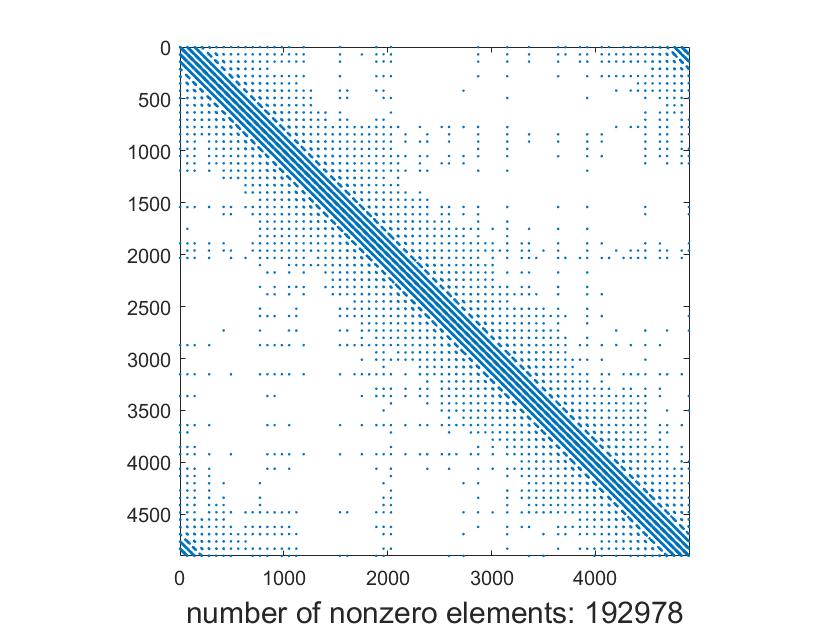}
\label{fig:sparsity}
\endminipage
\endminipage\hfill
\caption{The three rows represent simulations for $n=2500$, $3600$, $4900$ respectively. Within each row, the plots represent (1) Covariances: theoretical (blue) vs. approximation (red).  (2) Spectra of $-\Delta$ (blue) and $\Delta_n$ (red).  (3). Covariances (truncated): theoretical (blue) vs. approximation (red). (4). Sparsity pattern of $\Delta_n$.}
\label{fig:simulation}
\end{figure}

To demonstrate the approximation, we pick $M$ points $x_1,\ldots,x_{M}$ with similar longitude ranging from the north pole to the south pole.  That way, the distance between $x_1$ and $x_k$ is increasing, with $x_{M}$ being the farthest and we then compare the values of $c_n(x_1,x_k)$ and $c(x_1,x_k)$ for $k=1,\ldots,M.$ We shall focus on the $s=2$ case and set the connectivity as $h_n=1.5\times n^{-1/4}$, which is motivated by \eqref{eq:hnscaling} below. Figure \ref{fig:simulation} shows three simulations with $M=50,60,70$ respectively in each row.  The three plots in the first column show that we get reasonable approximations except for the first entry. The reason lies in the poor spectral approximation after certain threshold (Theorems \ref{thm:evalRate} and \ref{thm:efunRate}) as demonstrated in the second column. The spectrum of $\Delta_n$ becomes almost flat after some threshold and hence the tails of \eqref{eq:approx covariance} have a nonnegligible  contribution that worsens the approximation. Such effect is most prominent for $c_n(x_1,x_1)$, since for $j\neq k$ the vectors $\{\psi_n^{(i)}(x_j)\}_{i=1}^n$ and $\{\psi_n^{(i)}(x_k)\}_{i=1}^n$ are less correlated (and in fact orthogonal because they are the rows of the eigenvector matrix from singular value decomposition), so that cancellations reduce the contribution of their tails in \eqref{eq:approx covariance} even if the spectrum gets flat. In the third column of Figure \ref{fig:simulation} we compute the truncated version of $\eqref{eq:approx covariance}$ by keeping only the first $\sqrt{n}$ terms, and we see that $c_n(x_1,x_1)$ is improved substantially with the rest being almost the same as before. The truncation level $\sqrt{n}$ is motivated by Theorem \ref{thm:evalRate} so that the error $h_n\sqrt{\lambda^{(\sqrt{n})}}=O(1)$ in this case. Finally the last column of Figure \ref{fig:simulation} shows the sparsity patterns of the $\Delta_n$'s, which have decreasing percentages of nonzero entries $1.13\%$, $0.94\%$ and $0.8\%$ as $n$ increases. The sparsity indeed leads to computational speed-up as for instance the Matlab built-in function \texttt{chol} takes in our machine 0.1713s to factorize $(I_n+\Delta_n)^s$ but 0.6082s to factorize a random positive semi-definite  $n$ by $n$ dense matrix when $n=4900$. 

\subsubsection{Structured Grids}\label{sec:struct}
The unstructured point clouds in the previous subsection do not necessarily form triangulations of the sphere and FEM-based approximation would require additional nodes. Therefore to compare the two approximations, we assume instead to be given a triangulation as shown in the first column of Figure \ref{fig:fem vs graph}. The three triangulations are generated from the R-INLA function \texttt{inla.mesh.create(globe=k)} with $k=17,19,22$, which consist of $n=2892$, $3612$, $4842$ points respectively. Graph-based approximations are constructed as in \eqref{eq:weight sphere} with $h_n=2\times n^{-1/4}$ and FEM-based approximations are computed using the R-INLA package. The second column of Figure \ref{fig:fem vs graph} shows that the FEM-based approximations are almost indistinguishable from the truth and the truncated graph-based approximations (as in Subsection \ref{sec:unstruct} with truncation level $\sqrt{n}$) are also reasonably accurate. The last two columns of Figures \ref{fig:fem vs graph} demonstrate the corresponding sparsity patterns of the precision matrices, i.e. the matrices $(I_n+\Delta_n)^{s}$ for the graph-based case. The comparisons suggest that FEM-based approximations outperform the graph-based ones, which is not unexpected since the graph-based approach does not fully exploit the structure of the triangulation. Therefore the graph-based approximation is especially valuable when a triangulation of the domain is unavailable and difficult to obtain, as will be demonstrated in our numerical examples in Section \ref{sec:numericalexamples}. 
\begin{figure}[!htb]
\minipage{1\textwidth}
\centering
\minipage{0.249\textwidth}
  \includegraphics[width=\linewidth]{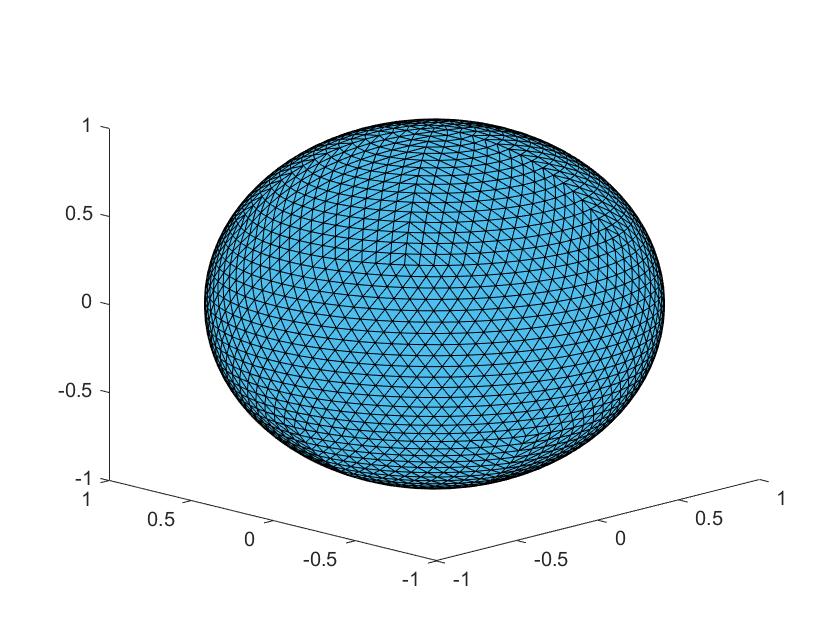}
  \vspace{-0.5cm}
\label{fig:triangulation}
\endminipage
\minipage{0.249\textwidth}
  \includegraphics[width=\linewidth]{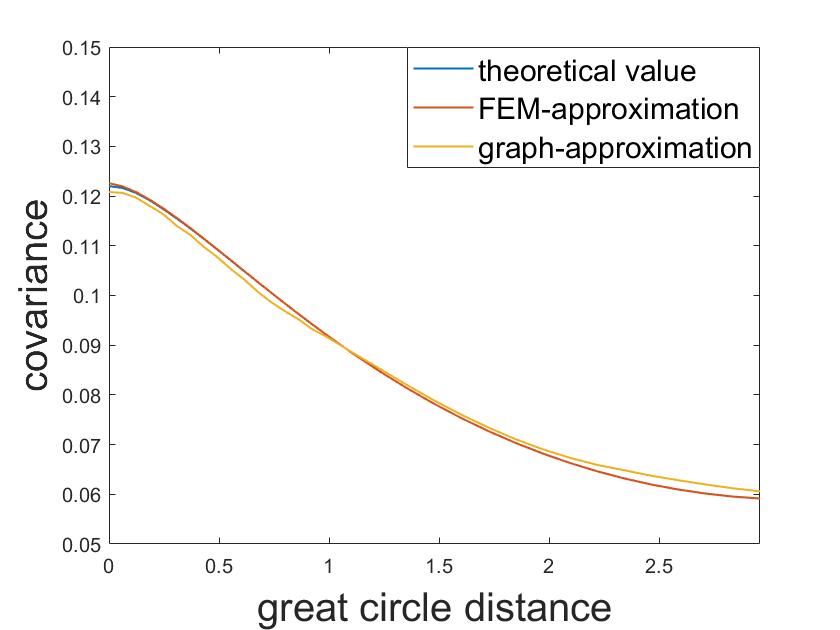}
\label{fig:covariances}
\endminipage
\minipage{0.249\textwidth}
  \includegraphics[width=\linewidth]{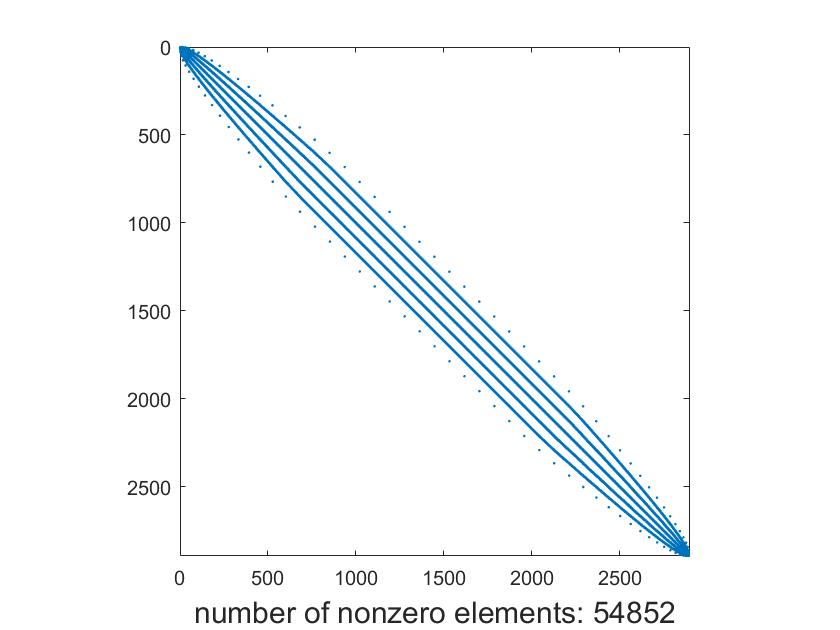}
\label{fig:spy_fem}
\endminipage
\minipage{0.249\textwidth}
  \includegraphics[width=\linewidth]{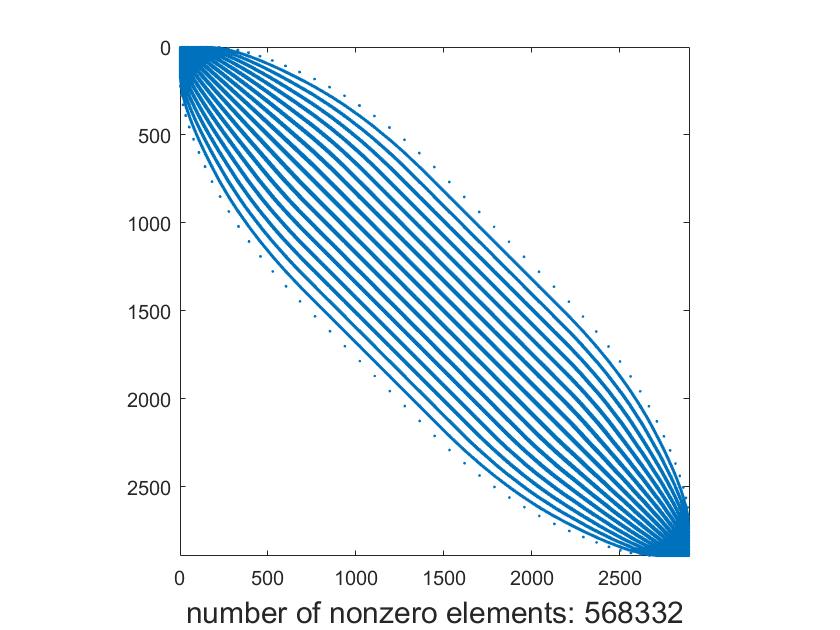}
\label{fig:spy_graph}
\endminipage
\endminipage\hfill
\minipage{1\textwidth}
\centering
\minipage{0.249\textwidth}
  \includegraphics[width=\linewidth]{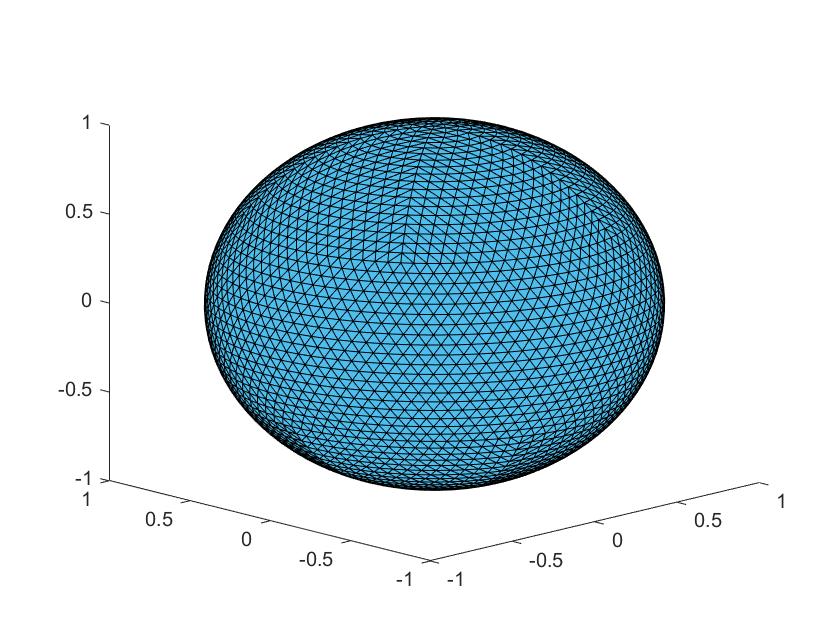}
  \vspace{-0.5cm}
\label{fig:triangulation}
\endminipage
\minipage{0.249\textwidth}
  \includegraphics[width=\linewidth]{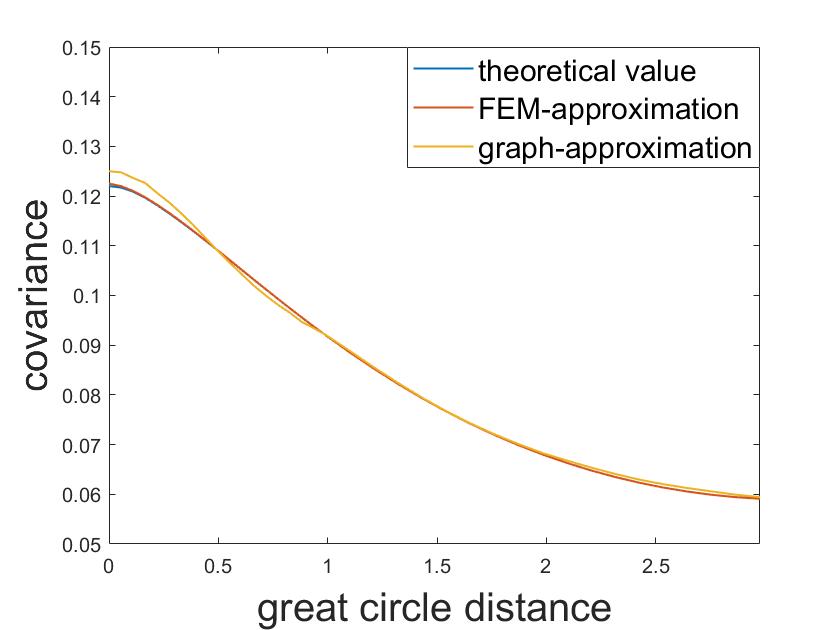}
\label{fig:covariances}
\endminipage
\minipage{0.249\textwidth}
  \includegraphics[width=\linewidth]{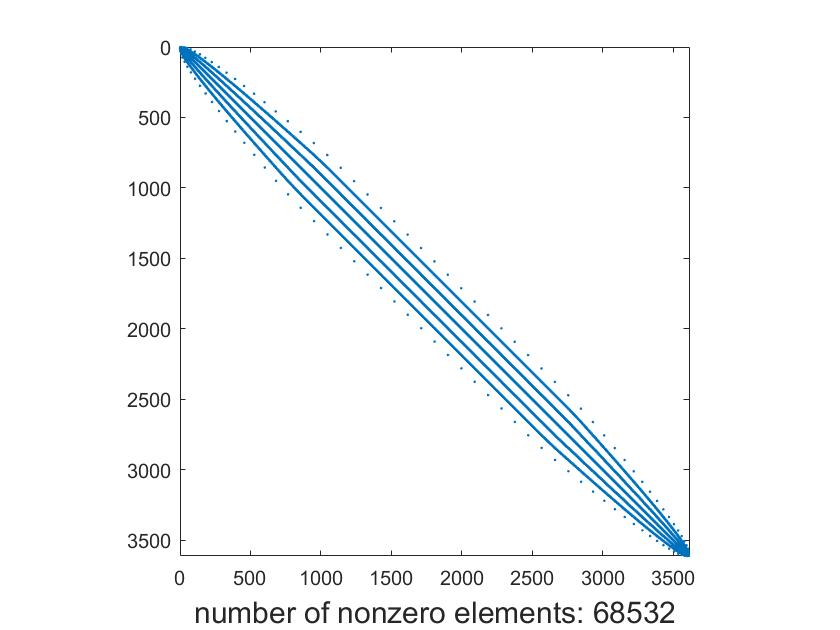}
\label{fig:spy_fem}
\endminipage
\minipage{0.249\textwidth}
  \includegraphics[width=\linewidth]{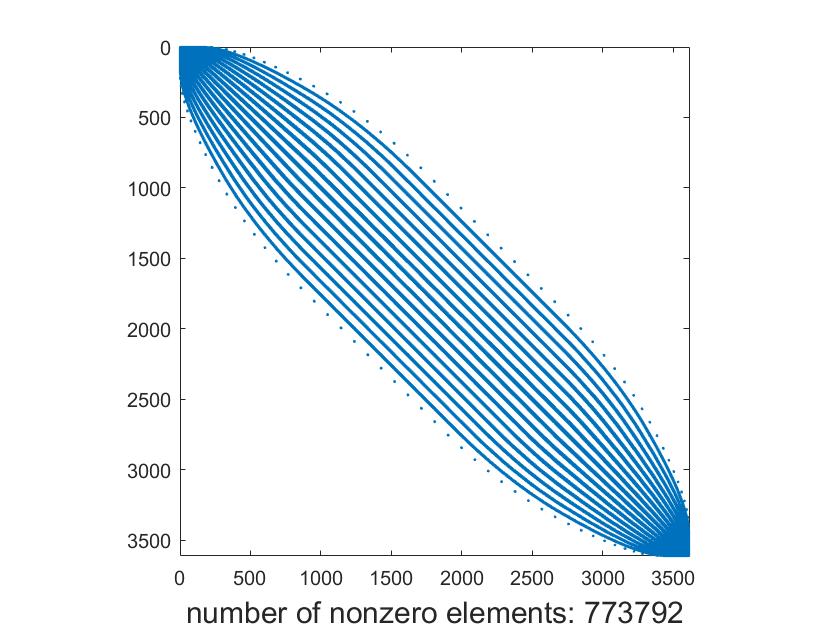}
\label{fig:spy_graph}
\endminipage
\endminipage\hfill
\minipage{1\textwidth}
\centering
\minipage{0.249\textwidth}
  \includegraphics[width=\linewidth]{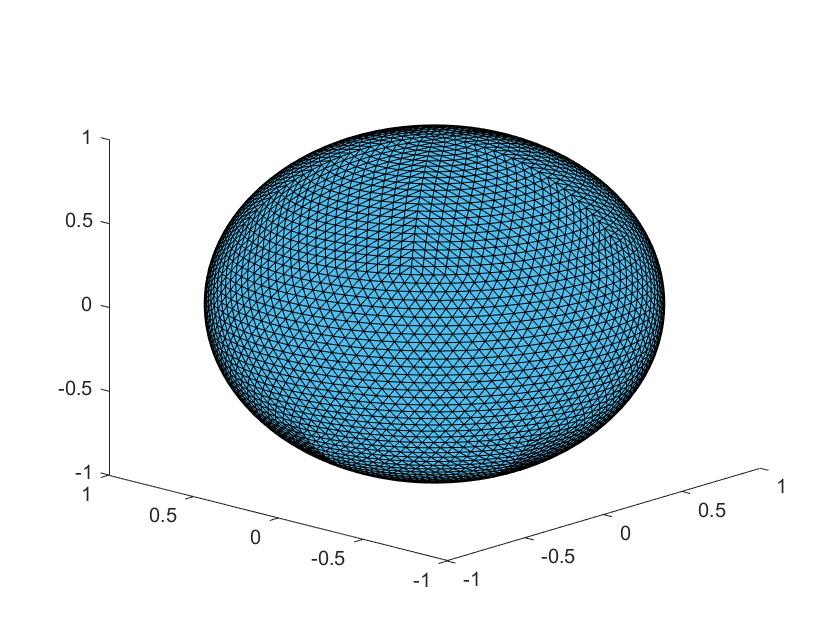}
  \vspace{-0.5cm}
\label{fig:triangulation}
\endminipage
\minipage{0.249\textwidth}
  \includegraphics[width=\linewidth]{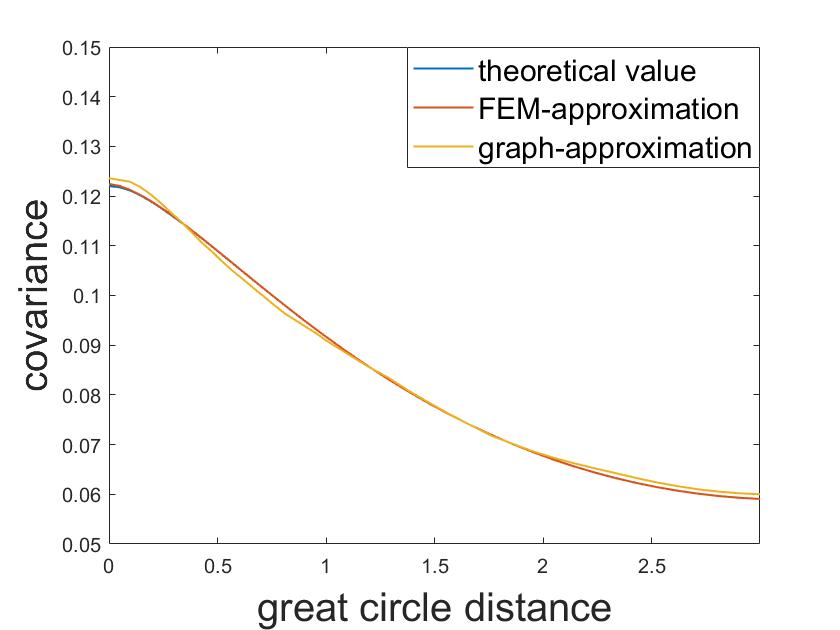}
\label{fig:covariances}
\endminipage
\minipage{0.249\textwidth}
  \includegraphics[width=\linewidth]{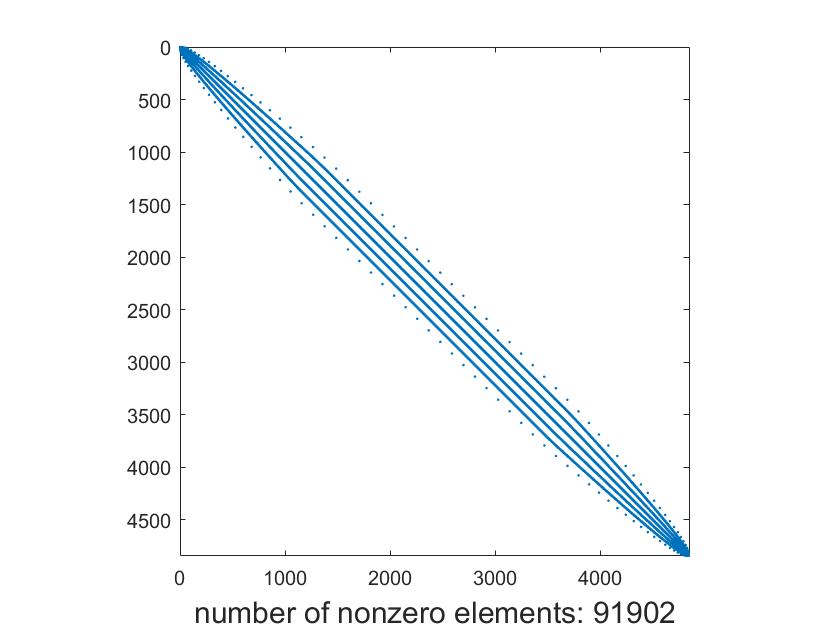}
\label{fig:spy_fem}
\endminipage
\minipage{0.249\textwidth}
  \includegraphics[width=\linewidth]{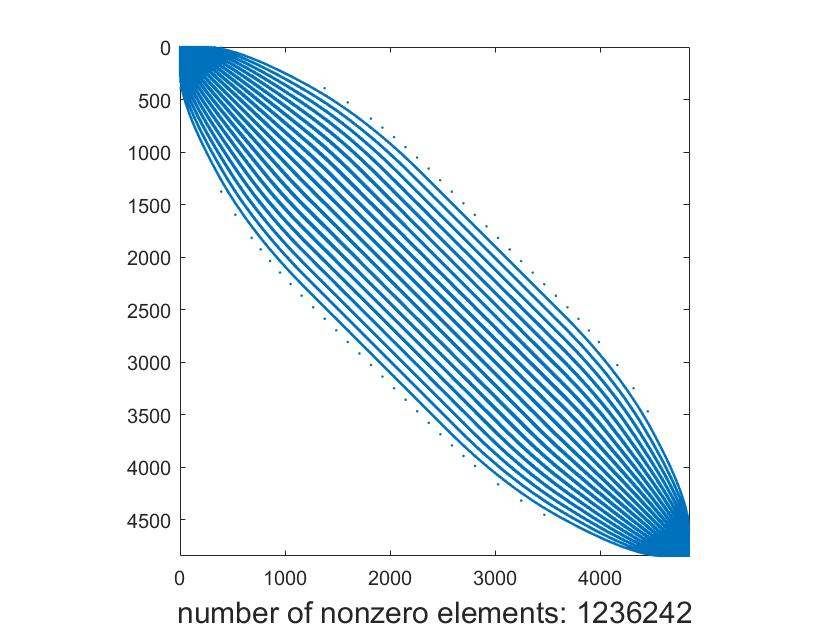}
\label{fig:spy_graph}
\endminipage
\endminipage\hfill
\caption{The three rows represent simulations for $n=2892$, $3612$, $4842$ respectively. Within each row, the plots represent (1) Visualization of the triangulation.  (2) Covariances: theoretical (blue), FEM-based (red), and graph-based (yellow).     (3). Sparsity pattern of FEM-based precision. (4). Sparsity pattern of graph-based precision.}
\label{fig:fem vs graph}
\end{figure}


\section{Convergence of Graph Representations of Mat\'ern Models}\label{sec:theory} 
In this section we study the convergence of graph representations of GFs under  the manifold  Assumption \ref{manifoldassumption}. The analysis will generalize existing literature to cover the  nonstationary models introduced in Subsection \ref{ssec:modeling}, obtaining new rates of convergence.

\subsection{Setup and Main Result}
Recall from \eqref{eq:KLcontinuumNonstationary} and \eqref{eq:KLdiscreteNonstationary} that draws from the continuum and graph Mat\'ern fields are defined by the series
\begin{align}
u &= \tau^{\frac{s}{2}-\frac{m}{4}}\kappa^{\frac{m}{2}} \sum_{i = 1}^ \infty \left[\lambda^{(i)}\right]^{\frac{s}{2}} \,\xi^{(i)} \, \psi^{(i)}, \label{eq:KLu}\\
u_n &=\tau_n^{\frac{s}{2}-\frac{m}{4}}\kappa_n^{\frac{m}{2}} \sum_{i = 1}^n \left[\lambda_n^{(i)}\right]^{\frac{s}{2}} \, \xi^{(i)} \, \psi_n^{(i)}. \label{eq:KLun}
\end{align} 
We seek to establish convergence of $u_n$ towards $u$.  
But notice that $u_n$ is only defined on the point cloud $\M_n$, while $u$ is defined on $\M$. Therefore a natural route is to introduce an interpolation scheme, using ideas from optimal transport theory. 

Recall the manifold assumption \ref{manifoldassumption} that $\{x_i\}_{i=1}^{\infty}$ is a sequence of independent   samples from the uniform distribution $\gamma$ on $\M$ and denote bu $\gamma_n=\frac{1}{n}\sum_{i=1}^n \delta_{x_i}$  the empirical measure of $\{x_i\}_{i=1}^n$. It should be intuitively clear that for the graph Mat\'ern fields to approximate well their continuum counterparts, the point cloud $\{x_i\}_{i=1}^n$ needs to first approximate $\M$ well. The following result (\cite{trillos2019error}[Theorem 2] together with Borel-Cantelli) measures such approximation quantitatively. 
\begin{proposition} \label{prop:transmap}
There is a constant $C$ such that, with probability one, there exists a sequence of transport maps $T_n:\M\rightarrow \M_n$ so that $\gamma_n=T_{n\sharp}\gamma$ and  
\begin{align}
    \underset{n \rightarrow \infty}{\operatorname{lim \,sup}} \, \frac{n^{1/m} \operatorname{sup}_{x\in\M}d_{\M} \bigl(x,T_n(x) \bigr)}{(\log n)^{c_m}} \leq C, \label{eq:transmap}
\end{align}
where $c_m = 3/4$ if $m=2$ and $c_m=1/m$ otherwise. 
\end{proposition}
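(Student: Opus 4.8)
The plan is to obtain the statement as an almost sure consequence of a quantitative bound on the $\infty$-optimal transportation distance between the empirical measure $\gamma_n$ and the reference measure $\gamma$ on $\M$. Recall that, for atomless $\mu$, one sets $d_\infty(\mu,\nu) := \inf_{S} \|d_{\M}(\cdot,S(\cdot))\|_{L^\infty(\mu)}$ where the infimum runs over measurable maps $S$ with $S_\sharp\mu = \nu$, and that a bound $d_\infty(\gamma,\gamma_n) \le \delta_n$ is, up to an arbitrarily small multiplicative loss, equivalent to the existence of a map $T_n:\M\to\M_n$ with $\gamma_n = T_{n\sharp}\gamma$ and $\sup_{x\in\M} d_{\M}(x,T_n(x)) \le 2\delta_n$. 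Since the sample $\{x_i\}_{i=1}^\infty$ lives on a single probability space, it therefore suffices to show that, almost surely, $d_\infty(\gamma,\gamma_n) \le C (\log n)^{c_m} n^{-1/m}$ for all large $n$.

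First I would invoke \cite{trillos2019error}[Theorem 2], which provides exactly the per-$n$ probabilistic estimate: there is a constant $C_0$ such that, for all $n$ large enough,
\begin{align*}
    \Prob\!\left( d_\infty(\gamma,\gamma_n) > C_0\,\frac{(\log n)^{c_m}}{n^{1/m}}\right) \le \frac{1}{n^2},
\end{align*}
with $c_m = 3/4$ when $m=2$ and $c_m = 1/m$ otherwise. At the level of the proof of that theorem, such an estimate is produced by partitioning $\M$ into $\OO(n/\log n)$ geodesic cells of diameter comparable to the target scale, applying Bernstein/Chernoff tail bounds to the binomial counts of sample points in each cell together with a union bound to guarantee that, with the stated probability, every cell carries its expected $\gamma$-mass up to the allowed fluctuation, and then assembling a transport map by a dyadic, Leighton--Shor--type matching between cells; the additional $(\log n)^{3/4}$ factor in dimension two reflects the genuine logarithmic discrepancy of the planar matching problem (cf.\ Ajtai--Koml\'os--Tusn\'ady).

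Next I would run the Borel--Cantelli argument. Because $\sum_n n^{-2} < \infty$, with probability one only finitely many of the events $\{ d_\infty(\gamma,\gamma_n) > C_0 (\log n)^{c_m} n^{-1/m}\}$ occur. On that probability-one event, for each sufficiently large $n$ I would choose a measurable map $T_n$ with $\gamma_n = T_{n\sharp}\gamma$ realizing the $\infty$-transport cost up to a factor $2$, so that $\sup_{x\in\M} d_{\M}(x,T_n(x)) \le 2C_0 (\log n)^{c_m} n^{-1/m}$; for the remaining finitely many $n$ I would take any map $T_n$ with $\gamma_n = T_{n\sharp}\gamma$, which exists because $\gamma$ is atomless (split $\M$ into $n$ Borel sets of equal $\gamma$-mass, one per atom of $\gamma_n$, and send each set to its atom). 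Setting $C := 2C_0$ then yields \eqref{eq:transmap}.

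I expect the only real difficulty to be hidden inside the per-$n$ estimate cited from \cite{trillos2019error}, specifically the sharp logarithmic exponent $c_m$ and above all the value $3/4$ in the plane: obtaining it requires the delicate two-dimensional matching machinery rather than a naive occupancy bound, and here it is simply imported as a black box. The remaining ingredients --- the equivalence between an $\infty$-transport bound and a near-optimal transport map, and the Borel--Cantelli upgrade from an in-probability statement to an almost sure one --- are routine.
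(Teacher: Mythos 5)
Your proposal is correct and matches the paper's own proof, which is precisely the citation of \cite{trillos2019error}[Theorem 2] upgraded to an almost-sure statement via Borel--Cantelli. The extra material you supply (the sketch of how the per-$n$ bound is obtained, the construction of $T_n$ for the finitely many exceptional $n$) is not spelled out in the paper but is consistent with and fills in the same argument.
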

Here $d_{\M}$ denotes the geodesic distance on $\M$ and the notation $\gamma_n=T_{n\sharp}\gamma$ means that $\gamma \bigl(T_n^{-1}(U)\bigr)=\gamma_n(U)$ for all measurable $U$, so that $T_n$ is a measure preserving map.  Intuitively $T_n$ transports the mass of $\M$ to the points $\{x_i\}_{i=1}^n$, so that the preimage of each singleton gets $1/n$ of the mass, i.e., $\gamma(T_n^{-1}(\{x_i\}))=1/n$. Furthermore, the sets $U_i:=T_n^{-1}(\{x_i\})$ form a partition of $\M$ and by Proposition \ref{prop:transmap} we have $U_i\subset B_{\M}(x_i,\eps_n)$, where 
\begin{align}
    \eps_n:=\underset{x\in\M}{\operatorname{sup}} \, d_{\M}(x,T_n(x)) \lesssim \frac{(\log n)^{c_m}}{n^{1/m}}  \label{eq:OTdistance}
\end{align}
and $B_{\M}(x,r)$ refers to the geodesic ball centered at $x$ with radius $r$. Therefore each $U_i$ is ``centered around'' $x_i$ and the function 
\begin{align*}
    u_n\circ T_n(x) = \sum_{i=1}^n u_n(x_i)\mathbf{1}_{U_i}(x), \quad x\in \M,
\end{align*} 
can be thought of as a locally constant interpolation of $u_n$ to a function on $\M$. This motivates us to quantify the convergence of graph Mat\'ern fields by the expected $L^2:=L^2(\gamma)$-norm between $u_n\circ T_n$ and $u$. Recall that $h_n$ is the graph connectivity that is crucial in defining $u_n$. 
\begin{theorem}\label{thm:Rate}
Suppose $\tau$ is Lipschitz, $\kappa\in C^1(\M)$ and both are bounded below by positive constants. 
Let $s>m$ and 
\begin{align}
    \frac{(\log n)^{c_m}}{n^{1/m}} \ll h_n \ll \frac{1}{n^{1/2s}}, \label{eq:hnscaling}
\end{align}
where $c_m=3/4$ if $m=2$ and $c_m=1/m$ otherwise. 
Then, with probability one, 
\begin{align*}
    \mathbb{E}\|u_n\circ T_n-u\|_{L^2} \xrightarrow{n\rightarrow{\infty}} 0,
\end{align*}
where $\{T_n\}_{n=1}^{\infty}$ is a sequence of transport maps as in Proposition \ref{prop:transmap}.
If further $s>(5m+1)/2$ and
\begin{align}
    h_n\asymp \sqrt{\frac{(\log n)^{c_m}}{n^{1/m}}}, \label{eq:hnscaling2}
\end{align}
then, with probability one, 
\begin{align*}
    \mathbb{E}\|u_n\circ T_n-u\|_{L^2} =O\left(\sqrt{h_n}\right)= O\left(\frac{(\log n)^{c_m/4}}{n^{1/4m}}\right). 
\end{align*}
\end{theorem}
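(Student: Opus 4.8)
The plan is to compare $u_n \circ T_n$ and $u$ by introducing an intermediate object that lives on $\M$ and that matches the spectral structure of $u_n$ while being close to $u$. Concretely, I would write $u = \tau^{s/2 - m/4}\kappa^{m/4} v$ and $u_n = \tau_n^{s/2-m/4}\kappa_n^{m/4} v_n$ where $v = \sum_i [\lambda^{(i)}]^{-s/2}\xi^{(i)}\psi^{(i)}$ and $v_n = \sum_i [\lambda_n^{(i)}]^{-s/2}\xi^{(i)}\psi_n^{(i)}$ are built from the \emph{same} Gaussian coefficients $\xi^{(i)}$. Since the multiplicative prefactors $\tau, \kappa$ are Lipschitz (resp.\ $C^1$) and bounded below, $\|\tau^{s/2-m/4}\kappa^{m/4}\circ T_n - \tau_n^{s/2-m/4}\kappa_n^{m/4}\|_{\infty} \lesssim \eps_n$ by Proposition~\ref{prop:transmap}, and since $\mathbb{E}\|v_n\circ T_n\|_{L^2}^2$ is bounded uniformly in $n$ (because $\sum_i [\lambda_n^{(i)}]^{-s} \lesssim \sum_i i^{-2s/m} < \infty$ once $s > m/2$, using Weyl-type growth of the graph eigenvalues), the prefactor mismatch contributes $O(\eps_n) = O((\log n)^{c_m}/n^{1/m})$, which is lower order than $\sqrt{h_n}$ under \eqref{eq:hnscaling2}. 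So the crux is to bound $\mathbb{E}\|v_n\circ T_n - v\|_{L^2}$.

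For the main term I would split $\mathbb{E}\|v_n\circ T_n - v\|_{L^2}$ using a frequency truncation at some level $K = K_n \to \infty$. The low-frequency part $\sum_{i\le K}$ is controlled by spectral convergence: Theorems~\ref{thm:evalRate} and~\ref{thm:efunRate} (referenced in the text) give, for each fixed $i$ up to $K_n$, that $|\lambda_n^{(i)} - \lambda^{(i)}|$ and $\|\psi_n^{(i)}\circ T_n - \psi^{(i)}\|_{L^2}$ are $O(\sqrt{h_n})$ (with the precise dependence on $i$ and on $\eps_n/h_n$ that these theorems provide), so that $\mathbb{E}\big[\|\sum_{i\le K_n}([\lambda_n^{(i)}]^{-s/2}\psi_n^{(i)}\circ T_n - [\lambda^{(i)}]^{-s/2}\psi^{(i)})\xi^{(i)}\|_{L^2}^2\big]^{1/2}$ is bounded by a sum over $i \le K_n$ of such errors, which — weighted by the summable tail $[\lambda^{(i)}]^{-s/2}$ and using $s > (5m+1)/2$ to absorb the polynomial-in-$i$ losses from the eigenvalue spacing in the perturbation bounds — totals $O(\sqrt{h_n})$. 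The high-frequency tails $\sum_{i > K_n}$ of both $v$ and $v_n \circ T_n$ are each $O(K_n^{1/2 - s/m})$ in $L^2$-expectation by Weyl's law, which $\to 0$ since $s > m/2$; choosing $K_n$ to grow slowly enough (a small power of $1/h_n$) makes these tails $O(\sqrt{h_n})$ as well.

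The main obstacle I expect is making the low-frequency spectral estimates uniform enough over a growing range $i \le K_n$: the perturbation bounds for eigenfunctions degrade with the spectral gap $\lambda^{(i+1)} - \lambda^{(i)}$, which under Weyl's law $\lambda^{(i)} \asymp i^{2/m}$ only shrinks like $i^{2/m - 1}$, so the per-mode error can grow polynomially in $i$, and one must check that the summability margin from $s > (5m+1)/2$ exactly compensates this growth together with the $O(\sqrt{h_n})$ base error and the restriction \eqref{eq:hnscaling2} on $h_n$ relative to $\eps_n$. A secondary technical point is handling eigenvalue multiplicities (the $\psi^{(i)}$ need not be individually stable, only the eigenspaces are), which is dealt with by grouping modes within clusters and bounding the projection onto each cluster rather than individual eigenfunctions; since the KL coefficients are i.i.d.\ Gaussian, the law of $v$ is invariant under orthogonal changes of basis within each eigenspace, so this grouping does not change the distribution of $v$ and the argument goes through. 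The "with probability one" qualifier is inherited directly from Proposition~\ref{prop:transmap} and the almost-sure spectral convergence statements, so no additional probabilistic work beyond Borel–Cantelli is needed.
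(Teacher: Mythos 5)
Your proposal has the same overall architecture as the paper's proof: truncate both Karhunen--Lo\`eve expansions at a level $k_n$, control the low-frequency block via the eigenvalue and eigenfunction perturbation theorems (with $s>(5m+1)/2$ absorbing the polynomial-in-$i$ losses), handle the prefactor mismatch using Lipschitz continuity of $\tau,\kappa$ and the transport-map bound $\eps_n$, and balance the truncation error against $h_n$. The only cosmetic difference is that you factor out the prefactor once at the start, whereas the paper introduces it through two extra intermediate functions $\tilde u^{k_n}$ and $u^{k_n}$; that is harmless.

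The genuine gap is in your treatment of the \emph{discrete} high-frequency tail. You assert (twice) that the graph eigenvalues obey Weyl-type growth $\lambda_n^{(i)}\gtrsim i^{2/m}$ for \emph{all} $i\le n$, first to claim $\mathbb{E}\|v_n\circ T_n\|_{L^2}^2$ is uniformly bounded, and then to claim the tail $\sum_{i>K_n}$ of $v_n\circ T_n$ is $O(K_n^{1/2-s/m})$. This is false: the graph Laplacian spectrum plateaus at order $h_n^{-2}$ (the paper's simulation study explicitly comments on this), and the spectral perturbation theorems only control $\lambda_n^{(i)}$ in the range $i\le k_n$ where $h_n\sqrt{\lambda^{(k_n)}}\ll 1$, i.e.\ $k_n\ll h_n^{-m}$. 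Beyond $k_n$ the best one can use is monotonicity $\lambda_n^{(i)}\ge \lambda_n^{(k_n)}\gtrsim k_n^{2/m}$, and since there are up to $n$ such modes, the discrete tail is $O(\sqrt{n}\,k_n^{-s/m})$, not $O(k_n^{1/2-s/m})$. This extra $\sqrt{n}$ is what forces the lower constraint $k_n\gg n^{m/2s}$ on the truncation level, and, combined with $k_n\ll h_n^{-m}$, it is precisely what makes the upper bound $h_n\ll n^{-1/2s}$ in \eqref{eq:hnscaling} a necessary hypothesis. Your sketch never invokes that upper bound on $h_n$, which is a signal that something is being dropped: without it, one cannot choose $k_n$ so that the spectral estimates apply \emph{and} the discrete tail vanishes. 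Once the tail is corrected to $\sqrt{n}\,k_n^{-s/m}$ and the constraint $n^{m/2s}\ll k_n\ll h_n^{-m}$ is imposed, your argument closes in the same way as the paper's.

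A secondary, smaller point: you describe the per-mode spectral errors as ``$O(\sqrt{h_n})$'' for $i\le K_n$, but the bounds from Theorems \ref{thm:evalRate}--\ref{thm:efunRate} degrade with $i$ (the eigenvalue bound carries a factor $h_n\sqrt{\lambda^{(i)}}\asymp h_n i^{1/m}$ and the eigenfunction bound an extra $i^{3/2}$). You do correctly flag this growth and the role of $s>(5m+1)/2$ in the ``main obstacle'' paragraph, so this is imprecision rather than a gap, but the wording suggests a uniform base rate that does not hold.
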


\begin{remark}
Notice that $\eps_n$ as defined in \eqref{eq:OTdistance} represents the finest scale of variations that the point cloud can resolve and hence $h_n$ needs to be much larger than $\eps_n$ to capture local geometry, which is reflected in the lower bound of \eqref{eq:hnscaling}. 
We will see below that the scaling \eqref{eq:hnscaling2} gives the optimal convergence rate. In such cases the geodesic ball of radius $h_n$ has volume $O(n^{-1/2})$ up to logarithmic factors and hence the average degree of the graph is  $O(n^{1/2})$ up to logarithmic factors as the $x_i$'s are uniformly distributed. Therefore, the number of nonzero elements in the weight matrix (and hence the graph Laplacian) is $O(n^{3/2})$ up to logarithmic factors. 

Although not as sparse as the stiffness matrix in FEM approach which has $O(n)$ nonzero entries, the graph Laplacian still gives computational speed-up. Since we only assume a random design for the point cloud $x_i$'s, we are essentially analyzing the worst case scenario. Therefore we expect that the convergence rate and the resulting sparsity of the graph Laplacian can be improved if the point cloud is structured. However this will require a different analysis which is beyond the scope of this paper.   \qed
\end{remark}

In previous related work, the $T_n$'s are taken to be the optimal transport maps, whose computation can be quite challenging, especially in high dimensions \cite{peyre2019computational}. Therefore we consider an alternate interpolation map below that can be computed efficiently. Consider the Voronoi cells
\begin{align*}
    V_i=\{x\in\M: |x-x_i|=\underset{j=1,\ldots,n}{\operatorname{min}} |x-x_j|\}.
\end{align*}
Up to a set of ambiguity of $\gamma$-measure 0, the $V_i$'s form a partition of $\M$ and we shall assume they are disjoint by assigning points in their intersections to only one of them. We then define a map $\mathcal{T}_n:\M\rightarrow \M_n$ by $\mathcal{T}_n(x)=x_i$ for $x\in V_i$ and consider the nearest-neighbor interpolation 
\begin{align*}
    u_n\circ \mathcal{T}_n(x) = \sum_{i=1}^nu_n(x_i)\mathbf{1}_{V_i}(x), \quad x\in\M.
\end{align*}
Note its resemblance with $u_n\circ T_n$. Indeed, the sets $V_i$ and $U_i$'s are comparable and we have a similar result as Theorem \ref{thm:Rate}. 
\begin{theorem}\label{thm:Rate2}
Suppose $\tau$ is Lipschitz, $\kappa\in C^1(\M)$ and both are bounded below by positive constants. 
Let $s>(5m+1)/2$ and
\begin{align*}
    h_n\asymp \sqrt{\frac{(\log n)^{c_m}}{n^{1/m}}}.
\end{align*}
Then, with probability one, 
\begin{align*}
    \mathbb{E}\|u_n\circ \mathcal{T}_n-u\|_{L^2} = O\left(\frac{(\log n)^{(2m+1)c_m/4}}{n^{1/4m}}\right). 
\end{align*}
\end{theorem}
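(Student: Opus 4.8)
The plan is to run the proof of Theorem~\ref{thm:Rate} with the Voronoi partition $\{V_i\}_{i=1}^n$ in place of the optimal‑transport partition $\{U_i\}_{i=1}^n$, $U_i:=T_n^{-1}(\{x_i\})$, where $\{T_n\}$ is a sequence of maps as in Proposition~\ref{prop:transmap}; since $u_n\circ\mathcal{T}_n$ and $u_n\circ T_n$ take the same values $\{u_n(x_i)\}_{i=1}^n$ on these two partitions, this amounts to controlling the extra error incurred by the substitution. One clean way to organize this is the decomposition
\[
\mathbb{E}\|u_n\circ\mathcal{T}_n-u\|_{L^2}\ \le\ \mathbb{E}\|u_n\circ T_n-u\|_{L^2}+\mathbb{E}\|u_n\circ\mathcal{T}_n-u_n\circ T_n\|_{L^2},
\]
whose first term is $O\bigl((\log n)^{c_m/4}/n^{1/4m}\bigr)$ by Theorem~\ref{thm:Rate} under the scaling \eqref{eq:hnscaling2} and is absorbed into the claim since $(2m+1)c_m/4\ge c_m/4$; the extra logarithmic factor below is the price of passing from $\{U_i\}$ to $\{V_i\}$.

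The geometric input I would establish first is that the two partitions are comparable. Because the $x_i$ are i.i.d.\ uniform, Proposition~\ref{prop:transmap} already bounds the covering radius, $\sup_{x\in\M}\min_i d_{\M}(x,x_i)\le\sup_{x\in\M}d_{\M}\bigl(x,T_n(x)\bigr)\lesssim(\log n)^{c_m}/n^{1/m}$ with probability one; since Euclidean and geodesic distances agree to leading order at this scale, this gives $V_i\subset B_{\M}(x_i,\delta_n)$ with $\delta_n\lesssim(\log n)^{c_m}/n^{1/m}=:\rho_n$ --- the \emph{same} scale as $\eps_n$ in \eqref{eq:OTdistance}, which controlled $U_i$. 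What does change is the cell mass: whereas $\gamma(U_i)=1/n$ exactly, the Voronoi cells only satisfy $\gamma(V_i)\le\gamma\bigl(B_{\M}(x_i,\delta_n)\bigr)\lesssim\delta_n^m\lesssim(\log n)^{mc_m}/n$, so $\max_i n\gamma(V_i)\lesssim(\log n)^{mc_m}$. It also follows that if $x\in V_i\cap U_j$ with $i\ne j$ then $d_{\M}(x_i,x_j)\le d_{\M}(x_i,x)+d_{\M}(x,x_j)\lesssim\rho_n\ll h_n$, so every such ``discordant'' pair is a graph edge carrying the constant weight $W_{ij}\asymp(nh_n^{m+2})^{-1}$.

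With these facts the proof of Theorem~\ref{thm:Rate} transfers almost verbatim. Every step that used the containment $U_i\subset B_{\M}(x_i,\eps_n)$ still holds with $\delta_n\asymp\eps_n$ in its place, hence is unchanged in order. Every step that used the exact mass $\gamma(U_i)=1/n$ --- typically to replace a cell‑weighted sum $\sum_i\gamma(U_i)a_i$ by the empirical average $\tfrac1n\sum_i a_i$, which then converges to a continuum integral --- now acquires the factor $\max_i n\gamma(V_i)\lesssim(\log n)^{mc_m}$, and after the square root built into the $L^2$ estimate this contributes $(\log n)^{mc_m/2}$, upgrading $O(\sqrt{h_n})=O\bigl((\log n)^{c_m/4}/n^{1/4m}\bigr)$ to the stated $O\bigl((\log n)^{(2m+1)c_m/4}/n^{1/4m}\bigr)$; Borel--Cantelli (already used in Proposition~\ref{prop:transmap} and in the covering‑radius bound) yields the ``with probability one'' conclusion. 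As a cross‑check, the comparison term can also be bounded directly: expanding, $\|u_n\circ\mathcal{T}_n-u_n\circ T_n\|_{L^2}^2=\sum_{i,j}\gamma(V_i\cap U_j)\,|u_n(x_i)-u_n(x_j)|^2$, and since every discordant pair is an edge this is dominated via \eqref{eq:utLu} by a multiple of $h_n^{m+2}u_n^{T}L_n^{\tau,\kappa}u_n$ (using the comparability of $\Delta_n^{\mathrm{un}}$, $\Delta_n^{\kappa}$ and the bounded rescalings $\tau_n^{s/2-m/4}\kappa_n^{m/4}$, and $\tau_n\succeq 0$), whose expectation against $\mathcal{N}(0,C_n^{\tau,\kappa})$ from \eqref{eq:covarianceanysotropicdiscrete} reduces to the spectral sum $\sum_i[\lambda_n^{(i)}]^{1-s}$ --- finite once $s>1+m/2$ by the discrete Weyl estimate $\lambda_n^{(i)}\asymp i^{2/m}$ below the spectral threshold --- and is in fact of strictly lower order in $n$, confirming that the logarithmic loss in the stated bound comes from the measure distortion of the Voronoi cells rather than from the increments of the field.

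The \emph{main obstacle} is controlling the graph Mat\'ern field at scales finer than the connectivity $h_n$: whether one routes through the cell‑mass distortion $\max_i n\gamma(V_i)$ (which rests on the covering‑radius bound for i.i.d.\ samples) or through $\operatorname{tr}\bigl([L_n^{\tau,\kappa}]^{1-s}\bigr)$ (which must be bounded uniformly in $n$ across the flat, poorly‑approximated tail of the graph spectrum visible in Figure~\ref{fig:spec}), the delicate point is that spectral convergence of $\Delta_n$ holds only for eigenvalues and eigenfunctions below a threshold and only in an averaged, not pointwise, sense. This is exactly why the smoothness must be taken well above $m/2$ --- here $s>(5m+1)/2$ --- and why the eigenvalue and eigenfunction convergence results of Section~\ref{sec:theory}, rather than mere pointwise approximation of $\Delta_n$, are what make the argument close.
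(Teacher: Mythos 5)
Your overall structure matches the paper's: prove the $\mathcal{T}_n$ version by re‑running the proof of Theorem \ref{thm:Rate} with the Voronoi interpolation, and account for the resulting logarithmic loss. However, the proposal glosses over the one step that actually needs a new argument, and the ``cross‑check'' you offer is wrong.

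\textbf{The main gap.} The paper does \emph{not} re‑derive the spectral convergence lemmas with Voronoi cells, and the reason is structural. The operators $P,P^*,\mathcal{I}$ in Appendix \ref{sec:AevalRate}--\ref{sec:AefunRate} are defined via the transport cells $U_i=T_n^{-1}(\{x_i\})$, and the fact that $\gamma(U_i)=1/n$ exactly is what makes $P^*:L^2(\gamma_n)\to L^2(\gamma)$ an \emph{isometry} (so that $\|P^*v\|_{L^2}=\|v\|_2$). This identity is used repeatedly in Lemmas \ref{lemma:evalUB}, \ref{lemma:evalLB}, \ref{lemma:efun0}, etc., and it simply fails for Voronoi cells, whose masses range from order $n^{-2}$ to order $(\log n)^{mc_m}/n$. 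Replacing ``$\gamma(U_i)=1/n$'' by ``$n\gamma(V_i)\lesssim(\log n)^{mc_m}$'' gives one‑sided bounds, not the two‑sided control needed for the minimax characterizations. So ``transfers almost verbatim'' is not a proof. What the paper actually does is keep the transport‑cell machinery intact and then \emph{convert} the $T_n$ eigenfunction bound into a $\mathcal{T}_n$ bound via a dedicated comparison lemma (Lemma 17 of \cite{trillos2019error}), which yields precisely
\[
\|\psi_n^{(j)}\circ\mathcal{T}_n-\psi^{(j)}\|_{L^2}\le C_{\M}\Bigl[\bigl(\lambda^{(j)}\bigr)^{\frac{m+1}{4}}\eps_n+(\log n)^{\frac{mc_m}{2}}\,\|\psi_n^{(j)}\circ T_n-\psi^{(j)}\|_{L^2}\Bigr],
\]
giving the second assertion of Theorem \ref{thm:efunRate2}. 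This single lemma is the crux of the statement; without it (or an equivalent), your plan does not close.

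\textbf{The cross‑check is incorrect.} Since the $\psi_n^{(i)}$ are orthonormal in $L^2(\gamma_n)$, their Euclidean norms are $\sqrt{n}$, not $1$; hence $\mathbb{E}\bigl[u_n^T L_n^{\tau,\kappa}u_n\bigr]\asymp n\sum_i[\lambda_n^{(i)}]^{1-s}$, not $\sum_i[\lambda_n^{(i)}]^{1-s}$. Your bound therefore reads
\[
\mathbb{E}\|u_n\circ\mathcal{T}_n-u_n\circ T_n\|_{L^2}^2\ \lesssim\ h_n^{m+2}\,\mathbb{E}\bigl[u_n^T L_n^{\tau,\kappa}u_n\bigr]\ \asymp\ n\,h_n^{m+2}\sum_i\bigl[\lambda_n^{(i)}\bigr]^{1-s},
\]
and with $h_n\asymp(\log n)^{c_m/2}n^{-1/2m}$ one has $n\,h_n^{m+2}\asymp n^{(m-2)/2m}(\log n)^{(m+2)c_m/2}$, which \emph{diverges} for $m\ge 2$. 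So the Dirichlet‑energy route does not control $\|u_n\circ\mathcal{T}_n-u_n\circ T_n\|_{L^2}$, and the claim that this term is ``of strictly lower order'' is false. Indeed, the comparison term is generically of the \emph{same} order as the stated rate (as the triangle inequality applied to the two theorems already shows), so the decomposition offers no shortcut: bounding it directly requires the very eigenfunction comparison lemma you were trying to avoid.

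In short: your intuition about where the logarithmic factor comes from is right, and the decomposition is fine for bookkeeping, but you are missing the comparison lemma that converts $T_n$‑interpolation bounds into $\mathcal{T}_n$‑interpolation bounds, and the alternate Dirichlet‑energy argument fails for $m\ge 2$ due to a normalization error.
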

\begin{remark}
We see that the rate has an additional factor $(\log n)^{mc_m/2}$, which comes from the bound for eigenfunction approximation using the map $\mathcal{T}_n$. In fact, due to this additional logarithmic term, we were unable to show convergence for general $s>m$ using the same strategy as before. \qed
\end{remark}

\subsection{Outline of Proof}
From the Karhunen-Lo\`eve expansions \eqref{eq:KLu} and \eqref{eq:KLun} we can see that the fundamental step in analyzing the convergence of the graph representations is to obtain bounds for both the eigenvalue and eigenfunction approximations. 

Since $L^{\tau,\kappa}_n$ and $\L^{\tau,\kappa}$ are self-adjoint with respect to the $L^2(\gamma_n)$ and $L^2(\gamma)$ inner products, respectively, and are both positive definite, their spectra are real and positive. Denote by $\{(\lambda_n^{(i)}, \psi_n^{(i)})\}_{i=1}^n$ and $\{(\lambda^{(i)},\psi^{(i)})\}_{i=1}^{\infty}$ the eigenpairs of $L^{\tau,\kappa}_n$ and $\L^{\tau,\kappa}$, with the eigenvalues in increasing order. Suppose we are in a realization where the conclusion of Proposition \ref{prop:transmap} holds, and recall the definition of $\eps_n$ in \eqref{eq:OTdistance}. We formally have the following results on spectral approximations. (Rigorous statements can be found in  Appendices \ref{sec:AevalRate} and \ref{sec:AefunRate}.) 

\begin{theorem}[Eigenvalue Approximation cf. Theorem \ref{thm:evalRate2}] \label{thm:evalRate}
Suppose  $\eps_n \ll h_n$ and $h_n\sqrt{\lambda^{(k_n)}}\ll 1$. Then
\begin{align*}
    \frac{|\lambda^{(i)}_{n}-\lambda^{(i)}|}{\lambda^{(i)}} =O\left(\frac{\eps_n}{h_n}+h_n\sqrt{\lambda^{(i)}}\right),
\end{align*}
for $i=1,\ldots,k_n$.
\end{theorem}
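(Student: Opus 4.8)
The plan is to reduce the eigenvalue comparison to a min-max (Courant--Fischer) argument applied to the Rayleigh quotients of $L^{\tau,\kappa}_n$ and $\L^{\tau,\kappa}$, with the transport map $T_n$ providing the bridge between $L^2(\gamma)$ and $L^2(\gamma_n)$. Concretely, recall that
\begin{align*}
    \lambda^{(i)} = \min_{\substack{S\subset H^1(\M)\\ \dim S = i}}\ \max_{f\in S\setminus\{0\}} \frac{\la \L^{\tau,\kappa} f, f\ra_{L^2(\gamma)}}{\|f\|_{L^2(\gamma)}^2}, \qquad \lambda_n^{(i)} = \min_{\substack{S\subset \R^n\\ \dim S = i}}\ \max_{v\in S\setminus\{0\}} \frac{v^\top L^{\tau,\kappa}_n v}{\|v\|_{L^2(\gamma_n)}^2}.
\end{align*}
To get an upper bound on $\lambda_n^{(i)}$ in terms of $\lambda^{(i)}$ one takes the optimal $i$-dimensional subspace $S$ spanned by $\psi^{(1)},\dots,\psi^{(i)}$ in the continuum, restricts these eigenfunctions to the point cloud (or, more robustly, first mollifies them at scale $h_n$ so that their discrete Dirichlet energy is controlled), and shows this yields an $i$-dimensional subspace of $\R^n$ on which the discrete Rayleigh quotient is close to $\lambda^{(i)}$. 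For the reverse inequality one goes from a discrete trial subspace to a continuum one by pushing forward along $T_n$ and again mollifying. The two error contributions are exactly the two terms in the claimed bound: $\eps_n/h_n$ measures the distortion of the Dirichlet energy when comparing $|x_i - x_j|$-based graph differences with the true nonlocal averaging at scale $h_n$ (the transport displacement $\eps_n$ must be small relative to the kernel bandwidth $h_n$), while $h_n\sqrt{\lambda^{(i)}}$ measures the bias introduced by the nonlocal/mollified energy relative to the local Dirichlet energy, using that an eigenfunction with eigenvalue $\lambda^{(i)}$ oscillates at wavelength $\sim (\lambda^{(i)})^{-1/2}$ so that averaging over a ball of radius $h_n$ costs a relative error $O(h_n\sqrt{\lambda^{(i)}})$.

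The key steps, in order, would be: (1) introduce the nonlocal (continuum) energy $E_{h_n}(f) = \frac{1}{h_n^{m+2}}\iint_{|x-y|<h_n} \kappa$-weighted $|f(x)-f(y)|^2\,d\gamma\,d\gamma + \int \tau |f|^2 d\gamma$ as an intermediary between the graph energy and the continuum Dirichlet energy $\int \kappa|\nabla f|^2 + \tau|f|^2$; (2) establish a two-sided comparison $E_{h_n}(f) = (1+O(h_n\sqrt{\lambda}))\,\la \L^{\tau,\kappa} f,f\ra$ on the span of the first $k_n$ eigenfunctions, via Taylor expansion of $f$ at scale $h_n$ as already sketched around \eqref{eq:motivation}, controlling remainders by elliptic regularity / Weyl's law bounds on $\|\nabla^2 \psi^{(i)}\|$ in terms of $\lambda^{(i)}$; (3) compare $E_{h_n}(f)$ with its empirical version (the graph energy of $f|_{\M_n}$) using that $\gamma_n = T_{n\sharp}\gamma$ and $d_\M(x,T_n(x))\le \eps_n$, so that replacing integrals by sums over the cells $U_i$ changes pairwise distances by at most $2\eps_n$ and hence changes the indicator $\mathbf{1}\{|x-y|<h_n\}$ only on a boundary layer of relative measure $O(\eps_n/h_n)$ — this is where the $\eps_n/h_n$ term enters; (4) assemble these into the Courant--Fischer argument, being careful that the restriction map $f\mapsto f|_{\M_n}$ and the push-forward $v\mapsto v\circ T_n$ are (approximately) injective on the relevant low-frequency subspaces so that dimensions are preserved, and that the normalizations $\|f\|_{L^2(\gamma)}^2$ vs $\|f|_{\M_n}\|_{L^2(\gamma_n)}^2$ match up to lower-order terms (again $O(\eps_n/h_n + h_n\sqrt\lambda)$); (5) conclude with the stated relative bound for $i\le k_n$ under the hypotheses $\eps_n\ll h_n$ and $h_n\sqrt{\lambda^{(k_n)}}\ll 1$.

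The main obstacle I anticipate is step (3) combined with the dimension-preservation requirement in step (4): one needs uniform (over all $i\le k_n$, hence over a whole $k_n$-dimensional subspace, not just single eigenfunctions) control of how the discretization perturbs the energy, and the naive bound on a single function must be upgraded to a bound on the operator-norm distortion of the bilinear form restricted to $\mathrm{span}\{\psi^{(1)},\dots,\psi^{(k_n)}\}$. This forces quantitative control of $\|\nabla \psi^{(i)}\|_{L^\infty}$ and $\|\nabla^2\psi^{(i)}\|_{L^\infty}$ — or at least $L^2$-Sobolev analogues — uniformly in $i$, which on a general compact manifold requires $L^p$-bounds on eigenfunctions (e.g. Sogge estimates or interior elliptic estimates) and is the source of the restriction to $h_n\sqrt{\lambda^{(k_n)}}\ll 1$. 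A secondary delicate point is handling the boundary-layer set $\{(x,y): \,\bigl||x-y| - h_n\bigr| \lesssim \eps_n\}$ rigorously: one must show its $\gamma\otimes\gamma$-measure is $O(\eps_n h_n^{m-1})$ using the bounded-curvature assumption (comparability of geodesic and Euclidean balls), and that the contribution of $f$-differences over this thin shell is genuinely lower order, which again leans on the regularity of low-frequency $f$. The details of these uniform eigenfunction estimates are exactly what gets deferred to Appendices~\ref{sec:AevalRate} and~\ref{sec:AefunRate}.
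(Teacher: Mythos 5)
Your proposal matches the paper's strategy in all essentials: a Courant--Fischer comparison, the nonlocal bilinear form $E_{h_n}$ as an intermediary between the graph Dirichlet energy $D_{h_n}$ and the continuum energy $D$, and the transport maps $T_n$ to identify an $i$-dimensional test subspace on each side, with $\eps_n/h_n$ entering through the perturbation of the kernel support and $h_n\sqrt{\lambda}$ through the discretization bias. One correction worth making: the ``main obstacle'' you anticipate --- uniform $L^\infty$ or Sogge-type bounds on $\psi^{(i)}$, $\nabla\psi^{(i)}$, $\nabla^2\psi^{(i)}$ to control the energy distortion simultaneously over a whole $k_n$-dimensional subspace --- does not arise, and avoiding it is precisely what makes the argument clean. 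The paper's nonlocal-to-local comparison (Lemma~\ref{lemma:En}) holds for \emph{every} $H^1$ function with relative error $O(h_n)$, independent of $\lambda$; no Taylor expansion of individual eigenfunctions and no elliptic regularity estimates are used. The $\lambda$-dependent factors enter only through the \emph{$L^2$-norm} distortion of the transfer maps --- $\bigl|\|Pf\|_2-\|f\|_{L^2}\bigr|\lesssim \eps_n\|\nabla f\|_{L^2}$ and $\bigl|\|\mathcal{I}v\|_{L^2}-\|v\|_2\bigr|\lesssim h_n\sqrt{D_{h_n}[v]}$ --- which on the span of the first $i$ eigenfunctions are controlled by $\|\nabla f\|_{L^2}\le\sqrt{\alpha\lambda^{(i)}}\,\|f\|_{L^2}$, a purely $L^2$/energy-level inequality. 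A second small inaccuracy: the continuum-to-discrete direction uses cell averaging $Pf(x_i)=n\int_{U_i}f\,d\gamma$ over the transport cells, not mollification; the smoothing operator $\Lambda_r$ is only needed in the reverse direction, to turn the piecewise-constant interpolant $P^*v$ (which is not in $H^1$) into something with finite Dirichlet energy. These are refinements, not gaps --- your plan, carried out carefully, would succeed.
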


\begin{theorem}[Eigenfunction Approximation cf. Theorem \ref{thm:efunRate2}] \label{thm:efunRate}
Suppose  $\eps_n \ll h_n$ and $h_n\sqrt{\lambda^{(k_n)}}\ll 1$. Then there exists orthonomralized eigenfunctions $\{\psi_n^{(i)}\}_{i=1}^n$ and $\{\psi^{(i)}\}_{i=1}^{\infty}$ so that 
\begin{align*}
    \|\psi^{(i)}_{n}\circ T_n-\psi^{(i)}\|_{L^2} &=O\left(i^{\frac{3}{2}}\sqrt{\frac{\eps_n}{h_n}+h_n\sqrt{\lambda^{(i)}}}\right),\\
    \|\psi^{(i)}_{n}\circ \mathcal{T}_n-\psi^{(i)}\|_{L^2} &=O\left((\log n)^{\frac{mc_m}{2}}i^{\frac{3}{2}}\sqrt{\frac{\eps_n}{h_n}+h_n\sqrt{\lambda^{(i)}}}\right),
\end{align*}
for $i=1,\ldots,k_n$.
\end{theorem}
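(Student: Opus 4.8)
The plan is to reduce the eigenfunction comparison to a two–sided comparison of the Dirichlet forms associated with $L^{\tau,\kappa}_n$ and $\L^{\tau,\kappa}$, and then to feed this into an abstract perturbation lemma for eigenpairs of nonnegative self–adjoint operators, in the spirit of the variational approach to spectral convergence of graph Laplacians \cite{burago2013graph,ruiyilocalregularization}. Throughout I work on a realization on which Proposition \ref{prop:transmap} holds, so that $\M=\bigcup_i U_i$ with $U_i=T_n^{-1}(\{x_i\})$, $\gamma(U_i)=1/n$ and $U_i\subset B_{\M}(x_i,\eps_n)$; consequently $v\mapsto v\circ T_n$ is an \emph{exact} isometry from $L^2(\gamma_n)$ onto the piecewise–constant subspace of $L^2(\gamma)$. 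Write $\delta_n^{(i)}:=\eps_n/h_n+h_n\sqrt{\lambda^{(i)}}$ for the consistency parameter, which by Theorem \ref{thm:evalRate} already controls the relative eigenvalue error.

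First I would turn the heuristic \eqref{eq:motivation} into a quantitative statement at the level of quadratic forms. Since $T_n$ is measure preserving, for any $v$ on $\M_n$ the graph energy $\frac{1}{2n}\sum_{i,j}\tilde{W}_{ij}|v(x_i)-v(x_j)|^2+\frac1n\sum_i\tau(x_i)|v(x_i)|^2$ equals a double integral over $\M\times\M$ of the corresponding integrand evaluated at $(T_nx,T_ny)$, with \emph{no concentration inequalities} needed. Sandwiching $\mathbf{1}\{|T_nx-T_ny|<h_n\}$ between $\mathbf{1}\{|x-y|<h_n\mp C\eps_n\}$ — at a multiplicative cost $1+O(\eps_n/h_n)$, since the enclosed spherical shell has relative volume $O(\eps_n/h_n)$ — and using $\sqrt{\kappa(T_nx)\kappa(T_ny)}=\sqrt{\kappa(x)\kappa(y)}\,(1+O(\eps_n))$ (from $\kappa\in C^1(\M)$ bounded below) and $\tau(T_nx)=\tau(x)+O(\eps_n)$ (Lipschitz), gives a $(1+O(\eps_n/h_n))$–comparison of the graph form with the purely continuum \emph{nonlocal} form $\frac{m+2}{\nu_m h_n^{m+2}}\iint_{|x-y|<h_n}\sqrt{\kappa(x)\kappa(y)}\,|w(x)-w(y)|^2+\int\tau|w|^2$ applied to $w=v\circ T_n$. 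A Taylor expansion as in \eqref{eq:motivation} then compares this nonlocal form with $\int\kappa|\nabla w|^2+\int\tau|w|^2$ on the span of $\psi^{(1)},\dots,\psi^{(k_n)}$: the odd–order terms cancel and, after the square root that eigenfunction estimates require, the residual contributes the $h_n\sqrt{\lambda^{(i)}}$ inside the radical (the hypothesis $h_n\sqrt{\lambda^{(k_n)}}\ll1$ keeps it controllable, and $s>m$ enters through summability of $[\lambda^{(i)}]^{-s}$). Matching $L^2$–norm comparisons are immediate for the lift and follow for the sampling map $f\mapsto(f(x_i))_i$ from the same displacement bounds together with uniform $L^\infty$ (indeed $C^1$) bounds on $\psi^{(1)},\dots,\psi^{(k_n)}$ from elliptic regularity and Weyl's law.

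With forms and norms $(1+O(\delta_n^{(k_n)}))$–comparable on the relevant finite–dimensional spaces, I would apply an abstract eigenfunction perturbation lemma. The lifts of $\{\psi_n^{(j)}\}_{j\le i}$ and the samplings of $\{\psi^{(j)}\}_{j\le i}$ span subspaces whose mutual angle is $O(\sqrt{\delta_n^{(i)}})$ — the square root being the standard price for working without any spectral–gap assumption, via a $\sin\theta$/Davis–Kahan estimate applied to the \emph{block} of the first $i$ eigenvalues rather than a single one. Decoding the $i$-th eigenfunction out of this block and orthonormalizing within each (possibly degenerate) eigenspace by Gram–Schmidt turns the block–angle bound into the stated per–index bound; the accumulation over the $\lesssim i$ indices in the block, converting eigenvalue ratios into powers of $i$ via $\lambda^{(i)}\asymp i^{2/m}$, produces the $i^{3/2}$ factor, proving the first estimate.

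For the nearest–neighbor map $\mathcal{T}_n$ the one new difficulty is that $\mathcal{T}_{n\sharp}\gamma\ne\gamma_n$: the Voronoi cells obey $V_i\subset B_{\M}(x_i,C\eps_n)$ but $\gamma(V_i)\ne1/n$, with relative fluctuation as large as $(\log n)^{mc_m}$. Rerunning the argument with $\mathcal{T}_n$ replacing $T_n$ loses this factor in the $L^2$–norm comparisons; pushing it through the square–root step yields the extra $(\log n)^{mc_m/2}$, and the weaker norm control is what forces $s>(5m+1)/2$ in place of $s>m$. \textbf{The main obstacle} is the gapless eigenfunction step: extracting the correct power of $i$ requires carefully bookkeeping how errors in the first-$i$ eigenvalue block propagate to individual eigenfunctions with no separation assumption, while ensuring that the nonlocal–to–local Taylor estimate holds \emph{uniformly} across that block — which in turn rests on uniform control of the eigenfunctions' derivatives up to index $k_n$.
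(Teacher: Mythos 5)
Your high-level plan — reduce to a two-sided comparison of the discrete and continuum Dirichlet forms, then convert form/norm control into eigenfunction control — is the same variational skeleton the paper follows (Appendix~\ref{sec:AevalRate}--\ref{sec:AefunRate}, in the tradition of \cite{burago2013graph}). But there are two genuine gaps, one technical and one structural.

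\textbf{The piecewise-constant lift is not $H^1$.} You propose to apply ``a Taylor expansion as in \eqref{eq:motivation}'' to compare the nonlocal energy of $w=v\circ T_n$ with $\int\kappa|\nabla w|^2+\int\tau |w|^2$. That cannot work as stated: $v\circ T_n=\sum_i v(x_i)\mathbf{1}_{U_i}$ is piecewise constant, so $\nabla w$ is a singular measure and the continuum Dirichlet form $D[w]$ is infinite. One direction of the comparison (continuum $\to$ discrete, via the averaging projection $P$) does go through as you describe, and is Lemma~\ref{lemma:evalUB}. The reverse direction — the one you need to bound $D$ of a lifted discrete eigenvector — requires first mollifying: the paper builds $\mathcal{I}v=\Lambda_{h_n-2\eps_n}P^*v\in H^1$ and proves Lemma~\ref{lemma:auxbounds} for the kernel $\Lambda_r$; without this the quantity you want to estimate is not even finite. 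Related to this, your reliance on pointwise/sampling maps and on uniform $L^\infty$ or $C^1$ bounds on $\psi^{(1)},\ldots,\psi^{(k_n)}$ is a detour the paper deliberately avoids — everything is done in $L^2$ via $P$ and $\mathcal{I}$, and no eigenfunction regularity beyond $H^1$ is ever invoked.

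\textbf{The eigenfunction step is left open, and it is the step where the work lives.} You invoke ``an abstract eigenfunction perturbation lemma'' (Davis--Kahan on blocks) and then flag that ``extracting the correct power of $i$'' is the main obstacle. That obstacle \emph{is} the theorem. The paper resolves it not with $\sin\theta$ machinery but by a bespoke construction (Lemmas~\ref{lemma:efun1}--\ref{lemma:efun2}): it introduces a modified operator $\tilde{\L}^{\tau,\kappa}$ equal to $\L^{\tau,\kappa}$ on the spectral subspace below $\lambda^{(k)}+a$ and equal to $\lambda^{(k)}I$ above it, compares eigenvalues of $\L^{\tau,\kappa}|_{\mathcal V}$ and $\tilde{\L}^{\tau,\kappa}|_{\mathcal V}$ on the $k$-dimensional image $\mathcal V=\mathcal{I}(V)$, and from that bounds the mass of $\mathcal{I}v_k$ outside a spectral window of width set by the gap $\delta_\lambda$. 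The $i^{3/2}$ then comes from $\delta_\lambda^{-2}i\,[\lambda^{(i)}]^2\asymp i^3$ using Weyl's law $\lambda^{(i)}\asymp i^{2/m}$ and $\delta_\lambda\asymp i^{2/m-1}$ (so the scaling is driven by the gap, not by an accumulation over a block as you suggest). Your proposal names the right tools in outline but does not supply a mechanism that produces this exponent. Finally, two small inaccuracies: the condition $s>(5m+1)/2$ plays no role in Theorem~\ref{thm:efunRate} (it belongs to Theorem~\ref{thm:Rate}, and it is required for the \emph{rate} with $T_n$ as well, not only with $\mathcal{T}_n$); and for $\mathcal{T}_n$ the paper does not re-run the argument — it imports Lemma~17 and the proof of Theorem~6 from \cite{trillos2019error} to pass from $P^*\psi_n^{(j)}=\psi_n^{(j)}\circ T_n$ to $\psi_n^{(j)}\circ\mathcal{T}_n$, which is where the $(\log n)^{mc_m/2}$ enters.
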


Theorems \ref{thm:evalRate} and \ref{thm:efunRate} generalize existing results in \cite{burago2015graph,trillos2019error} where $\tau$ and $\kappa$ are constant. Theorem \ref{thm:Rate} and \ref{thm:Rate2} are shown in Appendix \ref{sec:APTL2} based on these two results. 

Here we briefly illustrate the main idea for $u_n\circ T_n$. The assumption that $h_n\sqrt{\lambda^{(k_n)}}\ll 1$ in  Theorems \ref{thm:evalRate} and \ref{thm:efunRate}  is crucial in that the spectral approximations are only provably accurate up to the $k_n$-th eigenvalue and eigenfunction. Therefore to bound the difference between $u_n\circ T_n$ and $u$, we need to consider the truncated series 
\begin{align} \label{eq:truncated}
     u_n^{k_n} :=\tau_n^{\frac{s}{2}-\frac{m}{4}}\kappa_n^{\frac{m}{2}} \sum_{i = 1}^{k_n} \left[\lambda_n^{(i)}\right]^{\frac{s}{2}} \, \xi^{(i)} \, \psi_n^{(i)},
\end{align}
and such truncation introduces an error of order $\sqrt{n}k_n^{-s/m}$. Therefore $k_n$ needs to satisfy $n^{m/2s}\ll k_n \ll h_n^{-m}$ and this explains the upper bound on $h_n$ in \eqref{eq:hnscaling}. 
By repeated application of the triangle inequality, we can show that $\mathbb{E}\|u_n\circ T_n-u\|_{L^2}$ is dominated by the error coming from the truncation and eigenfunction approximation:
\begin{align}
    \mathbb{E}\|u_n\circ T_n-u\|_{L^2} \lesssim \sqrt{n}k_n^{-\frac{s}{m}}+ \sum_{i=1}^{k_n} \left[\lambda^{(i)}\right]^{-\frac{s}{2}} \|\psi_n^{(i)}\circ T_n-\psi^{(i)}\|_{L^2}.  \label{eq:tradeoff}
\end{align}
If we are only interested in showing convergence without a rate, then we can first fix an $\ell\in\mathbb{N}$ and use the fact that $\|\psi_n^{(i)}\circ T_n\|^2_{L^2}=n^{-1}\sum_{i=1}^n \psi_n^{(i)}(x_i)^2=\|\psi_n^{(i)}\|^2_{L^2(\gamma_n)}=1$ to get
\begin{align*}
    \mathbb{E}\|u_n\circ T_n-u\|_{L^2} \lesssim \sqrt{n}k_n^{-\frac{s}{m}}+ \sum_{i=1}^{\ell} \left[\lambda^{(i)}\right]^{-\frac{s}{2}} \|\psi_n^{(i)}\circ T_n-\psi^{(i)}\|_{L^2}+\sum_{i=\ell+1}^{k_n}\left[\lambda^{(i)}\right]^{-\frac{s}{2}}.
\end{align*}
Then letting $n\rightarrow\infty$, we have $\|\psi_n^{(i)}\circ T_n-\psi^{(i)}\|_{L^2}\rightarrow 0$ for $i=1,\ldots,\ell$ and 
\begin{align*}
    \underset{n\rightarrow\infty}{\operatorname{lim\,sup}}\,\mathbb{E}\|u_n\circ T_n-u\|_{L^2} \lesssim \sum_{i=\ell+1}^{\infty}\left[\lambda^{(i)}\right]^{-\frac{s}{2}}.  
\end{align*}
The last expression goes to 0 if we let $\ell \rightarrow \infty$ given that $s>m$. 

However if we want to derive a rate of convergence, then $k_n$ needs to be chosen carefully in \eqref{eq:tradeoff}: $k_n$ should be small so that the spectral approximations up to level $k_n$ are sufficiently accurate, but at the same time not be too small to leave a large truncation error. In particular, by Theorem \ref{thm:efunRate} and Weyl's law an upper bound on the second term \eqref{eq:tradeoff} is  
\begin{align}
    k_n[\lambda^{(k_n)}]^{-s/2} \underset{i=1\ldots,k_n}{\operatorname{max}}\,\|\psi_n^{(k_n)}-\psi^{(k_n)}\|_{L^2} \lesssim k_n^{-\frac{s}{m}+\frac52}\sqrt{\frac{\eps_n}{h_n}+h_nk_n^{\frac{1}{m}}}. \label{eq:efunsumbound}
\end{align}
If we only have $s>m$, then this does not go to zero for any choice of $h_n$ given the constraint that $ n^{m/2s}\ll k_n$. Hence we need the additional assumption that $s>\frac{5}{2}m+\frac12$, which allows us to bound \eqref{eq:efunsumbound} by $\sqrt{\frac{\eps_n}{h_n}+h_n}$ and then
\begin{align*}
    \mathbb{E}\|u_n\circ T_n-u\|_{L^2}
    \lesssim \sqrt{n}k_n^{-\frac{s}{m}} + \sqrt{\frac{\eps_n}{h_n}+h_n}.
\end{align*}
Therefore optimal rates are obtained by setting $h_n\asymp \sqrt{\eps_n}$ and $k_n$ correspondingly, which together with \eqref{eq:OTdistance} gives the scaling \eqref{eq:hnscaling2}. 

\begin{remark}
The paper  \cite{trillos2017consistency} proposed to consider directly the truncated series \eqref{eq:truncated} as the graph field and this will allow us to obtain rates of convergence in Theorem \ref{thm:Rate} for all $s>m$. But in this way the precision matrix representation as in \eqref{eq:covarianceanysotropicdiscrete} no longer holds and one will need to perform spectral decomposition on $L_n^{\tau,\kappa}$ for sampling, which generally is more costly than Cholesky factorization.  \qed 
\end{remark}

\section{Numerical Examples}\label{sec:numericalexamples}
In this section we demonstrate the use of the graph Mat\'ern models introduced in Section \ref{sec:graphdiscretizations} by considering applications in Bayesian inverse problems, spatial statistics and graph-based machine learning. 

For the three examples we employ graph Mat\'ern models as priors within the general framework of latent Gaussian models, briefly overviewed in Subsection \ref{sec:LGM}. Subsection \ref{sec:BIP} studies a toy Bayesian inverse problem on a manifold setting. Our aim is to compare the modeling of length scale through $\tau_n$ and $\kappa_n;$ we further show that the accuracy of the reconstruction with the graph-based approach is satisfactory and that adding nonstationarity may help to overcome the poor performance of more naive hierarchical approaches in large noise regimes.
In Subsection \ref{sec:SS} we investigate the use of graph Mat\'ern fields for interpolating  U.S. county-level precipitation data, assuming to only have access to pairwise distances between counties and precipitation data for some of them.  Contrary to finite element representations, the graph-based approach is applicable in this discrete setting  without the need of performing multidimensional scaling to reconstruct the configuration of the point cloud.  In addition, the graph approach does not require to introduce any artificial nodes.  We also compare the performance of stationary and nonstationary graph Mat\'ern models. In Subsection \ref{sec:ML}, a semi-supervised classification problem in machine learning is studied, where the low dimensional structure of the data naturally motivates the graph-based approach; we further show that nonstationary models may improve the classification accuracy over stationary ones.  


\subsection{A General Framework: Latent Gaussian Models} \label{sec:LGM}
Latent Gaussian models are a flexible subclass of structured additive regression models defined in terms of a likelihood function, a latent process and hyperparameters. Let $\{x_i\}_{i=1}^n$ be a collection of features that we identify with graph nodes. The observation variable $y$ is modeled as a (possibly noisy) transformation of the latent process $u_n:= \bigl[u_n(x_1),\ldots,u_n(x_n)\bigr]^T$, which conditioned on the hyperparameters follows a Gaussian distribution. Finally,  a prior is placed on the hyperparameters. More precisely, we have 
\begin{align*}
	y|u_n,\mu & \sim  \pi(y|u_n,\mu), \\ 
	u_n|\theta   & \sim \mathcal{N}\left(0,Q(\theta)^{-1}\right),\\
	(\mu,\theta)     & \sim \pi(\mu,\theta),
\end{align*}
where $Q(\theta)$ is the precision matrix of the latent process and $\mu,\theta$ are hyperparameters. Markov Chain Monte Carlo inference methods are standard in Bayesian inverse problems with complex likelihood functions, but less computationally expensive deterministic approximations are often preferred in other applications. In particular, the integrated nested Laplace approximations proposed by \cite{rue2009approximate} and the corresponding R-INLA package has greatly facilitated inference of such models. 

The sparsity of the precision matrix $Q(\theta)$ is crucial for efficient likelihood evaluations and sampling of the latent process. For the problems that we consider, the latent process will be modeled as a graph Mat\'ern field, i.e.,
\begin{align} 
	u_n|\tau_n \sim \mathcal{N}\left(0,Q(\tau_n,s)^{-1}\right), \quad Q(\tau_n,s) = \tau_n^{\frac{m}{4}-\frac{s}{2}}(\tau_n+\Delta_n)^{s}\tau_n^{\frac{m}{4}-\frac{s}{2}},  \label{eq:latent}
\end{align}
where $\Delta_n$ is a graph Laplacian constructed with the $x_i$'s and $\tau_n$ is a diagonal matrix modeling the length scale at each node.  We note that the graph Laplacian is often sparse and its sparsity is inherited by $Q(\tau_n,s)$ for  small or moderate integer $s$. 
  
A constant length scale graph Mat\'ern field hyperprior is then placed on $\log(\tau_n)$:
\begin{align}
	\log \tau_n \sim \mathcal{N}\left(0,\nu^{s_0-\frac{m}{2}} (\nu I+ \Delta_n)^{-s_0}\right), \label{eq:hyper}
\end{align}
where $\nu$ and $s_0$ are chosen by prior belief on the length scale. However, when $n$ is large, learning $\tau_n$ as an $n$-dimensional vector is computationally demanding. We instead adopt a truncated Karhunen-Lo\'eve approximation for $\tau_n$. Recall that $\log \tau_n$ has the characterization 
\begin{align*}
	\log \tau_n = \nu^{\frac{s_0}{2}-\frac{m}{4}}\sum_{i=1}^{n} \left[\nu + \lambda_n^{(i)}\right]^{-\frac{s_0}{2}} \xi^{(i)} \psi_n^{(i)},
\end{align*}
where $\xi^{(i)} \overset{i.i.d.}{\sim} \mathcal{N}(0,1)$ and $\{(\lambda_n^{(i)},\psi_n^{(i)})\}_{i=1}^n$ are the eigenpairs of $\Delta_n$. Since the $\lambda_n^{(i)}$'s are increasing, the contribution of the higher frequencies is less significant. Hence we consider a truncated expansion and model $\log \tau_n$ as 
\begin{align*}
	\log \tau_n = \nu^{\frac{s_0}{2}-\frac{m}{4}}\sum_{i=1}^{n_0} \left[\nu + \lambda_n^{(i)}\right]^{-\frac{s_0}{2}} \theta^{(i)} \psi_n^{(i)},
\end{align*}
where $n_0\ll n$ is chosen based on the spectral growth and prior belief on $\tau_n$. Now the hyperparameters are the $\theta_{i}$'s, which are only $n_0$-dimensional, and the hyperprior for each  is naturally taken to be the standard normal. Therefore a complete model of our interest in the following subsections  can be summarized as 
\begin{align}
	y|u,\mu & \sim  \pi(y|u,\mu), \nonumber \\  
	u|\theta,s   & \sim \mathcal{N}\left(0,Q(\theta,s)^{-1}\right), \label{eq:latent}\\
	(\mu,s) \sim \pi(\mu,s),& \quad \theta \sim \mathcal{N}(0,I_{n_0}),
\end{align}
where 
\begin{align}
	Q(\theta,s) &=[\tau_n(\theta)]^{\frac{m}{4}-\frac{s}{2}}[\tau_n(\theta)+\Delta_n]^{s}[\tau_n(\theta)]^{\frac{m}{4}-\frac{s}{2}},  \label{eq:Q} \\
	\log \tau_n(\theta)& = \nu^{\frac{s_0}{2}-\frac{m}{4}}\sum_{i=1}^{n_0} \left[\nu + \lambda_n^{(i)}\right]^{-\frac{s_0}{2}} \theta^{(i)} \psi_n^{(i)}. \label{eq:hyperKL}
\end{align}

\begin{remark}
We remark that \eqref{eq:hyperKL} can be viewed as representing $\tau_n$ as a combination of the first several eigenfunctions of $\Delta_n$, which are a natural basis for functions over the point cloud. A disadvantage is that \eqref{eq:hyperKL} requires knowledge of the $\psi_n^{(i)}$'s. However since $\Delta_n$ is sparse and we only need to know the first $n_0\ll n$ eigenfunctions, the computational cost for $\tau_n$  is still better than $O(n^3)$. \qed 
\end{remark}

\begin{remark} \label{rmk:MV}
The marginal variance of the latent process $u_n$ can be tuned and fixed easily by matching the scales of $u_n$ and the data $y$, and thus we do not include a marginal variance parameter.  
Indeed, the normalizing factors $\tau_n^{m/4-s/2}$ guarantee that $\mathbb{E}|u_n|^2$ are roughly the same for different $\tau_n$'s. Hence one can for instance estimate $\mathbb{E}|u_n|^2$ by setting $\tau_n\equiv 1$ and normalize the observations $y$ by $\sqrt{\mathbb{E}|u_n|^2/|y|^2}$. 

Similarly, we need to tune for the marginal variance of the hyperparameters as in \eqref{eq:hyper}. As $\tau_n$ essentially acts as a cut off on the significant frequencies, this can be done by matching the scale of $\tau_n$ with the eigenvalues of $\Delta_n$ based on one's prior belief.     \qed
\end{remark}
Suppose for illustration that we are interested in the simple regression problem of inferring a Mat\'ern field $u(x)$ based on data $y$ comprising Gaussian measurement of $u$ at given locations/features $x_1, \ldots x_n.$ As noted in the introduction, the computation cost scales as $\mathcal{O} (n^3).$ However, by modeling $u_n$ using a graph Mat\'ern model we obtain a GMRF approximation, with sparse precision matrix, dramatically reducing the computational cost. Thus,  one could introduce further auxiliary nodes $x_{n+1}, \ldots, x_N$ with $N \gg n$ to improve the prior GMRF approximation of the original Mat\'ern model and still reduce the computational cost over formulations based on GF priors. Such ideas arise naturally in semi-supervised applications in machine learning where most features are unlabeled, but can also be of interest in applications in spatial statistics, as discussed in \cite{rue2005gaussian}[Chapter 5] and grant further investigation in Bayesian inverse problems.  

\subsection{Application in  Bayesian Inverse Problems} \label{sec:BIP}
In this subsection we investigate the use of nonstationary graph Mat\'ern models to define prior distributions in Bayesian inverse problems. 
For simplicity of exposition and to avoid distraction from our main purpose of illustrating the modeling of the nonstationarity, we consider a toy example taken from the inverse problem literature  \cite{roininen2019hyperpriors}. The ideas presented here apply immediately to Bayesian inverse problems with more involved likelihood functions, defined for instance in terms of the solution operator of a differential equation \cite{harlim2020kernel,bigoni2020data}.

We study the reconstruction of a signal function given noisy but direct point-wise observations. The domain of the problem is taken to be the unit circle, where the hidden signal $u^{\dagger}$ is parametrized by $t\in[0,2\pi)$ as   
\begin{align*}
	u^{\dagger}(t)= 
\begin{cases}
\exp\left(4-\frac{\pi^2}{t(\pi-t)}\right),\quad & t\in(0,\pi), \\
1, \quad & t\in[\pi+0.5,1.5\pi], \\
-1,\quad & t\in(1.5\pi,2\pi-0.5], \\
0,\quad & \text{otherwise}. 
\end{cases}
\end{align*}
Hence if $x=(\cos(t),\sin(t))$ for $t\in[0,2\pi)$, then $u^{\dagger}(x)$ is understood as $u^{\dagger}(t)$. Such signal is considered in \cite{roininen2019hyperpriors} for its varying length scale, where the domain is the interval $[0,10] \subset \mathbb{R}$ and a uniform grid finite difference discretization is used to define a Mat\'ern prior following the SPDE approach. Here we suppose instead to have only indirect access to the domain through $n=1000$ points $x_i$'s that are drawn independently from the uniform distribution on the circle, and use a graph Mat\'ern model.  

We assume to be given noisy observations of the signal at $J=n/2$ points:  
\begin{align} 
	y(x_i)=u^{\dagger}(x_i)+\eta_i, \quad \eta_i\overset{i.i.d.}{\sim} \Nc(0,\sigma^2), \quad i=1,\ldots,J, \label{eq:ex1_obs}
\end{align}
where $\sigma$ is set to be 0.1 and we have observations at every other node. To recover the signal function at the nodes $x_i$'s, we adopt a hierarchical Bayesian approach which we cast into the framework of latent Gaussian models. More precisely, the observation equation \eqref{eq:ex1_obs} gives the likelihood model 
\begin{align*}
	y|u_n \sim \mathcal{N}(Su_n,\sigma^2 I_J),
\end{align*}
where $S\in \mathbb{R}^{J\times n}$ is a matrix of 0's and 1's  that indicates the location of the observations. The latent process $u_n:=\bigl[ u^{\dagger}(x_1),\ldots,u^{\dagger}(x_n) \bigr]^T$ and the hyperparameters are modeled as in Section \ref{sec:LGM}, where the smoothness is fixed as $s=2$ and the other parameters are chosen as $s_0=1, \nu=10, n_0=21$. We note that by setting $s_0=1$, the hyperprior is actually an approximation of a zero-mean Gaussian process with exponential covariance function, where the sample paths can undergo sudden changes. The choice $n_0=21$ is motivated by the fact that the Laplacian on the circle has eigenvalues $\{i^2\}_{i=0}^{\infty}$, where any non-zero eigenvalue has multiplicity 2. Therefore the cutoff is at the eigenvalue 100, which is an order of magnitude larger than $\nu$, and higher frequencies are less consequential. The graph Laplacian is constructed as in Section \ref{ssec:mainconstruction} with connectivity $\eps_n=4\times n^{-1/1.8}$. 

\begin{algorithm}[!htb]
\caption{Posterior sampling $u,\tau|y$} \label{algo:ex1}
\begin{algorithmic}
\State{Initialize $u_n^0$, $\theta^0$ and set step size $\beta_{\theta}$.}
\For{$i=1,\ldots,M$}
\State{(a) Generate $\xi\sim \mathcal{N}(0,I_{J+n})$ and update \begin{align*}
	u_n^i=\begin{bmatrix} 
		\sigma^{-1}S  \\
		L(\theta^{i-1})
	\end{bmatrix} ^{\dagger}
	\left(\begin{bmatrix}
		\sigma^{-1}y \\
		0			
	\end{bmatrix}
		+\xi \right),
\end{align*} 
where $L(\theta)$ is the Cholesky factor of $Q(\theta)$ and $\dagger$ denotes the matrix pseudoinverse.}
\State{(b)}
\For{$j=1,\ldots,n_0$}
\State{(i) Generate $\xi^{(j)} \sim \mathcal{N}(0,1)$ and  propose $\tilde{\theta}^{i,(j)}=\theta^{i-1,(j)}+\beta_{\theta}\xi^{(j)}.$}
\State{(ii) Denote $\theta^{i,(-j)}=(\theta^{i,(1)}, \ldots, \theta^{i,(j-1)}, \theta^{i-1,(j+1)},\ldots,\theta^{i-1,(n_0)})$. Accept $\tilde{\theta}^{i,(j)}$ with probability 
\begin{align*}
	p = \operatorname{min} \left\{1, \frac{\pi\left(\tilde{\theta}^{i,(j)},\theta^{i,(-j)} |u_n^i,y\right)}{\pi\Big(\theta^{i-1,(j)},\theta^{i,(-j)}|u_n^i,y\Big)}\right\},
\end{align*}
where $\pi(\theta|u_n,y)$ is given in \eqref{eq:posttheta}}. 
\EndFor
\EndFor
\end{algorithmic}
\end{algorithm}

We will follow a similar MCMC sampling as in \cite{roininen2019hyperpriors} detailed in Algorithm \ref{algo:ex1} for inferring the signal function $u_n$ together with the length scale $\tau_n$. To illustrate the idea, we notice that the observation equation and the graph Mat\'ern model for $u_n$ translate into the equations 
\begin{align*}
	\sigma^{-1}Su_n&=\sigma^{-1}y + \xi_1, \\
	L(\theta)u_n & = \xi_2 ,
\end{align*}
where $L(\theta)$ is the Cholesky factor of $Q(\theta)$ and $\xi_1 \sim \mathcal{N}(0, I_J), \xi_2\sim \mathcal{N}(0,I_n)$. The above pair of equations motivate the update for $u_n$ as 
\begin{align*}
	\begin{bmatrix} 
		\sigma^{-1}S  \\
		L(\theta)
	\end{bmatrix} ^{\dagger}
	\left(\begin{bmatrix}
		\sigma^{-1}y \\
		0			
	\end{bmatrix}
		+\xi \right),
\end{align*}
where $\xi \sim \mathcal{N}(0, I_{J+n})$. The hyperparameters $\theta$ are updated with a Metropolis-within-Gibbs sampling scheme, where the full posterior $\theta |u_n,y$ has the form 
\begin{align}
	\pi(\theta|u_n,y) \propto  \sqrt{\det(Q(\theta))}\exp \left(-\frac12 u_n^TQ(\theta)u_n-\frac12 |\theta|^2\right).  \label{eq:posttheta}
\end{align}
In Figures \ref{fig:taua} and \ref{fig:taub} we plot the posterior means of $u_n$ and $\tau_n$ and 95\% credible intervals for each of their coordinates. The oscillatory paths of the reconstructions are due to the graph approximation, where the eigenfunctions $\psi_n^{(i)}$ of the graph Laplacian are in general very ragged. We notice that $\tau_n$ varies rapidly in the region where the signal is piecewise constant, indicating a change of length scale. Moreover, the sudden jump of the signal from 1 to -1 suggests a small local length scale which leads to a larger $\tau_n$, as predicted in Remark \ref{rmk:tau&kappa}. 

To further understand the effect of modeling the length scale through $\kappa_n$, we choose a different prior for the latent process by setting 
\begin{align*}
	Q(\theta) &= \kappa_n(\theta)^{-\frac{m}{4}} (I+\Delta_n^{\kappa})^{s} \kappa_n(\theta)^{-\frac{m}{4}}, \\
\log \kappa_n(\theta)& = \nu^{\frac{s_0}{2}-\frac{m}{4}}\sum_{i=1}^{n_0} \left[\nu + \lambda_n^{(i)}\right]^{-\frac{s_0}{2}} \theta^{(i)} \psi_n^{(i)},
\end{align*}
so that the length scale is controlled by $\kappa_n$ instead. Here $\Delta_{n}^{\kappa}$ is the discrete approximation of $\nabla \cdot (\kappa \nabla)$ introduced in Section \ref{ssec:modeling} and we adopt the same hyperparameter modeling as for $\tau_n$. In Figures \ref{fig:kappaa} and \ref{fig:kappab}, we plot the posterior means and 95\% credible intervals for each coordinate of $u_n$ and $\kappa_n$, as a comparison with their counterparts for $\tau_n$. The figures show that the two approaches give similar reconstructions for the signal and, in agreement with the intuition given in Remark \ref{rmk:tau&kappa}, $\tau_n$ and $\kappa_n$ are almost inversely proportional to each other. 

\begin{figure}[!htb]
\minipage{1\textwidth}
\centering
\minipage{0.33\textwidth}
  \includegraphics[width=\linewidth]{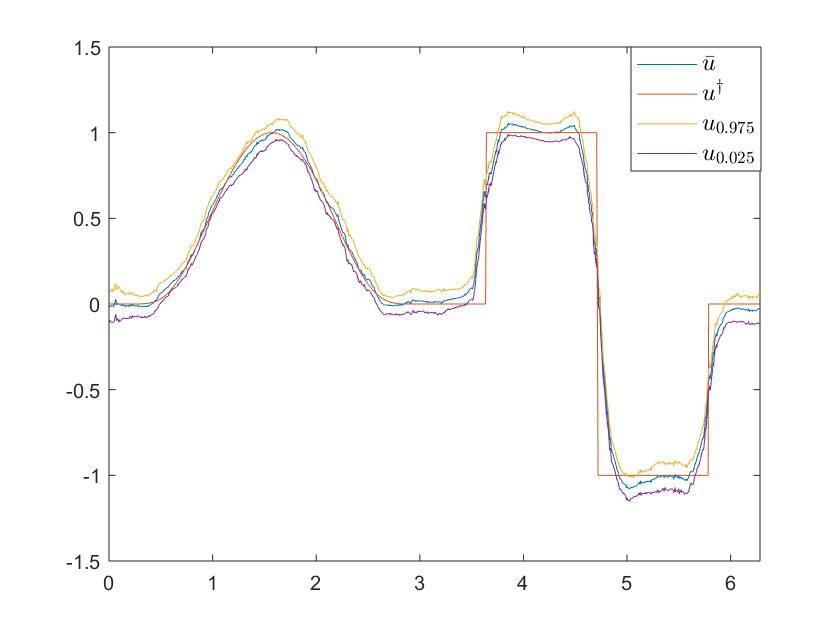}
\vspace{-20pt}
\subcaption{\label{fig:taua}}
\endminipage
\minipage{0.33\textwidth}
  \includegraphics[width=\linewidth]{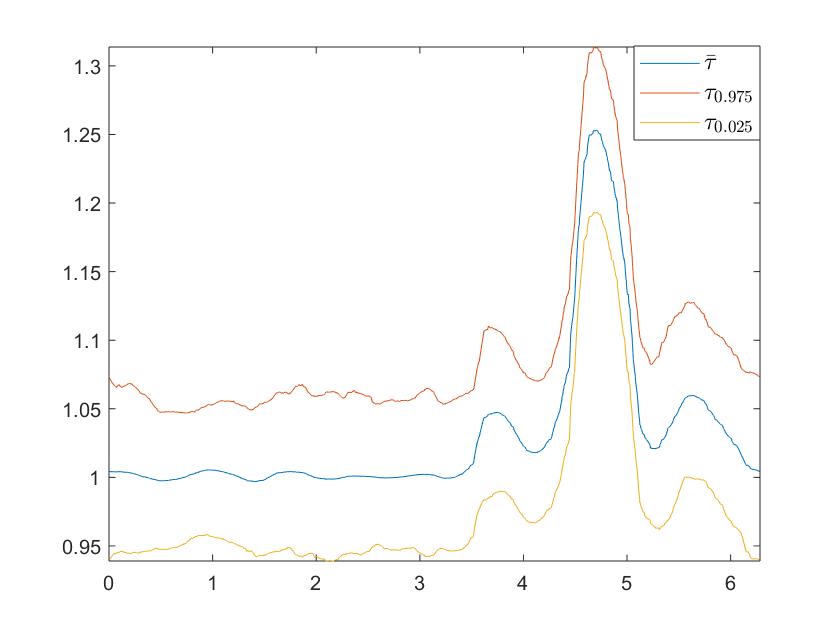}
\vspace{-20pt}
\subcaption{\label{fig:taub}}
\endminipage
\minipage{0.33\textwidth}
  \includegraphics[width=\linewidth]{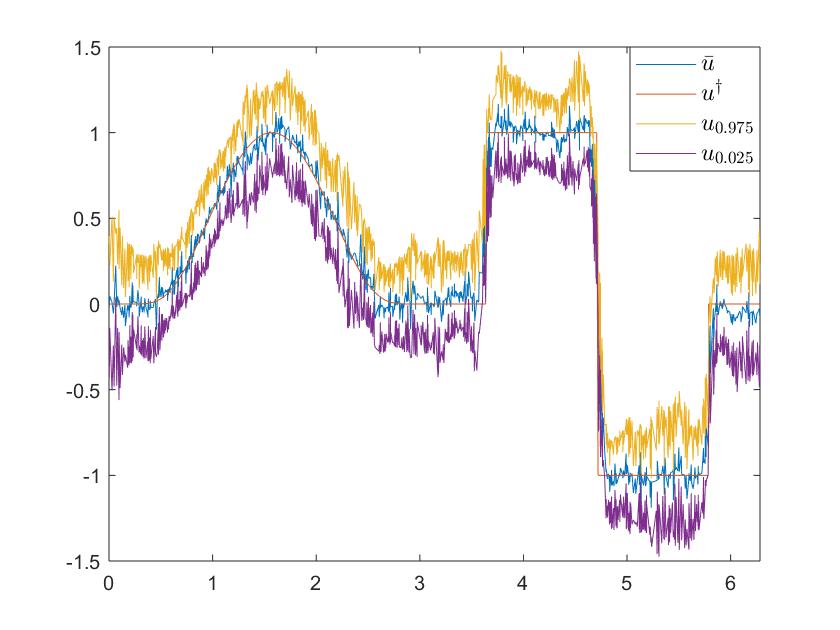}
\vspace{-20pt}
\subcaption{\label{fig:tauc}}
\endminipage
\endminipage
\caption{Posterior means and 95\% credible intervals for (a) signal from nonstationary model; (b) length scale of nonstationary model; (c) signal from stationary model (constant $\tau_n$) when length scale is modeled through $\tau_n$. }
\label{figure:ex1_tau}
\end{figure}

\begin{figure}[!htb]
\minipage{1\textwidth}
\centering
\minipage{0.33\textwidth}
  \includegraphics[width=\linewidth]{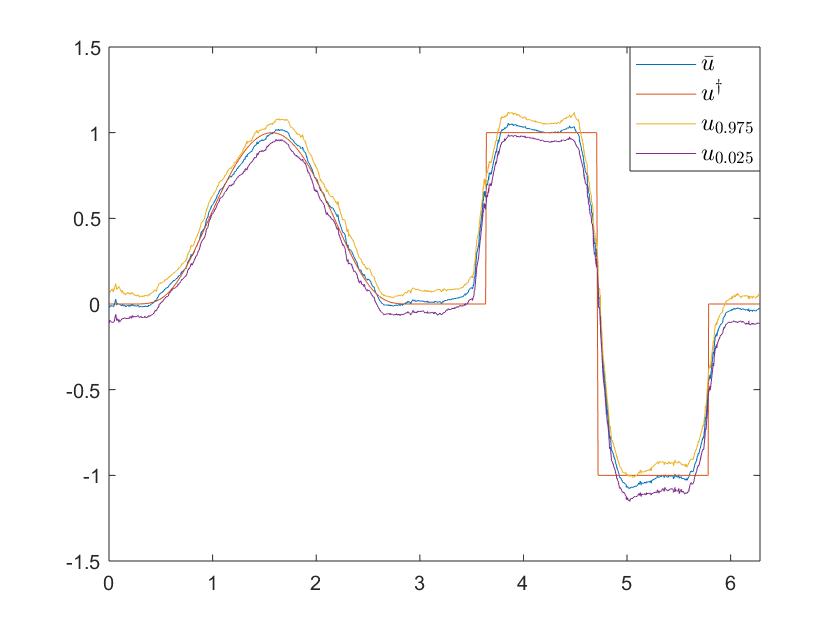}
\vspace{-10pt}\subcaption{\label{fig:kappaa}}
\endminipage
\minipage{0.33\textwidth}
  \includegraphics[width=\linewidth]{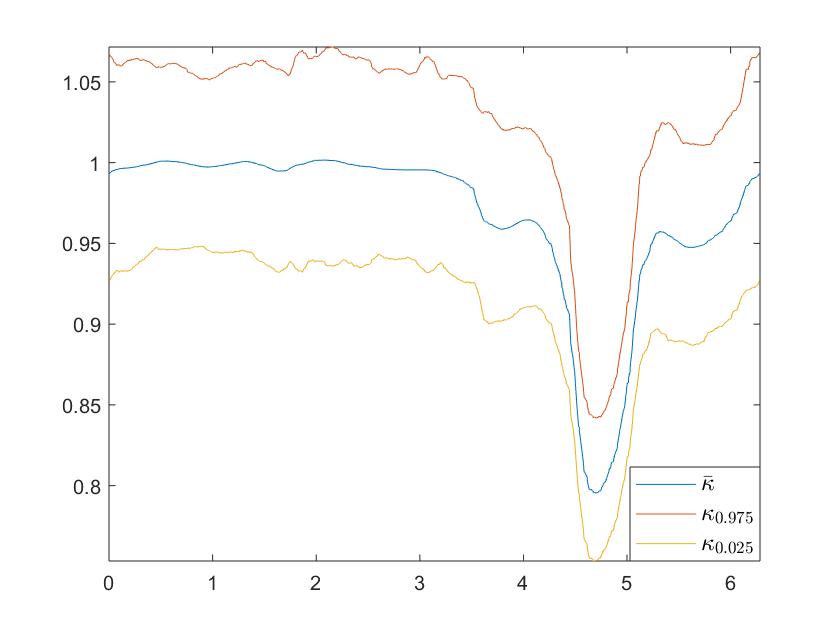}
\vspace{-10pt}\subcaption{\label{fig:kappab}}
\endminipage
\minipage{0.33\textwidth}
  \includegraphics[width=\linewidth]{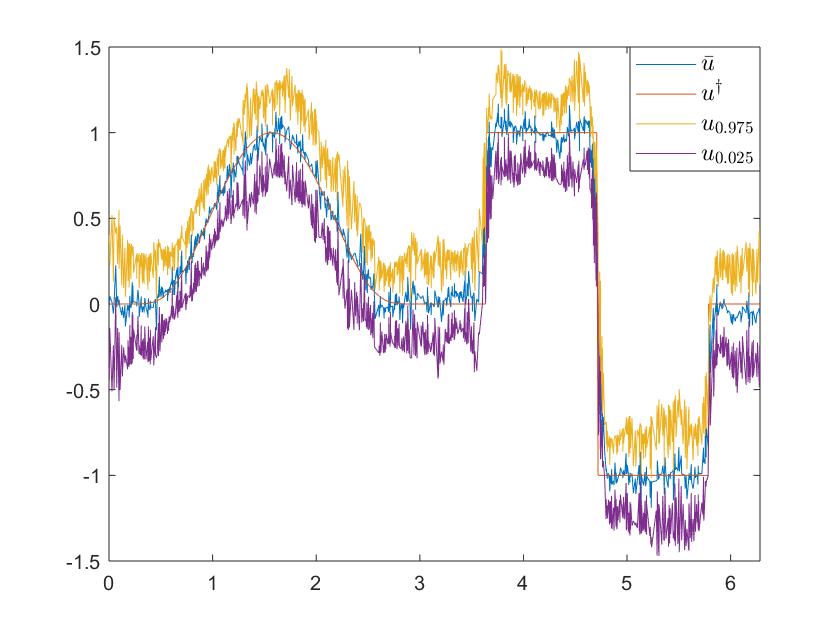}
\vspace{-10pt}\subcaption{\label{fig:kappac}}
\endminipage
\endminipage
\caption{Posterior means and 95\% credible intervals for (a) signal from nonstationary model; (b) length scale of nonstationary model; (c) signal from stationary model constant $\kappa_n$), when length scale is modeled through $\kappa_n$. }
\end{figure}
\begin{remark}
As noted in \cite{harlim2020kernel}[Section 4.5], the hierarchical approach performs poorly if the noise in the observations is large and the latent process is modeled with a constant length scale; the single length scale is blurred by the noise and the model essentially fits the noisy observations, as can be seen from the oscillatory reconstructions in Figures \ref{fig:tauc} and \ref{fig:kappac}. An important observation stemming from the above example is that adding nonstationarity into the length scale may help alleviate such issue.  \qed
\end{remark}

\subsection{Application in Spatial Statistics}    \label{sec:SS}
In this subsection we consider interpolation of county-level precipitation data in the U.S. for January 1981, available from \url{https://www.ncdc.noaa.gov/cag/county/mapping}. Similar problems have been studied in \cite{fuglstad2015does,bolin2019rational} using the SPDE formulation and finite element representations,  adding nodes for triangulation of the space.  Here we shall assume, however, that \emph{only} pairwise distances between counties are available. In particular, in contrast to \cite{fuglstad2015does,bolin2019rational}  we do not assume to have access to the spatial domain or to the locations of the observations,  and we will perform inference without adding artificial nodes. The inter-county distances are available as great circle distances from \url{https://data.nber.org/data/county-distance-database.html} and are only recorded for each pair of counties that are closer than a certain threshold distance apart, which naturally suggests a graph representation. Let $\{x_i\}_{i=1}^{n=3107}$ denote the $n$ counties (excluding Alaska, Hawaii and several other counties that we do not have precipitation data for). We model the precipitation $y$ with a latent Gaussian model, where the $J$ observations are given as noisy perturbations of the latent process $u_n$: 
\begin{align*}
	y|u,\sigma &\sim  \Nc(Su,\sigma^2 I_J),
\end{align*}  
where $S\in \mathbb{R}^{J\times n}$ is a matrix of  0's and 1's that specifies the observation locations. Notice that we have included the noise size $\sigma$ as a hyperparameter to be inferred.  
The latent process $u_n$ will be modeled in four different ways for comparison purposes as in \cite{bolin2019rational}. The idea is to consider a graph Mat\'ern prior for $u_n$, where 
\begin{align*}
	u_n|\theta,s \sim \mathcal{N}\left(0,Q(\theta,s)^{-1}\right), \quad Q(\theta,s) = \tau_n(\theta)^{\frac{m}{4}-\frac{s}{2}}(\tau_n(\theta)+\Delta_n)^{s}\tau_n(\theta)^{\frac{m}{4}-\frac{s}{2}}, \quad m=2,
\end{align*} 
and consider $s$ to be possibly a hyperparameter, while at the same time allow $\tau_n$ to be identically equal to a constant. In other words, we will model $u_n$ as a stationary/nonstationary graph Mat\'ern field with possibly fractional smoothness parameter. In the most general case, the length scale parameter $\tau_n$ is modeled as in \eqref{eq:hyperKL}, where the parameters are chosen as $s_0=2, \nu=0.1, n_0=10$ and the hyperpriors for $\sigma,s$ are chosen as    
\begin{align*}
	\log \sigma & \sim \mathcal{N}( \log(0.01), 1),\\ 
	\log s &\sim \mathcal{N}(\log 2,1). 
\end{align*} 
The marginal variance of $\tau_n$ is then tuned to be on the same order as the 10th eigenvalue of $\Delta_n$. For the stationary case, the modeling for the latent process simplifies to 
\begin{align*}
	u_n|\tau,s &\sim \mathcal{N}(0,Q(\tau,s)^{-1}), \quad Q(\tau,s)= \tau^{\frac{m}{2}-s} (\tau I_n +\Delta_n )^s,
\end{align*}
and a log-normal hyperprior is placed on $\tau$.

For this problem $\Delta_n$ is an unnormalized graph Laplacian $D-W$, with weights 
\begin{align*}
	W_{ij}= \begin{cases}
\exp\left(-\frac{d_{ij}^2}{2 \underline{d}^2}\right), \quad \text{if} \, d_{ij}\, \text{is recorded}, \\
0, \quad \text{otherwise},
	\end{cases} 
\end{align*}
where $d_{ij}$ is the distance between two counties and  $\underline{d}$ is the mean of all the pairwise distances. As mentioned above, the pairwise distances are only recorded for counties that are less than 100 miles apart, which implies that $W$ is sparse (with the percentage of nonzero entries being 1.6\%).
Instead of using an MCMC sampling scheme, we adopt an evidence maximization approach, where we first compute the optimal hyperparameter $(\sigma^*,s^*,\theta^*)$ (for the most general case) by maximizing the marginal posterior $(\sigma,s, \theta)| y$, and then compute the posterior $u_n|y,\sigma^*, s^*, \theta^*$. For the most general case, the marginal posterior of $(\sigma,s,\theta)$ is equal, up to a constant, to
\begin{align*}
	\log \pi(\sigma,s,\theta|y) &=  \log \pi(\sigma,s,\theta)-J\log \sigma + \frac{1}{2\sigma^4} y^TS\tilde{Q}(\sigma,\theta,s)^{-1}S^Ty - \frac{\|y\|_2^2}{2\sigma^2} \\
&+ \frac12 \log\left[\operatorname{det}(Q(\theta,s))-\operatorname{det}(\tilde{Q}(\sigma,\theta,s))\right],
\end{align*}
where $\tilde{Q}(\sigma,\theta,s)= \sigma^{-2}S^TS + Q(\theta,s)$ and the posterior $u_n|y,\sigma^*,s^*,\theta^*$ is a Gaussian $\Nc(\mu,\Sigma)$, where 
\begin{align}
	\mu=[\sigma^*]^{-2}\tilde{Q}(\sigma^*,\theta^*,s^*)^{-1}S^Ty, \quad \Sigma= \tilde{Q}(\sigma^*,\theta^*,s^*)^{-1}. 
\end{align}
The predictive distribution is then the restriction of  $\Nc(\mu, \Sigma)$ to the unobserved nodes, denoted by $\pi_{\text{pred}}$, and techniques for computing partial inverse of a sparse matrix can be applied. 
  We numerically optimize $\log \pi(\sigma,s,\theta|y)$ with the \texttt{fminunc}   function in Matlab. 

\begin{table}[h!]
\centering
\begin{tabular}{ |c|c|c|c| } 
 \hline
   & RMSE  & CRPS  &  LS     \\
 \hline
 Stationary \& $s=2$  & 0.0394 & 0.0199 &  -578.9    \\ 
 \hline
 Stationary \&  Inferred $s$  & 0.0399& 0.0201& -521.7    \\ 
 \hline
 Nonstationary \& $s=2$  & 0.0408 & 0.0185 & -648.5  \\ 
 \hline
 Nonstationary \& Inferred $s$ & 0.0414 & 0.0186& -644.6    \\ 
 \hline
\end{tabular}
\caption{Comparison of the four models through RMSE, CRPS and LS.}
\label{table:ex2}
\end{table}

To perform inference, we first normalize the data $y$ as described in Remark \ref{rmk:MV} so that it has mean-zero and has magnitude at the same level of $u_n$ from the graph Mat\'ern field, in which case we are only interested in the variations of $u_n$. We then adopt a pseudo-crossvalidation by randomly selecting 90\% of the data as observations and make predictions for the remaining ones. The process is repeated 20 times and we evaluate the predictions through the root mean square error (RMSE), the continuous rank probability score (CRPS), and the logarithmic scoring rule (LS) as shown in Table \ref{table:ex2}. The three criteria are considered in \cite{fuglstad2015does, bolin2019rational} for similar comparisons,  with each defined as 
\begin{align*}
\text{RMSE}:& =  \frac{\|y_{\text{test}}-\mathbb{E} \pi_{\text{pred}}\|_2}{\sqrt{n_{\text{test}}}} ,\\
\text{CRPS}:& =  \frac{1}{n_{\text{test}}} \sum_{i=1}^{n_{\text{test}}} \text{crps}(\pi^{(i)}_{\text{pred}},y^{(i)}_{\text{test}}) \\
& =\frac{1}{n_{\text{test}}} \sum_{i=1}^{n_{\text{test}}} \int_{-\infty}^{\infty} \left[\pi_{\text{pred}}^{(i)}(y^{(i)}_{\text{test}})-\mathbf{1}\{y^{(i)}_{\text{test}}\leq t\}\right]^2 dt,\\
\text{LS}  :& =  -\log \pi_{\text{pred}}(y_{\text{test}}),
\end{align*}
where $y_{\text{test}}$ is the test data with size $n_{\text{test}}$ and $\pi_{\text{pred}}^{(i)}$ is the marginal distribution for each unobserved node. Since $\pi_{\text{pred}}$ is Gaussian (and hence its marginals), the above quantities are computable, and the CRPS can be calculated with its representation for Gaussians \cite{gneiting2005calibrated} :
\begin{align*}
	\text{crps}\left(\mathcal{N}(a,b^2),y \right)=  b\left\{ \frac{y-a}{b} \left[2\Phi\left(\frac{y-a}{b}\right)-1\right] +2 \phi\left(\frac{y-a}{b}\right) -\frac{1}{\sqrt{\pi}}\right\}, 
\end{align*}
where $\Phi$ and $\phi$ are the c.d.f. and p.d.f. of the standard normal respectively. 

We notice in Table \ref{table:ex2} that the nonstationary model improves CRPS and LS but not RMSE over the stationary one, as was observed in \cite{fuglstad2015does, bolin2019rational} using finite element representations of GFs. This suggests that adding nonstationarity improves the prediction in a distribution sense. However, inferring the smoothness $s$ from data appears not to improve the predictions, in contrast to the results in \cite{bolin2019rational}. We believe this is due to the different formulations that we are taking, where \cite{bolin2019rational} adopts an SPDE approach and a rational approximation for fractional smoothness. It is also possible that this is in general a feature of the graph representation as similar observations are made in Subsection \ref{sec:ML}. We shall leave more in-depth investigations for future studies.

\begin{remark}
Prediction at new locations can be done by augmenting the graph with new nodes and refitting the model. Interpolation is not readily applicable in this example since we assume to only be given pairwise distances between counties and to not have a continuous representation of the underlying manifold (here the map of the contiguous states of the U.S.). However, in examples where the underlying manifold is known and the graph nodes form a triangulation, we could construct a finite-element basis and carry out interpolation using the basis.
\end{remark}


\subsection{Application in Machine Learning} \label{sec:ML}
In this subsection we illustrate the use of graph Mat\'ern priors in a Bayesian formulation of semi-supervised binary classification \cite{bertozzi2018uncertainty}. We seek to classify images $\{x_i\}_{i=1}^n$ of two different digits of the MNIST dataset given $J\ll n$ noisy labels. Similarly as above, the problem is cast into a latent Gaussian model, where the labels are assumed to be a probit transform of the latent process $u_n$:
\begin{align*}
	y_i &= \operatorname{sign} \big(u_n(x_i) + \eta_i\big), \quad i=1, \ldots,J,
\end{align*}     
where $\eta_i \overset{i.i.d.}{\sim} \mathcal{N}(0,\sigma^2)$. The likelihood model associated with the above equation is 
\begin{align*}
	\pi(y|u_n,\sigma) = \sum_{i=1}^J  \Phi\left(\frac{y_iu_n(x_i)}{\sigma}\right),
\end{align*}
where $\Phi$ is the c.d.f. of the standard normal. As in Subsection \ref{sec:SS}, the latent process $u_n$ will be modeled as a graph Mat\'ern field in four different ways, by considering stationary/nonstationary length scale and fixed/inferred smoothness $s$. For the most general case, the latent process is modeled as 
\begin{align*}
	u_n|\theta,s &\sim \mathcal{N}(0,Q(\theta,s)^{-1}), \quad  Q(\theta,s) = \tau_n(\theta)^{\frac{m}{4}-\frac{s}{2}}(\tau_n(\theta)+\Delta_n)^{s}\tau_n(\theta)^{\frac{m}{4}-\frac{s}{2}},
\end{align*}
and $\tau_n$ is modeled as in \eqref{eq:hyperKL}:
\begin{align*}
	\log \tau_n(\theta)& = \nu^{\frac{s_0}{2}-\frac{m}{4}}\sum_{i=1}^{n_0} \left[\nu + \lambda_n^{(i)}\right]^{-\frac{s_0}{2}} \theta^{(i)} \psi_n^{(i)} 
\end{align*}
with standard normal hyperprior on each of the $\theta_{i}$'s and a log-normal prior for $s$. As  in Remark \ref{rmk:graphnormalization}, the effective dimension $m$ is about 4 and the other parameters are chosen as $s_0=4,\nu=0.1,n_0=10$, with marginal variance of $\tau_n$ tuned empirically. For this problem $\Delta_n$ is taken to be a symmetric $k$-nearest neighbor graph Laplacian $\Delta_n=I-D^{-1/2}WD^{-1/2}$, with self-tuning weights proposed by \cite{zelnik2005self}: 
\begin{align*}
	W_{ij}= 
	\exp \left( \frac{-|x_i-x_j|^2}{2\delta(i)\delta(j)}\right),
\end{align*}
where the images $x_i$'s are viewed as vectors in $\mathbb{R}^{784}$ and $\delta(i)$ is the Euclidean distance between $x_i$ and its $k$-th nearest neighbor. The sparsity of $\Delta_n$ follows from the $k$-nearest neighbor construction. 

Similarly as in Subsection \ref{sec:SS}, we adopt an evidence maximization approach for inferring the optimal hyperparameters, which are then used to find the MAP estimator for $u_n$. However, since the likelihood is non Gaussian, there is no closed form formula for the marginal posterior  of the hyperparameters $\pi(\sigma,\theta,s|y)$, and we then apply a Laplace approximation  \cite{rue2009approximate}. More precisely, denoting all the hyperparameters by $z$, $\pi(z|y)$ is approximated by 
\begin{align}
	\pi(z|y) \propto  \frac{\pi(u_n,z,y)}{\pi(u_n|z,y)} \bigg|_{u_n=u_n^*}  \approx  \frac{\pi(u_n,z,y)}{\tilde{\pi}(u_n|z,y)} \bigg|_{u_n=u_n^*},  \label{eq:pz}
\end{align}
where $u_n^*$ is the mode of $\pi(u_n|z,y)$  and  $\tilde{\pi}(u_n|z,y)$  is its Laplace approximation at $u_n^*$. The log density for $u_n|z,y$ has form 
\begin{align}
	\log \pi(u_n|z,y) \propto \sum_{i=1}^J \Phi\left(\frac{y_iu_n(x_i)}{\sigma}\right)- \frac12 u_n^T Q(\theta,s)u_n,  \label{eq:likelihood}
\end{align}
and the mode $u_n^*$ is found numerically with the Newton's method, where the gradient and Hessian of \eqref{eq:likelihood} are available analytically. 
The logarithm of the last expression in \eqref{eq:pz} is equal up to a constant to 
\begin{align*}
	\log \pi(\sigma,\theta,s) - \frac12 [u_n^*]^T Q(\theta,s)u_n^* +\frac12  \log \left[ \operatorname{det}\left(Q(\theta,s)\right)- \operatorname{det}(\tilde{Q}\left(\sigma,\theta,s)\right) \right]+ \sum_{i=1}^J \log \Phi\left(\frac{y_iu_n^*(x_i)}{\sigma}\right),  
\end{align*}
where $\tilde{Q}(\theta,s)= Q(\sigma,\theta,s)+H$ and $H$ is diagonal with entries 
\begin{align*}
	H_{ii}=\frac{y_iu_n^*(x_i)\phi\left(\sigma^{-1}u_n^*(x_i)\right)}{\sigma^2\Phi\left(\sigma^{-1}y_iu_n^*(x_i)\right)}+ \left[\frac{\phi\left(\sigma^{-1}u_n^*(x_i)\right)}{\sigma\Phi\left(\sigma^{-1}y_iu_n^*(x_i)\right)}\right]^2, \quad i=1,\ldots,J,
\end{align*}
and zero otherwise. 
The priors on $\sigma$ and $s$ are taken to be 
\begin{align*}
	\log \sigma & \sim \mathcal{N}\left( \log(0.1), 1  \right), \\
	\log s      & \sim \mathcal{N}\left( \log(4),   1  \right).
\end{align*}
Table \ref{table:Ex3} shows the classification error rates of the four different models for four pairs of digits, with $n=1000$ and $J=20$,  where each experiment is repeated 100 times.
We see that the nonstationary model improves slightly the performance while the model with inferred smoothness does the opposite. We believe this may be due to the fact that $s=4$ is already a near optimal choice for this problem, or it may also be an intrinsic characteristic of the graph representations as mentioned in Subsection \ref{sec:SS}. Table \ref{table:Ex3_2} shows the classification error rates for the inferred $s$ case when the prior is taken to be narrower:
\begin{align*}
	\log s \sim \mathcal{N}\left( \log 4, 0.01 \right),
\end{align*} 
in which case the nonstationary model with inferred $s$ also improves the prediction. 
\begin{table}[h!]
\centering
\begin{tabular}{ |c|c|c|c|c| } 
 \hline
   & 3\&8  & 5\&8 &  4\&9 & 7\&9   \\
 \hline
 Stationary \& $s=4$           &8.90\%  &8.64\% &17.67\% &10.13\%   \\ 
 \hline
 Stationary \& Inferred $s$    &9.61\%  &9.51\% &18.33\% &11.00\%  \\ 
 \hline
 Nonstationary \& $s=4$        &8.44\%  &7.23\% &17.38\% &9.37\%\\ 
 \hline
 Nonstationary \& Inferred $s$ &8.77\%  &8.12\% &19.92\% &10.70\%   \\ 
 \hline
\end{tabular}
\caption{Classification error rates with 2\% labeled data for different pairs of digits with hyperprior $\log s \sim \mathcal{N}\left( \log 4, 1 \right)$. }
\label{table:Ex3}
\end{table}

\begin{table}[h!]
\centering
\begin{tabular}{ |c|c|c|c|c| } 
 \hline
   & 3\&8  & 5\&8 &  4\&9 & 7\&9   \\
 \hline
 Stationary \& Inferred $s$    &8.91\% &8.66\%  &17.69\% &10.14\%  \\ 
 \hline
 Nonstationary \& Inferred $s$ &8.67\% &7.06\%  &17.54\% &9.71\%   \\ 
 \hline
\end{tabular}
\caption{Classification error rates with 2\% labeled data for different pairs of digits with hyperprior $\log s \sim \mathcal{N}\left( \log 4, 0.01 \right)$.}
\label{table:Ex3_2}
\end{table}

\begin{remark}
Label prediction at new images can be done by augmenting the graph with new nodes and refitting the model, or by using a $k$-NN interpolation as described in 
\cite{trillos2017consistency}.
\end{remark}

\section{Conclusions and Open Directions}\label{sec:conclusions and Open Directions}
This paper introduces graph representations of  Mat\'ern fields motivated by the SPDE approach. We have shown through rigorous analysis that graph  Mat\'ern fields approximate the Mat\'ern model under a manifold assumption, and we have established an explicit rate of convergence. We have emphasized that  graph Mat\'ern models can be used in a wide range of settings, as they generalize the Mat\'ern model to abstract point clouds beyond Euclidean or manifold settings. In addition, graph Mat\'ern fields are GMRFs and therefore numerical linear algebra techniques can be applied to gain speed up by exploiting sparsity.  


We have illustrated  through numerical examples the application of graph Mat\'ern fields in Bayesian inverse problems, spatial statistics and graph-based machine learning, bringing these fields together and transferring ideas among them. The graph Mat\'ern models can be directly implemented on the given point cloud, without any additional pre-processing such as adding nodes for triangulation for FEM methods. We demonstrate through comparisons certain benefits of the nonstationary models, where in particular nonstationary  improves classification accuracy. We believe  that  adding nonstationarity for graph-based learning problems had not been considered before and we hope that our empirical findings will stimulate further research in this area. 


The nonstationarity introduced through $\tau$ had been well-studied in Euclidean settings,  while comparatively less has been said about $\kappa$. We hope to investigate its modeling effects beyond the role as a length scale described in Remark \ref{rmk:tau&kappa} and to consider graph representations of anisotropic models where the Laplacian is replaced by $\nabla \cdot (H(x)\nabla)$. The case where $H$  is a constant positive definite matrix  can be easily dealt with by introducing a coordinate  transformation by $H^{-1/2}$. However, the general case  where $H$ is a function of the spatial variable is more involved and further research is needed. 
 
Another direction for further research is to investigate in more detail the case where the points are distributed according to a non-uniform density. As noted in Remark \ref{rmk:density}, one can normalize the weights to remove the effects of the density, aiming at recovering the Laplacian. A more interesting question is whether the density can be incorporated as part of the continuum operator that will lead to meaningful Mat\'ern type field. Especially for machine learning applications, we  wonder if the density of the point cloud can be exploited in the construction of prior distributions.

\section*{Acknowledgement}  
   DSA is thankful for the support of NSF and NGA through the grant DMS-2027056. 
The work of DSA was also partially supported by the NSF Grant DMS-1912818/1912802.

\bibliographystyle{abbrvnat}

\bibliography{references}

\newpage
\begin{appendix}

In this Appendix we prove the main theorems in Section \ref{sec:theory}. Section \ref{sec:APrelim} makes precise our  setting and assumptions, defines several quantities of interest and presents other necessary preliminaries. Sections  \ref{sec:AevalRate} and \ref{sec:AefunRate} contain proofs for the spectral convergence of $L^{\tau,\kappa}_n$ towards $\L^{\tau,\kappa}$. Finally, Section \ref{sec:APTL2} gives the proof of Theorem \ref{thm:Rate}. 
\section{Preliminaries} \label{sec:APrelim}

Suppose $\M$ is an $m$-dimensional smooth, connected, compact manifold without boundary embedded in $\mathbb{R}^d$, with the absolute value of sectional curvature bounded by $K$ and Riemannian metric inherited from $\mathbb{R}^d$.  Let $\{x_n\}_{n=1}^{\infty}$ be a sequence of independent samples from the uniform distribution $\gamma$ on $\M$,  i.e. the normalized volume measure.   Denote $\M_n:=\{x_1,\ldots,x_n\}$ and let $\gamma_n:=\frac{1}{n}\sum_{i=1}^n \delta_{x_i}$ be the empirical distribution of the point cloud $\M_n$. Throughout $\tau$ and $\kappa$ will denote Lipschitz continuous functions on $\M$ with $\kappa$  being continuously differentiable.  We assume that both functions are bounded from below by positive constants, so there exist $\alpha,\beta>0$ with $\frac{1}{\beta} \leq \tau \leq \beta$, $\frac{1}{\alpha} \leq \kappa \leq  \alpha$. We will analyze  the following operators 
\begin{align*}
    \L^{\tau,\kappa}&:=\tau I- \nabla \cdot(\kappa\nabla),  \\ 
    L^{\tau,\kappa}_n&:=\tau_n+\Delta^{\kappa}_n,
\end{align*}
where $\tau_n:=$ diag$\bigl(\tau(x_1),\ldots,\tau(x_n)\bigr)$ and differentiation is defined on the manifold (see e.g. \cite{nicolaescu2020lectures}).  Here $\Delta_n^{\kappa}=D-W \in \R^{n \times n}$ and the entries of $D$ and $W$ are given by
\begin{align*}
    W_{ij}&:=\frac{2(m+2)}{n\nu_m h_n^{m+2}} \mathbf{1}\{d_{\M}(x_i,x_j)<h_n\} \sqrt{\kappa(x_i)\kappa(x_j)}\,\,, \\
    D_{ii}&:=\sum_{j=1}^n W_{ij},
\end{align*}
where $\nu_m$ is the volume of the $m-$dimensional unit ball and $d_{\M}$ is the geodesic distance on $\M$. 
\begin{remark}
In this appendix we consider the weights $W_{ij}$ to be defined through the geodesic, rather than the Euclidean distance in \eqref{eq:defweights}. Since in small neighborhoods both distances agree up to a correction term that is of a higher order than our interest \cite{trillos2019error}, our results would also hold for the weights in \eqref{eq:defweights}. We choose to work with the geodesic distance to streamline our presentation. 
\end{remark}
As discussed in Section \ref{sec:theory}, the scaling of $h_n$ will be chosen so that 
\begin{align}
    \frac{(\log n)^{c_m}}{n^{1/m}} \ll h_n \ll \frac{1}{n^{1/s}}, \label{eq:hnScaling2}
\end{align}
where $c_m=3/4$ if $m=2$ and $c_m=1/m$ otherwise. We recall that the scaling of $h_n$ in \eqref{eq:hnScaling2} implies that the $\infty$-OT distance between $\gamma_n$ and $\gamma$ satisfies  $\eps_n = d_\infty(\gamma_n,\gamma)\ll h_n.$ In what follows we assume that we are in a realization where the conclusion of Proposition \ref{prop:transmap} holds and we let $\{T_n\}_{n=1}^\infty$ be a sequence of transport maps satisfying the bound \eqref{eq:transmap}.

We will use the following inner products and  induced norms on continuum and discrete spaces 
\begin{align*}
    \langle f,g\rangle_{L^2}:=\int f(x)g(x) d\gamma(x),\quad \langle f,g\rangle_{\tau}&:= \int f(x)g(x)\tau(x)d\gamma(x),\quad \langle f,g\rangle_{\kappa}:= \int f(x)g(x)\kappa(x)d\gamma(x), \\
    \langle v,w \rangle_2 := \frac{1}{n} \sum_{i=1}^n v(x_i)w(x_i),\quad  \langle v,w\rangle_{\tau_n}&:=\frac{1}{n} \sum_{i=1}^n v(x_i)w(x_i)\tau(x_i),\quad \langle v,w\rangle_{\kappa_n}:=\frac{1}{n} \sum_{i=1}^n v(x_i)w(x_i)\kappa(x_i).
\end{align*}
Notice that $\L^{\tau,\kappa}: \mathcal{D}(\L^{\tau,\kappa}) \subset L^2(\gamma) \to L^2(\gamma)$,  where $\mathcal{D}(\L^{\tau,\kappa})$ is the domain of definition,  is self-adjoint with respect to the $\langle \cdot,\cdot\rangle_{L^2}$ inner-product and has a compact resolvent; we will denote by $\{\lambda^{(k)}\}_{k = 1}^\infty$ and $\{\psi^{(k)}\}_{k = 1}^\infty$ its eigenvalues and eigenfunctions and recall that from standard theory 
$$\mathcal{D}(\L^{\tau,\kappa}) = \Bigl\{ f\in L^2(\gamma):  \sum_{k = 1}^\infty \left[\lambda^{(k)}\right]^2 \langle f, \psi^{(k)} \rangle_{L^2}^2 <\infty   \Bigr\}.$$
We refer to \cite{nicolaescu2020lectures}[Section 10.4.2] for more details. Similarly, $L^{\tau,\kappa}_n$ is self-adjoint with respect to $\langle\cdot,\cdot\rangle_2.$ 
By the minimax principle we can characterize the $k$-th smallest eigenvalues of $\L^{\tau,\kappa}$ and $L^{\tau,\kappa}_n$ by
\begin{align*}
	\lambda^{(k)}& = \underset{\mathcal{V}:\operatorname{dim}(\mathcal{V})=k}{\operatorname{min}} \,\,\underset{f\in \mathcal{V}\backslash 0} {\operatorname{max}} \frac{\langle f,\L^{\tau,\kappa}f\rangle_{L^2}}{\langle f, f\rangle_{L^2}},\\
	\lambda_{n}^{(k)}& = \underset{V:\operatorname{dim}(V)=k}{\operatorname{min}}\,\, \underset{v\in V\backslash 0} {\operatorname{max}} 
	\frac{\langle v, L^{\tau,\kappa}_nv\rangle_2}{\langle v, v\rangle_2}.
\end{align*}
We define the continuum and the discrete Dirichlet energies
\begin{align*}
    D[f]:=& \langle f,\L^{\tau,\kappa} f \rangle_{L^2}=\int \tau(x) f(x)^2 + \int \kappa(x) |\nabla f|^2=\|f\|^2_{\tau}+\|\nabla f\|_{\kappa}^2=:D^0[f]+D^1[f], \\
    D_{h_n}[v]:=& \langle v,L^{\tau,\kappa}_nv \rangle_2 =\frac{1}{n} \sum_{i=1}^n \tau(x_i)v(x_i)^2 + \frac{1}{n} \sum_{i=1}^n\sum_{j=1}^n  W_{ij}\left|v(x_i)-v(x_j)\right|^2
=:D_{h_n}^{0}[v] +D_{h_n}^1[v]. 
\end{align*}
For the lemmas and theorems below, we shall denote by $C_{\M}, C_{\M,\tau,\kappa}$ etc.  constants that  depend on the corresponding subscripts.  

\section{Convergence of Spectrum with Rate} \label{sec:AevalRate}
In this section we prove Theorem \ref{thm:evalRate} by establishing a lower and an upper bound on the eigenvalues of $L^{\tau,\kappa}_n$ in terms of those of $\L^{\tau,\kappa}$. With the above definitions of the Dirichlet energies, the eigenvalues have the characterizations 
\begin{align*}
	\lambda^{(k)}= \underset{\mathcal{V}:\operatorname{dim}(\mathcal{V})=k}{\operatorname{min}} \,\,\underset{f\in \mathcal{V}\backslash 0} {\operatorname{max}} \frac{D[f]}{\|f\|^2_{L^2}}, \quad \quad 
	\lambda_{n}^{(k)}= \underset{V:\operatorname{dim}(V)=k}{\operatorname{min}}\,\, \underset{v\in V\backslash 0} {\operatorname{max}} 
	\frac{D_{h_n}[v]}{ \|v\|_2^2}.
\end{align*}
 In order to compare the Dirichlet energies $D$ and $D_{h_n}$, we need an intermediate quantity defined by
\begin{align*}
    E_{r}[f]:=\int_{\M} \int_{B_{r}(x)} |f(x)-f(y)|^2 \sqrt{\kappa(x)\kappa(y)} d\gamma(y)d\gamma(x).
\end{align*}
Notice that $D_{h_n}^1$ can be seen as a finite sample approximation of $E_{h_n}$ up to a multiplicative constant. The following lemma, which can be proved with the same argument as  \cite{trillos2019error}[Lemma 5],  connects $E_{h_n}$ with $D^1$.   
\begin{lemma} \label{lemma:En}
For $f\in L^2(\gamma)$ and $r<2h_n,$  
\begin{align*}
    E_{r}[f] \leq (1+C_{\M,\kappa}h_n)\frac{\nu_mr^{m+2}}{m+2} D^1[f].
\end{align*}
\end{lemma}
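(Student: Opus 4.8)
The plan is to follow the argument of \cite{trillos2019error}[Lemma 5], which combines the fundamental theorem of calculus along minimizing geodesics with a change of variables through the exponential map. Since $E_r$ is a bounded quadratic form on $L^2(\gamma)$ while $D^1$ is lower semicontinuous on $H^1(\M)$, I may assume $D^1[f]<\infty$, i.e. $f\in H^1(\M)$; then, approximating $f$ by smooth functions in $H^1(\M)$, it suffices to prove the bound for $f\in C^\infty(\M)$. I will also assume $h_n$ small enough that $2h_n$ is below the injectivity radius of $\M$ (possible by compactness and the curvature bound), so that the relevant exponential maps are diffeomorphisms on the balls that appear.

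Fix $x\in\M$. For $y\in B_r(x)$ write $y=\exp_x(v)$ with $v\in T_x\M$, $|v|_g=d_\M(x,y)<r<2h_n$, and let $\gamma_v(t)=\exp_x(tv)$ be the constant-speed minimizing geodesic from $x$ to $y$. The fundamental theorem of calculus and the Cauchy--Schwarz inequality give
\[
    |f(x)-f(y)|^2=\left|\int_0^1\langle\nabla f(\gamma_v(t)),\dot\gamma_v(t)\rangle_g\,dt\right|^2\le\int_0^1\langle\nabla f(\gamma_v(t)),\dot\gamma_v(t)\rangle_g^2\,dt.
\]
It is important to keep the inner product rather than bound it by $|v|_g^2\,|\nabla f(\gamma_v(t))|^2$: only the former yields the sharp leading constant $\nu_m/(m+2)$ instead of $m\nu_m/(m+2)$. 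Substituting this into the definition of $E_r[f]$, using $d\gamma(y)=J_x(v)\,dv$ with $|J_x(v)-1|\le C_\M|v|^2\le C_\M h_n$ (standard Jacobian comparison estimates under the curvature bound, using $\operatorname{vol}(\M)=1$), and using that $x$, $y$ and $\gamma_v(t)$ lie pairwise within distance $2h_n$, so that the Lipschitz bound and lower bound on $\kappa$ give $\sqrt{\kappa(x)\kappa(y)}\le(1+C_{\M,\kappa}h_n)\,\kappa(\gamma_v(t))$ for all $t\in[0,1]$, reduces the problem to bounding
\[
    \int_\M\int_{B_r(0)\subset T_x\M}\int_0^1\langle\nabla f(\exp_x(tv)),\dot\gamma_v(t)\rangle_g^2\,dt\,dv\,d\gamma(x)
\]
by $\tfrac{\nu_m r^{m+2}}{m+2}\,D^1[f]$, up to a further factor $1+C_{\M,\kappa}h_n$.

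To evaluate this, fix $t\in(0,1]$, substitute $w=tv$ (so $dv=t^{-m}\,dw$ and $w$ ranges over $B_{tr}(0)$), apply Fubini to exchange the $t$- and $w$-integrals using $w\in B_{tr}(0)\iff t>|w|/r$, and compute $\int_{|w|/r}^1 t^{-m-2}\,dt=\tfrac1{m+1}\bigl((r/|w|)^{m+1}-1\bigr)$; retaining the $-1$ here is precisely what converts $\tfrac1{m+1}$ into $\tfrac1{m+1}\bigl(1-\tfrac1{m+2}\bigr)=\tfrac1{m+2}$. Passing to geodesic polar coordinates $w=\rho\omega$ in $T_x\M$ and integrating out $\rho$ leaves an integral over $\omega\in S^{m-1}$ of $\langle\nabla f,\cdot\rangle_g^2$ which, by the rotational symmetry of the sphere, equals $\nu_m|\nabla f|^2$ up to curvature corrections of order $h_n$; undoing the exponential-map change of variables then turns the remaining spatial integral into $\int_\M\kappa\,|\nabla f|^2\,d\gamma=D^1[f]$, again with a $1+C_\M h_n$ Jacobian factor. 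Collecting all the $1+O(h_n)$ factors yields the claim.

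The main obstacle is the curved-manifold bookkeeping in the last step: unlike the Euclidean case, the vectors $\dot\gamma_v(t)$ are parallel transports of $v$ rather than fixed vectors, so the rotational-symmetry identity that makes the Euclidean computation exact only holds up to $O(h_n)$ curvature corrections, and one must check that every Jacobian and every $\kappa$-variation factor is of the form $1+O(h_n)$ (which uses $r<2h_n$ together with $\kappa$ being Lipschitz and bounded below). These estimates are carried out in detail in \cite{trillos2019error}[Lemma 5], and the remaining computations here are identical.
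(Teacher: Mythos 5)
Your proposal reproduces, in detail, the argument the paper itself cites for this lemma: the paper's proof of Lemma~\ref{lemma:En} consists of one line asserting it follows "with the same argument as" \cite{trillos2019error}[Lemma 5], and your sketch is exactly that argument, adapted to the $\kappa$-weighted energy (FTC along geodesics, Cauchy--Schwarz keeping the directional derivative, $w=tv$ plus Fubini to produce $1/(m+2)$, spherical averaging for $\nu_m$, curvature Jacobian corrections, and the Lipschitz bound $\sqrt{\kappa(x)\kappa(y)}\le(1+C_{\M,\kappa}h_n)\kappa(\gamma_v(t))$ along the connecting geodesic). The reasoning and bookkeeping are correct and match the paper's intended proof.
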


\subsection{Upper Bound}
To start with, define the projection map $P: L^2(\gamma)\rightarrow L^2(\gamma_n)$ by 
\begin{align*}
    Pf(x_i):=n\int_{U_i} f(x) d\gamma(x),
\end{align*}
where $U_i=T_n^{-1}(\{x_i\})$ and $\{T_n\}_{n=1}^{\infty}$ is a sequence of transportation maps as in Proposition \ref{prop:transmap}. The $U_i$'s are called transportation cells. 

\begin{lemma}[Discrete Dirichlet Energy Upper Bound] \label{lemma:evalUB}
Let $f\in H^1(\gamma)$. 
\begin{enumerate}
    \item $\Big|\|Pf\|_{2}-\|f\|_{L^2} \Big| \leq C_{\M}\eps_n \|\nabla f\|_{L^2}$. 
    \item $D_{h_n}[Pf] \le  \left[1+C_{\M,\tau,\kappa}\left(h_n+\frac{\epsilon_n}{h_n}\right) \right]D[f].$
\end{enumerate}
\end{lemma}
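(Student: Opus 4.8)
The plan is to transfer the discrete quantities to $L^2(\gamma)$ by means of the measure‑preserving map $T_n$. Write $\bar f_i$ for the $\gamma$‑average of $f$ over the transportation cell $U_i := T_n^{-1}(\{x_i\})$, so that $Pf(x_i) = \bar f_i$ and, since $\gamma_n = T_{n\sharp}\gamma$, $\|Pf\|_2 = \|Pf\circ T_n\|_{L^2} = \bigl\|\sum_i \bar f_i \mathbf{1}_{U_i}\bigr\|_{L^2}$, the last function being the $L^2(\gamma)$‑orthogonal projection $\Pi f$ of $f$ onto functions constant on the cells. Part 1 then follows from the reverse triangle inequality $\bigl|\,\|Pf\|_2 - \|f\|_{L^2}\bigr| = \bigl|\,\|\Pi f\|_{L^2} - \|f\|_{L^2}\bigr| \le \|f - \Pi f\|_{L^2}$ together with the approximation estimate $\|f-\Pi f\|_{L^2}^2 = \sum_i \int_{U_i} |f - \bar f_i|^2\,d\gamma \le C_\M \eps_n^2 \|\nabla f\|_{L^2}^2$. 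For the latter I would apply a Poincar\'e inequality on each cell --- using that $U_i \subset B_\M(x_i,\eps_n)$ by Proposition \ref{prop:transmap} and that $\M$ has bounded geometry, so the Poincar\'e constants of small sets are uniformly $O(\eps_n^2)$ --- and then sum over the \emph{disjoint} cells $U_i$, which recovers $\|\nabla f\|_{L^2}^2$ exactly; this step is carried out along the lines of \cite{trillos2019error}.

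For Part 2 I would split $D_{h_n}[Pf] = D_{h_n}^0[Pf] + D_{h_n}^1[Pf]$ and estimate the two pieces separately. The zeroth‑order piece is easy: Jensen's inequality gives $\bar f_i^2 \le \tfrac{1}{\gamma(U_i)}\int_{U_i} f^2\,d\gamma$, and the Lipschitz bound $\tau(x_i) \le (1 + \beta\,\mathrm{Lip}(\tau)\,\eps_n)\,\tau(x)$ for $x \in U_i$ (using $\tau \ge 1/\beta$), whence, since $\gamma(U_i)=1/n$, $D_{h_n}^0[Pf] = \tfrac1n \sum_i \tau(x_i)\bar f_i^2 \le (1 + C_{\M,\tau}\eps_n)\int_\M \tau f^2\,d\gamma = (1 + C_{\M,\tau}\eps_n)\,D^0[f]$. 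For the Dirichlet piece I would write the increment of $Pf$ as a double average, $\bar f_i - \bar f_j = n^2\int_{U_i}\int_{U_j}\bigl(f(x)-f(y)\bigr)\,d\gamma(y)\,d\gamma(x)$, apply Cauchy--Schwarz to get $|\bar f_i - \bar f_j|^2 \le n^2 \int_{U_i}\int_{U_j}|f(x)-f(y)|^2\,d\gamma(y)\,d\gamma(x)$, and substitute into $D_{h_n}^1[Pf] = \langle Pf,\Delta_n^\kappa Pf\rangle_2$. Expanding the Dirichlet form, unfolding the weights, and using the two elementary facts that (i) $x \in U_i,\ y\in U_j,\ d_\M(x_i,x_j) < h_n$ force $d_\M(x,y) < h_n + 2\eps_n =: r$, and (ii) $\sqrt{\kappa(x_i)\kappa(x_j)} \le (1 + C_{\M,\kappa}\eps_n)\sqrt{\kappa(x)\kappa(y)}$ by Lipschitzness and positivity of $\kappa$, together with the fact that $\{U_i \times U_j\}$ partitions $\M\times\M$, leads to $D_{h_n}^1[Pf] \le \frac{(1 + C_{\M,\kappa}\eps_n)(m+2)}{\nu_m h_n^{m+2}}\,E_r[f]$.

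It remains to feed this into Lemma \ref{lemma:En}. Since $\eps_n \ll h_n$, we have $r = h_n + 2\eps_n < 2h_n$ for $n$ large, hence $E_r[f] \le (1 + C_{\M,\kappa}h_n)\frac{\nu_m r^{m+2}}{m+2}\,D^1[f]$; multiplying out and using $(r/h_n)^{m+2} = (1 + 2\eps_n/h_n)^{m+2} \le 1 + C_m\,\eps_n/h_n$ gives $D_{h_n}^1[Pf] \le [1 + C_{\M,\kappa}(h_n + \eps_n/h_n)]\,D^1[f]$. Adding the zeroth‑order bound, and absorbing $\eps_n \le h_n + \eps_n/h_n$ (valid for $n$ large since $\eps_n \ll h_n$) into the constant, yields $D_{h_n}[Pf] \le [1 + C_{\M,\tau,\kappa}(h_n + \eps_n/h_n)]\bigl(D^0[f] + D^1[f]\bigr) = [1 + C_{\M,\tau,\kappa}(h_n + \eps_n/h_n)]\,D[f]$, which is the claim.

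The main obstacle is the logarithm‑free Poincar\'e estimate on the transportation cells underlying Part 1: one must work with the disjoint cells $U_i$ themselves rather than with the enclosing geodesic $\eps_n$‑balls, since the latter overlap $\sim (\log n)^{m c_m}$ times at the relevant scaling and would contaminate the bound with a spurious logarithmic factor. Everything in Part 2 is routine bookkeeping of multiplicative errors; the one point to watch is that the radius inflation $r = h_n + 2\eps_n$ must be carried through Lemma \ref{lemma:En} so that the final estimate retains the form ``$1 + \text{small}$'' rather than acquiring a spurious multiplicative constant greater than $1$.
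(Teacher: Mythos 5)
Your treatment of Part 2 follows the paper's proof essentially step for step: the same H\"older/Jensen bound for $D^0_{h_n}[Pf]$ with the Lipschitz estimate $\tau(x_i)\le(1+\beta\,\mathrm{Lip}(\tau)\,\eps_n)\tau(x)$, the same double-average representation of $Pf(x_i)-Pf(x_j)$ followed by Cauchy--Schwarz, the same pointwise bound $\sqrt{\kappa(x_i)\kappa(x_j)}\le(1+C_{\M,\kappa}\eps_n)\sqrt{\kappa(x)\kappa(y)}$, the same enclosure $V(x)\subset B_{h_n+2\eps_n}(x)$, and the same application of Lemma~\ref{lemma:En} with the radius inflation $r=h_n+2\eps_n$. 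There is no daylight between your argument and the paper's here.

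For Part 1 the paper simply cites \cite{burago2013graph}[Lemma~4.3(1)], so your sketch is more explicit than the paper itself, and your identification of $Pf\circ T_n$ with the $L^2(\gamma)$-orthogonal projection onto cellwise-constant functions is a clean way to reduce to $\|f-\Pi f\|_{L^2}$. Your warning about working on the \emph{disjoint} cells rather than the enclosing $\eps_n$-balls, to avoid a spurious $(\log n)^{m c_m/2}$ overlap factor, is also an apt observation --- that exact loss is what shows up in the $\mathcal{T}_n$ version (Theorem~\ref{thm:Rate2}). However, the step ``$U_i\subset B_\M(x_i,\eps_n)$ plus bounded geometry $\Rightarrow$ the Poincar\'e constant of $U_i$ is $O(\eps_n^2)$'' is not automatic and, as stated, is incorrect in general: transport cells are merely measurable sets of small diameter and can be highly non-convex (or even disconnected up to null sets), and arbitrary subsets of a small ball can have Poincar\'e constants far exceeding the square of the diameter --- think of two lobes joined by a thin neck. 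The argument in the literature you defer to does not invoke a cell-by-cell Poincar\'e inequality; it goes through the nonlocal Dirichlet form (essentially $E_r$, as in Lemma~\ref{lemma:En}) together with the equal-measure structure $\gamma(U_i)=1/n$, so the ``overlap'' factor that would otherwise appear is absorbed by $1/\gamma(U_i)$ rather than by a shape-regularity assumption on the cells. So your instinct about where the danger lies is right, but the fix you name would need justification you don't supply; since you explicitly defer to \cite{trillos2019error}, as the paper defers to \cite{burago2013graph}, this is a blemish in the sketch rather than a hole in the proposal.
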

\begin{proof}
The first statement is proved in \cite{burago2015graph}[Lemma 4.3(1)]. The second statement will be proved by combining upper bounds for $D^0_{h_n}[Pf]$ and $D^1_{h_n}[Pf].$ First, by H\"older's inequality and the fact that $\gamma(U_i)=\gamma_n(\{x_i\})=1/n,$ 
\begin{align}
	D^0_{h_n}[Pf] 
	= \frac{1}{n} \sum_{i=1}^n \tau(x_i) Pf(x_i)^2 &= \frac{1}{n} \sum_{i=1}^n \tau(x_i) n^2\left|\int_{U_i} f(x)d\gamma(x)\right|^2 \nonumber \\
	& \leq \sum_{i=1}^n  \tau(x_i)\int_{U_i} f(x)^2 d\gamma(x)\nonumber\\
	& \leq \left[1 + \text{Lip}(\tau)\beta \eps_n  \right]\sum_{i=1}^n \int_{U_i} \tau(x)f(x)^2d\gamma(x) \nonumber\\
	& = \left[1 + \text{Lip}(\tau)\beta \eps_n  \right]D^0[f].   \label{eq:U-D1}
\end{align}
For the upper bound on $D^1_{h_n}[Pf]$, observe that 
\begin{align*}
    Pf(x_i)-Pf(x_j)=n^2 \int_{U_i} \int_{U_j} f(y)-f(x) d\gamma(y)d\gamma(x),
\end{align*}
which implies
\begin{align*}
    |Pf(x_i)-Pf(x_j)|^2\leq n^2 \int_{U_i} \int_{U_j} |f(y)-f(x)|^2 d\gamma(y)d\gamma(x).
\end{align*}
By Lipschitz continuity of $\kappa$,  
\begin{align*}
    D_{h_n}^1[Pf]&=\frac{m+2}{n^2\nu_m h_n^{m+2}} \sum_{i=1}^n \sum_{j=1}^n \sqrt{\kappa(x_i)\kappa(x_j)}\mathbf{1}\{d_{\M}(x_i,x_j)<h_n\}|Pf(x_i)-Pf(x_j)|^2\\
    &\leq \frac{m+2}{\nu_m h_n^{m+2}} \sum_{i=1}^n \sum_{j=1}^n \sqrt{\kappa(x_i)\kappa(x_j)}\mathbf{1}\{d_{\M}(x_i,x_j)<h_n\} \int_{U_i} \int_{U_j} |f(y)-f(x)|^2 d\gamma(y)d\gamma(x) \\
	& \leq \left[1+\text{Lip}(\kappa)\alpha\eps_n\right] \frac{m+2}{\nu_m h_n^{m+2}} \sum_{i=1}^n \sum_{j=1}^n \mathbf{1}\{d_{\M}(x_i,x_j)<h_n\} \int_{U_i} \int_{U_j}|f(y)-f(x)|^2 \sqrt{\kappa(x)\kappa(y)} d\gamma(y)d\gamma(x) \\
    & \leq \left[1+\text{Lip}(\kappa)\alpha\eps_n\right] \frac{m+2}{\nu_m h_n^{m+2}}  \int_{\mathcal{M}} \int_{V(x)} |f(y)-f(x)|^2 \sqrt{\kappa(x)\kappa(y)} d\gamma(y)d\gamma(x),
\end{align*}
where if $x\in U_i$ then $V(x)=\bigcup _{j:j\sim i} U_j$, with $j\sim i$ meaning $d_{\M}(x_i,x_j)<h_n$.  In the second to last step we have used that 
\begin{align*}
    \left|\sqrt{\kappa(x_i)\kappa(x_j)}-\sqrt{\kappa(x)\kappa(y)}\right| &\leq \left|\sqrt{\kappa(x_i)\kappa(x_j)}-\sqrt{\kappa(x_i)\kappa(y)}\right| + \left|\sqrt{\kappa(x_i)\kappa(y)}-\sqrt{\kappa(x)\kappa(y)}\right| \\
    &\leq \sqrt{\alpha} \frac{|\kappa(\smash{x_j})-\kappa(y)|}{\sqrt{\kappa(\smash{x_j})}+\sqrt{\kappa(y)}} +\sqrt{\alpha} \frac{|\kappa(x_i)-\kappa(x)|}{\sqrt{\kappa(x_i)}+\sqrt{\kappa(x)}} 
    \leq \text{Lip}(\kappa)\alpha \eps_n.
\end{align*}  Notice that $V(x)\subset B_{h_n+2\epsilon_n}(x)$ and hence  
\begin{align}
    D_{h_n}^1[Pf]&\leq  \left[1+\text{Lip}(\kappa)\alpha\eps_n\right] \frac{m+2}{\nu_m h_n^{m+2}} \int_{\mathcal{M}} \int_{B_{h_n+2\epsilon_n}(x)} |f(y)-f(x)|^2 \sqrt{\kappa(x)\kappa(y)} d\gamma(y)d\gamma(x) \nonumber\\
    &= \left[1+\text{Lip}(\kappa)\alpha\eps_n\right]\frac{m+2}{\nu_m h_n^{m+2}}  E_{h_n+2\epsilon_n} [f] \nonumber\\ 
    & \leq \Big[1+\text{Lip}(\kappa)\alpha\eps_n\Big]\Big[1+C_{\M,\kappa}(h_n+2\eps_n)\Big] \left(\frac{h_n+2\epsilon_n}{h_n}\right)^{m+2}D^1[f]\nonumber\\
    & \leq \left[1+C_{\M,\kappa}\left(h_n+\frac{\epsilon_n}{h_n}\right) \right]D^1[f] \label{eq:U-D2},
\end{align}
where we have used Lemma \ref{lemma:En} and the assumption that $\eps_n\ll h_n$. The result follows by combining \eqref{eq:U-D1} and \eqref{eq:U-D2}. 
\end{proof}

\begin{corollary}[Upper Bound] \label{cor:evalUB}
Suppose $k:=k_n$ is such that $ \eps_n \sqrt{\lambda^{(k_n)}}\ll 1$ for $n$ large. Then  
\begin{align*}
    \lambda^{(k)}_{n} \leq \left[ 1+C_{\M,\tau,\kappa}\left(h_n+ \frac{\epsilon_n}{h_n} +\sqrt{\lambda^{(k)}} \epsilon_n\right) \right]\lambda^{(k)}.
\end{align*}
\end{corollary}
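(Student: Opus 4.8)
The plan is to deduce the corollary from the two parts of Lemma~\ref{lemma:evalUB} via the Courant--Fischer minimax characterization of $\lambda_n^{(k)}$. Let $\mathcal{V} := \operatorname{span}\{\psi^{(1)},\dots,\psi^{(k)}\}\subset L^2(\gamma)$ be the span of the first $k$ continuum eigenfunctions, each of which lies in $H^1(\gamma)$ since $D[\psi^{(i)}]=\lambda^{(i)}<\infty$, and let $V := P\mathcal{V}\subset L^2(\gamma_n)$ be its image under the projection $P$. For any $f=\sum_{i=1}^k c_i\psi^{(i)}\in\mathcal{V}$ I would record the elementary bounds $\|f\|_{L^2}^2=\sum_i c_i^2$, $\ D[f]=\sum_i\lambda^{(i)}c_i^2\le\lambda^{(k)}\|f\|_{L^2}^2$, and $\|\nabla f\|_{L^2}^2\le\alpha\,\|\nabla f\|_{\kappa}^2=\alpha\,D^1[f]\le\alpha\,D[f]\le\alpha\lambda^{(k)}\|f\|_{L^2}^2$, where the first inequality uses $\kappa\ge 1/\alpha$.

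First I would check that $P$ is injective on $\mathcal{V}$, so that $\dim V=k$ and $V$ is a legitimate competitor in the minimax principle. By Lemma~\ref{lemma:evalUB}(1) and the gradient bound above, for $f\in\mathcal{V}$,
\[
\|Pf\|_2\ \ge\ \|f\|_{L^2}-C_{\M}\eps_n\|\nabla f\|_{L^2}\ \ge\ \bigl(1-C_{\M,\kappa}\,\eps_n\sqrt{\lambda^{(k)}}\bigr)\|f\|_{L^2}.
\]
Since $\eps_n\sqrt{\lambda^{(k_n)}}\ll 1$ by hypothesis, the right-hand side is positive for all large $n$, hence $Pf\neq 0$ whenever $f\neq 0$; squaring and using $(1-x)^2\ge 1-2x$ also gives $\|Pf\|_2^2\ge\bigl(1-2C_{\M,\kappa}\eps_n\sqrt{\lambda^{(k)}}\bigr)\|f\|_{L^2}^2$.

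Next I would invoke the minimax bound $\lambda_n^{(k)}\le\max_{v\in V\setminus 0}D_{h_n}[v]/\|v\|_2^2=\max_{f\in\mathcal{V}\setminus 0}D_{h_n}[Pf]/\|Pf\|_2^2$. By Lemma~\ref{lemma:evalUB}(2), $D_{h_n}[Pf]\le\bigl[1+C_{\M,\tau,\kappa}(h_n+\eps_n/h_n)\bigr]D[f]$, while the denominator bound together with $D[f]\le\lambda^{(k)}\|f\|_{L^2}^2$ yields
\[
\frac{D_{h_n}[Pf]}{\|Pf\|_2^2}\ \le\ \lambda^{(k)}\,\frac{1+C_{\M,\tau,\kappa}\bigl(h_n+\eps_n/h_n\bigr)}{1-2C_{\M,\kappa}\eps_n\sqrt{\lambda^{(k)}}}.
\]
For $n$ large enough that $2C_{\M,\kappa}\eps_n\sqrt{\lambda^{(k)}}\le\tfrac12$ one has $(1-x)^{-1}\le 1+2x$, and multiplying out the two factors --- the cross term $\bigl(h_n+\eps_n/h_n\bigr)\cdot\eps_n\sqrt{\lambda^{(k)}}$ being absorbed into $\tfrac{\eps_n}{h_n}+\sqrt{\lambda^{(k)}}\eps_n$ using $h_n\ll 1$ and $\eps_n\sqrt{\lambda^{(k)}}\ll 1$ --- gives exactly
\[
\lambda_n^{(k)}\ \le\ \Bigl[1+C_{\M,\tau,\kappa}\Bigl(h_n+\tfrac{\eps_n}{h_n}+\sqrt{\lambda^{(k)}}\,\eps_n\Bigr)\Bigr]\lambda^{(k)}.
\]

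Since all the analytic content is already carried by Lemma~\ref{lemma:evalUB}, the only genuinely delicate point is the dimension count: one must use the hypothesis $\eps_n\sqrt{\lambda^{(k_n)}}\ll 1$ to guarantee that $P$ stays injective on the subspace $\mathcal{V}$, whose dimension $k=k_n$ grows with $n$, so that $V=P\mathcal{V}$ is genuinely $k$-dimensional. Everything else is bookkeeping of the multiplicative constants and discarding of higher-order cross terms.
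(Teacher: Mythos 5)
Your proof is correct and follows essentially the same route as the paper: identify $\mathcal{V}$ as the span of the first $k$ continuum eigenfunctions, bound $\|\nabla f\|_{L^2}$ by $\sqrt{\alpha\lambda^{(k)}}\|f\|_{L^2}$, use Lemma~\ref{lemma:evalUB}(1) to get $P|_{\mathcal{V}}$ injective and control $\|Pf\|_2$ from below, then apply the minimax characterization with $V=P(\mathcal{V})$ together with Lemma~\ref{lemma:evalUB}(2), and finally simplify the resulting quotient by absorbing higher-order cross terms. The only cosmetic difference is that you make explicit the $(1-x)^2\ge 1-2x$ and $(1-x)^{-1}\le 1+2x$ steps that the paper silently folds into the constant $C_{\M,\tau,\kappa}$.
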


\begin{proof} 
Let $\mathcal{V}$ be the span of eigenfunctions $f_1,\ldots,f_k$ of $\L^{\tau,\kappa}$ associated with eigenvalues $\lambda^{(1)},\ldots,\lambda^{(k)}$. For $f\in \mathcal{V}$, we have 
\begin{align*}
\|\nabla f\|_{L^2} \leq \sqrt{\alpha} \|\nabla f\|_{\kappa} \leq \sqrt{\alpha} \sqrt{D[\smash{f}]} \leq \sqrt{\alpha \lambda^{(k)}} \|f\|_{L^2}. 
\end{align*}
Lemma \ref{lemma:evalUB} (1) then implies that 
\begin{align*}
    \|Pf\|_2 \geq \|f\|_{L^2}-C_{\M} \eps_n \|\nabla f\|_{L^2} \geq \|f\|_{L^2}-C_{\M,\kappa} \eps_n \sqrt{\lambda^{(k)}} \|f\|_{L^2}. 
\end{align*}
Therefore, the assumption that $\eps_n\sqrt{\lambda^{(k)}}\ll 1$  implies that $P|_{\mathcal{V}}$ is injective and $V=P(\mathcal{V})$ has dimension $k$. By Lemma \ref{lemma:evalUB} (2) we have
\begin{align*}
    \lambda^{(k)}_{n} \leq \underset{u\in V\backslash 0}{\operatorname{max}}\frac{D_{h_n}[v]}{\|v\|^2_{2}}=\underset{f\in \mathcal{V}\backslash 0}{\operatorname{max}}\frac{D_{h_n}[Pf]}{\|Pf\|^2_{2}}
    &\leq \underset{f\in \mathcal{V}\backslash 0}{\operatorname{max}}\frac{\left[1+C_{\M,\tau,\kappa}\left(h_n+\frac{\epsilon_n}{h_n}\right) \right]D[f]}{\left(1-C_{\M,\tau,\kappa} \eps_n \sqrt{\lambda^{(k)}}\right)\| f\|^2_{L^2}} \\
    & =\frac{\left[1+C_{\M,\tau,\kappa}\left(h_n+\frac{\epsilon_n}{h_n}\right) \right]\lambda^{(k)} }{1-C_{\M,\tau,\kappa} \eps_n \sqrt{\lambda^{(k)}}} \\
    & \leq \left[ 1+C_{\M,\tau,\kappa}\left( h_n+\frac{\epsilon_n}{h_n} +\sqrt{\lambda^{(k)}} \epsilon_n\right) \right]\lambda^{(k)}. 
\end{align*}
\end{proof}

\subsection{Lower Bound}
Define $P^*:L^2(\gamma_n)\rightarrow L^2(\gamma)$ by 
\begin{align*}
    P^* v  := \sum_{i=1}^n v(x_i)\mathbf{1}_{U_i},
\end{align*}
where $U_i$'s are the transportation cells as in the definition of $P$. We note that $P^*$ defines a piecewise constant interpolation map; for the subsequent analysis we need to introduce a smoothing operator  $\Lambda$ so that the map $\mathcal{I}:=\Lambda\circ P^*$ satisfies $\mathcal{I}v\in H^1(\gamma)$, the Sobolev space of order 1. We now detail the construction of the smoothing  operator.  
Let 
\begin{align*}
\psi(t) :=
\begin{cases}
\frac{m+2}{2\nu_m}(1-t^2), \quad & 0\leq t\leq 1, \\
0 , \quad & t>1.
\end{cases}
\end{align*}
Consider for $0< r<2h_n$ the kernel 
\begin{align*}
    k_r(x,y) :=r^{-m}\psi\left(\frac{d_{\M}(x,y)}{r}\right)
\end{align*}
and the associated integral operator
\begin{align*}
    \Lambda^0_r f= \int_{\mathcal{M}} k_r(x,y) f(y) d\gamma(y). 
\end{align*}
Let $\theta(x):=\Lambda_r^0 \mathbf{1}_{\mathcal{M}}=\int_{\mathcal{M}} k_r(x,y)d\gamma(y)$ and then define 
\begin{align*}
    \Lambda_rf := \theta^{-1} \Lambda_r^0 f,
\end{align*}
so that $\Lambda_r$ preserves constant functions. 
Finally we define the interpolation operator $\mathcal{I}: L^2(\gamma_n) \rightarrow L^2(\gamma)$ by 
\begin{align*}
    \mathcal{I}v := \Lambda_{h_n-2\eps_n} P^*v. 
\end{align*}
We next present some auxiliary bounds that will be needed later. 
\begin{lemma}[Auxiliary Bounds] \label{lemma:auxbounds}
For $f\in L^2(\gamma)$, we have  
\begin{align}
    \|\Lambda_r f\|_{\tau}^2 &\leq [1+\operatorname{Lip}(\tau)r\beta][1+CmKr^2]^2 \|f\|^2_{\tau},\label{eq:eq1} \\ 
     \|\Lambda_rf-f\|^2_{\kappa} &\leq [1+\operatorname{Lip}(\kappa)\alpha r]\frac{Cm}{\nu_mr^m} E_r[f], \label{eq:eq2} \\
     \|\nabla(\Lambda_rf)\|^2_{\kappa} &\leq [1+\operatorname{Lip}(\kappa)\alpha r][1+Cm^2Kr^2] \frac{m+2}{\nu_m r^{m+2}} E_r[f]. \label{eq:eq3}
\end{align}
\end{lemma}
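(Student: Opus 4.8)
The plan is to read off all three estimates from three basic facts about the mollifier $\Lambda_r$. First, the normalization is almost constant: $\theta(x)=\int_\M k_r(x,y)\,d\gamma(y)$ satisfies $|\theta(x)-1|\le CmKr^2$ by the standard comparison of the volume of a small geodesic ball with that of the Euclidean ball of the same radius, valid because $r<2h_n$ is below the injectivity radius of the compact $\M$; in particular $\theta(x)^{-1}\le 1+CmKr^2$ for $n$ large. Second, for each fixed $x$ the function $y\mapsto k_r(x,y)/\theta(x)$ is a probability density with respect to $\gamma$, so Jensen's inequality applies to $\Lambda_r f(x)=\int \tfrac{k_r(x,y)}{\theta(x)}f(y)\,d\gamma(y)$. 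Third, on the support $\{d_\M(x,y)<r\}$ of $k_r$ the coefficients are comparable: $\tau(x)\le\tau(y)[1+\operatorname{Lip}(\tau)\beta r]$ and $\kappa(x)\le\sqrt{\kappa(x)\kappa(y)}\,[1+\operatorname{Lip}(\kappa)\alpha r]$, since $\tau,\kappa$ are Lipschitz and bounded below by $1/\beta$, $1/\alpha$. I will also use the symmetry $k_r(x,y)=k_r(y,x)$ and the identity $\int_\M k_r(x,y)\,d\gamma(x)=\theta(y)$.

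For \eqref{eq:eq1}: Jensen gives $|\Lambda_r f(x)|^2\le\int\tfrac{k_r(x,y)}{\theta(x)}f(y)^2\,d\gamma(y)$; multiplying by $\tau(x)$, integrating in $x$, swapping the order of integration, and bounding $\tau(x)\le\tau(y)[1+\operatorname{Lip}(\tau)\beta r]$ together with $\int\tfrac{k_r(x,y)}{\theta(x)}\,d\gamma(x)\le(1+CmKr^2)\,\theta(y)\le(1+CmKr^2)^2$ yields $\|\Lambda_r f\|_\tau^2\le[1+\operatorname{Lip}(\tau)\beta r](1+CmKr^2)^2\|f\|_\tau^2$. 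For \eqref{eq:eq2}: since $\theta(x)^{-1}\int k_r(x,y)\,d\gamma(y)=1$, we may write $\Lambda_r f(x)-f(x)=\theta(x)^{-1}\int k_r(x,y)\,(f(y)-f(x))\,d\gamma(y)$, so Jensen gives $|\Lambda_r f(x)-f(x)|^2\le\theta(x)^{-1}\int k_r(x,y)\,|f(y)-f(x)|^2\,d\gamma(y)$; using the pointwise bound $k_r(x,y)\le\tfrac{m+2}{2\nu_m r^m}\mathbf 1\{d_\M(x,y)<r\}$, $\theta(x)^{-1}\le 1+CmKr^2$, and the $\sqrt\kappa$-comparison on the support, the $\kappa$-weighted integral is dominated by $[1+\operatorname{Lip}(\kappa)\alpha r]\tfrac{Cm}{\nu_m r^m}E_r[f]$ once the definition of $E_r$ is recognized.

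The estimate \eqref{eq:eq3} is the one requiring real care. Differentiating under the integral (legitimate because $x\mapsto d_\M(x,y)$ is smooth on the support away from $y$, a $\gamma$-null set, with $|\nabla_x d_\M(x,y)|=1$ there) and using $\int\nabla_x k_r(x,y)\,d\gamma(y)=\nabla\theta(x)$, one obtains
\[
\nabla(\Lambda_r f)(x)=\frac{1}{\theta(x)}\int\nabla_x k_r(x,y)\,(f(y)-f(x))\,d\gamma(y)-\frac{\nabla\theta(x)}{\theta(x)}\,(\Lambda_r f(x)-f(x)).
\]
The second term is lower order: $|\nabla\theta(x)|\le CmKr$ by the same geodesic-ball comparison, so its $\kappa$-weighted square norm is $\lesssim (mKr)^2\|\Lambda_r f-f\|_\kappa^2\lesssim (mKr)^2\tfrac{Cm}{\nu_m r^m}E_r[f]$ by \eqref{eq:eq2}, which is $O(K^2r^4)$ times the claimed right-hand side and is absorbed. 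For the main term I would use the pointwise bound $|\nabla_x k_r(x,y)|\le\tfrac{m+2}{\nu_m r^{m+1}}\mathbf 1\{d_\M(x,y)<r\}$ coming from $|\psi'(t)|\le (m+2)/\nu_m$, then Cauchy--Schwarz in $y$ against the weight $|\nabla_x k_r(x,y)|$ (with $\int|\nabla_x k_r(x,y)|\,d\gamma(y)\le\tfrac{m+2}{r}(1+CmKr^2)$), integrate in $x$, swap the order, and invoke the $\sqrt\kappa$-comparison to recover $E_r[f]$; collecting the geodesic corrections into a factor $1+Cm^2Kr^2$ gives \eqref{eq:eq3}. The main obstacle is precisely this last estimate: one must justify the differentiation past the cut locus, dispose of the $\nabla\theta$ term, and track the $\bigl(1+CmKr^2\bigr)$-type corrections carefully so they combine into the stated constant; the weight $\kappa$ is a minor complication relative to the constant-coefficient analysis of \cite{burago2013graph,trillos2019error}, entering only through the elementary Lipschitz comparison of $\kappa(x)$ with $\sqrt{\kappa(x)\kappa(y)}$ at points within geodesic distance $r$, which costs the factor $[1+\operatorname{Lip}(\kappa)\alpha r]$ in each bound.
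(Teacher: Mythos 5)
Your arguments for \eqref{eq:eq1} and \eqref{eq:eq2} match the paper's approach: Jensen (equivalently Cauchy--Schwarz with respect to the probability kernel $k_r(x,\cdot)/\theta(x)$), followed by a swap of the order of integration, the volume estimate $\theta(x)^{-1}\le 1+CmKr^2$, and the Lipschitz comparisons $\tau(x)\le\tau(y)[1+\operatorname{Lip}(\tau)\beta r]$ and $\kappa(x)\le\sqrt{\kappa(x)\kappa(y)}\,[1+\operatorname{Lip}(\kappa)\alpha r]$ on $\{d_\M(x,y)<r\}$. The paper only spells out \eqref{eq:eq1} and defers the rest to Burago--Ivanov--Kurylev with the same $\kappa$-bookkeeping you describe, so this part is fine.

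Your treatment of \eqref{eq:eq3}, however, has a genuine gap in the constant, and the constant matters. Applying Cauchy--Schwarz against the weight $|\nabla_x k_r(x,y)|$ and then bounding $\int|\nabla_x k_r(x,y)|\,d\gamma(y)\le(m+2)/r$ and $|\nabla_x k_r|\le\tfrac{m+2}{\nu_m r^{m+1}}$ produces a prefactor of order $\tfrac{(m+2)^2}{\nu_m r^{m+2}}$, not $\tfrac{m+2}{\nu_m r^{m+2}}$ as claimed. This extra factor of $m+2$ is not cosmetic: in the proof of Lemma \ref{lemma:evalLB}, \eqref{eq:eq3} is combined with the lower bound
\[
D^1_{h_n}[v]\ge[1-\operatorname{Lip}(\kappa)\alpha\eps_n]\,\frac{m+2}{\nu_m h_n^{m+2}}\,E_{h_n-2\eps_n}[P^*v],
\]
and the $\tfrac{m+2}{\nu_m}$ prefactors must cancel exactly so that $D^1[\mathcal{I}v]\le\bigl[1+C_{\M,\kappa}(h_n+\eps_n/h_n)\bigr]D^1_{h_n}[v]$. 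With your constant the conclusion degrades to $D^1[\mathcal{I}v]\le C(m+2)\,D^1_{h_n}[v]$, which is useless for the spectral comparison.

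The fix is to exploit the explicit quadratic form of $\psi$. On the support, $\nabla_x k_r(x,y)=-\tfrac{m+2}{\nu_m r^{m+2}}\,d_\M(x,y)\,\nabla_x d_\M(x,y)$, which (in normal coordinates at $x$) is, up to the usual $1+O(mKr^2)$ corrections, $-\tfrac{m+2}{\nu_m r^{m+2}}\exp_x^{-1}(y)$ --- a \emph{bounded} vector field, linear in the geodesic distance, with no singularity at $y=x$. One should therefore apply Cauchy--Schwarz against $d\gamma(y)$ itself, not against $|\nabla_x k_r|$: for a unit tangent vector $e$ at $x$,
\[
\bigl|e\cdot\nabla(\Lambda_r f)(x)\bigr|^2
\le\Bigl(\tfrac{m+2}{\nu_m r^{m+2}}\Bigr)^2\frac{1}{\theta(x)^2}
\int_{B_r(x)}\bigl|e\cdot\exp_x^{-1}(y)\bigr|^2 d\gamma(y)\;
\int_{B_r(x)}\bigl|f(y)-f(x)\bigr|^2 d\gamma(y),
\]
and the anisotropic moment $\int_{B_r(x)}|e\cdot\exp_x^{-1}(y)|^2 d\gamma(y)=\tfrac{\nu_m r^{m+2}}{m+2}\,[1+O(mKr^2)]$ supplies exactly the cancellation: one factor of $\tfrac{m+2}{\nu_m r^{m+2}}$ survives, with the curvature correction absorbed into $[1+Cm^2Kr^2]$. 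Taking $e$ in the direction of $\nabla(\Lambda_r f)(x)$, multiplying by $\kappa(x)$, using the $\sqrt{\kappa}$-comparison, and treating your $\nabla\theta/\theta$ remainder as you did (it is $O(K^2r^4)$ relative to the leading term) then yields \eqref{eq:eq3} with the stated constant. This anisotropic Cauchy--Schwarz step, rather than the crude weighted one, is the essential ingredient you omitted.
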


\begin{proof}
The above results are proved in the same way as in \cite{burago2015graph}[Lemma 5.3, 5.4, 5.5] with little adjustments and the main differences are the additional factors $1+\text{Lip}(\tau)\beta r$ or $1+\text{Lip}(\kappa)\alpha r$.  To illustrate the idea, we will prove \eqref{eq:eq1} and the generalizations for \eqref{eq:eq2} and \eqref{eq:eq3} are similar. First by \cite{burago2015graph}[Lemma 5.1], we have for each $x\in \mathcal{M}$, 
\begin{align}
    (1+CmKr^2)^{-1} \leq \theta(x) \leq (1+CmKr^2),  \label{eq:boundTheta}
\end{align}
where recall that $K$ is an upper bound on the absolute value of the sectional curvature. 
Then we have 
\begin{align*}
    \left| \Lambda_r f \right|^2 = \theta^{-2}\left| \int_{\mathcal{M}} k_r(x,y)f(y)   \right|^2 &\leq \theta^{-2}\int_{\mathcal{M}} k_r(x,y)d\gamma(y) \int_{\mathcal{M}} k_r(x,y) |f(y)|^2 d\gamma(y) \\
    &=\theta^{-1} \int_{\mathcal{M}} k_r(x,y) |f(y)|^2 d\gamma(y) \\
    &\leq [1+CmKr^2] \int_{\mathcal{M}} k_r(x,y) |f(y)|^2 d\gamma(y).
\end{align*}
Noticing that $k_r(x,y)$ is zero when $d_{\M}(x,y)>r$ and $\tau$ is Lipschitz, we have 
\begin{align*}
    \|\Lambda_rf\|^2_{\tau} &\leq [1+CmKr^2] \int_{\mathcal{M}} \int_{\mathcal{M}} k_r(x,y) |f(y)|^2 \tau(x) d\gamma(x)d\gamma(y) \\
    & = [1+CmKr^2] \int_{\mathcal{M}} |f(y)|^2\left[\int_{B_r(y)} k_r(x,y)  \tau(x) d\gamma(x)\right]d\gamma(y) \\
    &\leq  [1+\text{Lip}(\tau)r\beta ][1+CmKr^2] \int_{\mathcal{M}} |f(y)|^2\left[\int_{B_r(y)} k_r(x,y)  \tau(y) d\gamma(x)\right]d\gamma(y)  \\
    &=[1+\text{Lip}(\tau)r\beta][1+CmKr^2] \int_{B_r(y)} k_r(x,y) d\gamma(x)\int_{\mathcal{M}} |f(y)|^2\ \tau(y)d\gamma(y)\\
    &= [1+\text{Lip}(\tau)r\beta][1+CmKr^2]^2 \|f\|^2_{\tau}. 
\end{align*}
\end{proof}

\begin{lemma}[Discrete Dirichlet Energy Lower Bound] \label{lemma:evalLB}
For each $v\in L^2(\gamma_n)$, 
\begin{enumerate}
    \item $\Big|\|\mathcal{I}v\|_{L^2}-\|v\|_{2} \Big|\leq C_{\kappa}h_n \sqrt{D_{h_n}[v]}$ .
    \item $D[\mathcal{I}v] \leq \left[1+C_{\M,\tau,\kappa}\left(h_n+\frac{\eps_n}{h_n} \right)\right]D_{h_n}[v]$.  
\end{enumerate}
\end{lemma}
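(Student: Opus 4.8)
The plan is to transfer the discrete energy $D_{h_n}[v]$ to the continuum through the piecewise-constant extension $P^\ast v=\sum_{i=1}^n v(x_i)\mathbf{1}_{U_i}$ and then control the smoothing operator $\Lambda_r$, with $r:=h_n-2\eps_n$, by Lemma~\ref{lemma:auxbounds}. Two elementary facts drive everything. First, since $\gamma(U_i)=\gamma_n(\{x_i\})=1/n$, the map $P^\ast$ is an isometry onto the piecewise-constant functions: $\|P^\ast v\|_{L^2}=\|v\|_2$; moreover, because $U_i\subset B_{\M}(x_i,\eps_n)$ and $\tau$ is Lipschitz, $\|P^\ast v\|_\tau^2\le[1+\text{Lip}(\tau)\beta\eps_n]\,D_{h_n}^0[v]$, while $\tau\ge 1/\beta$ gives $\|v\|_2^2\le\beta\,D_{h_n}^0[v]$. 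Second, the nonlocal energy of $P^\ast v$ at scale $r$ is dominated by the discrete first-order energy,
\begin{align}
E_r[P^\ast v]\;\le\;\bigl[1+\text{Lip}(\kappa)\alpha\eps_n\bigr]\,\frac{\nu_m h_n^{m+2}}{m+2}\,D_{h_n}^1[v].\label{eq:Erbound-plan}
\end{align}

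I expect \eqref{eq:Erbound-plan} to be the crux, although the argument is routine once set up correctly. To prove it I would observe that if $x\in U_i$, $y\in U_j$ and $d_{\M}(x,y)<r$, then the triangle inequality for $d_{\M}$ together with $U_i\subset B_{\M}(x_i,\eps_n)$ forces $d_{\M}(x_i,x_j)<r+2\eps_n=h_n$; this is precisely why the radius $h_n-2\eps_n$ is built into the definition of $\mathcal{I}$, so that only pairs $(i,j)$ in the support of $W_{ij}$ contribute. On each such product cell $P^\ast v(x)-P^\ast v(y)=v(x_i)-v(x_j)$ is constant and $\int_{U_i}\int_{U_j}d\gamma\,d\gamma=1/n^2$; estimating $\sqrt{\kappa(x)\kappa(y)}\le[1+\text{Lip}(\kappa)\alpha\eps_n]\sqrt{\kappa(x_i)\kappa(x_j)}$ exactly as in the proof of Lemma~\ref{lemma:evalUB} and matching the resulting sum against the definition of $W_{ij}$ yields \eqref{eq:Erbound-plan}. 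Note that Lemma~\ref{lemma:En} cannot be invoked here, since $P^\ast v\notin H^1(\gamma)$; the estimate must come directly from the discrete structure.

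Part (1) then follows quickly. Using $\|P^\ast v\|_{L^2}=\|v\|_2$ and the triangle inequality, $\bigl|\,\|\mathcal{I}v\|_{L^2}-\|v\|_2\,\bigr|\le\|\Lambda_r P^\ast v-P^\ast v\|_{L^2}\le\sqrt{\alpha}\,\|\Lambda_r P^\ast v-P^\ast v\|_\kappa$, the last step because $\kappa\ge 1/\alpha$. Then \eqref{eq:eq2} of Lemma~\ref{lemma:auxbounds} combined with \eqref{eq:Erbound-plan} bounds the square of the right-hand side by a constant times $\tfrac{h_n^{m+2}}{r^m}D_{h_n}^1[v]$; since $\eps_n\ll h_n$ by \eqref{eq:OTdistance} we have $r=h_n-2\eps_n\ge h_n/2$ for $n$ large, so $h_n^{m+2}/r^m\le 2^m h_n^2$ and the whole quantity is $\le C_\kappa h_n^2 D_{h_n}[v]$. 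Taking square roots gives (1).

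For part (2) I would split $D[\mathcal{I}v]=\|\mathcal{I}v\|_\tau^2+\|\nabla\mathcal{I}v\|_\kappa^2=D^0[\mathcal{I}v]+D^1[\mathcal{I}v]$ and bound each term. By \eqref{eq:eq1} of Lemma~\ref{lemma:auxbounds}, $\|\mathcal{I}v\|_\tau^2\le[1+\text{Lip}(\tau)\beta r][1+CmKr^2]^2\|P^\ast v\|_\tau^2$, which with the first fact above and $r\le h_n$ gives $D^0[\mathcal{I}v]\le[1+C_{\M,\tau}(h_n+\eps_n/h_n)]D_{h_n}^0[v]$. By \eqref{eq:eq3}, $\|\nabla\mathcal{I}v\|_\kappa^2\le[1+\text{Lip}(\kappa)\alpha r][1+Cm^2Kr^2]\tfrac{m+2}{\nu_m r^{m+2}}E_r[P^\ast v]$, and plugging in \eqref{eq:Erbound-plan} turns the geometric factor into $(h_n/r)^{m+2}=(1-2\eps_n/h_n)^{-(m+2)}\le 1+C_m\eps_n/h_n$, whence $D^1[\mathcal{I}v]\le[1+C_{\M,\kappa}(h_n+\eps_n/h_n)]D_{h_n}^1[v]$. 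Adding the two estimates and using $D_{h_n}^0[v]+D_{h_n}^1[v]=D_{h_n}[v]$ completes part (2).
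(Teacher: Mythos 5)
Your argument is correct and follows essentially the same route as the paper's proof: both split $D[\mathcal{I}v]$ into the $\tau$- and $\kappa$-weighted pieces, control $\|P^*v\|_\tau$ via the Lipschitz constant of $\tau$, and control $E_{h_n-2\eps_n}[P^*v]$ by $D^1_{h_n}[v]$ through the triangle-inequality argument $V(x)\supset B_{h_n-2\eps_n}(x)$, before feeding everything into Lemma~\ref{lemma:auxbounds}. The one mild deviation is in part~(1), where the paper simply cites \cite{burago2013graph}[Lemma~6.2(1)] for $\|\mathcal{I}v - P^*v\|_{L^2}\le C h_n\|\delta v\|$, whereas you re-derive the same bound directly from \eqref{eq:eq2} and your inequality \eqref{eq:Erbound-plan}; this is a self-contained rewriting of the cited estimate, not a different idea.
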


\begin{proof}
1. By equation (6.4) in the proof of \cite{burago2015graph}[Lemma 6.2(1)], 
\begin{align}
    \|\mathcal{I}v-P^*v\|_{L^2} \leq Ch_n \|\delta v\|, \label{eq:IuP*u}
\end{align}
where 
\begin{align}
    \|\delta v\|^2= \frac{m+2}{\nu_mn^2h_n^{m+2}} \sum_{i=1}^n\sum_{j=1}^n \mathbf{1}\{d_{\M}(x_i,x_j)<h_n\} |v(x_i)-v(x_j)|^2. \label{eq:deltau}
\end{align}
The result follows by noticing that $\|\delta v\|\leq \sqrt{\alpha D_{h_n}[v]}$ and 
\begin{align*}
    \Big| \|\mathcal{I}v\|_{L^2} -\|v\|_{2} \Big|= \Big|\|\mathcal{I}v\|_{L^2}-\|P^*v\|_{L^2} \Big| \leq \|\mathcal{I}v-P^*v\|_{L^2}.
\end{align*}
\par

2. The second statement will be proved by combining lower bounds for $D^0_{h_n}[\mathcal{I}v]$ and $D^1_{h_n}[\mathcal{I}v].$ 
For the lower bound on $D_{h_n}^0[\mathcal{I}v]$, we have by \eqref{eq:eq1}, 
\begin{align*}
    D^0[\mathcal{I}v]=\|\Lambda_{h_n-2\eps_n}P^*v(x)\|^2_{\tau}  \leq \Big[1+\LT\beta(h_n-2\eps_n)\Big] \left[1+CmK(h_n-2\eps_n)^2\right]^2 \int \tau(x) |P^*v(x)|^2d\gamma(x).
\end{align*}
We also have
\begin{align*}
    \int \tau(x)|P^*v(x)|^2d\gamma(x)& =\sum_{i=1}^n\int_{U_i} \tau(x) |v(x_i)|^2 d\gamma(x)\\
    &\leq [1+\LT\beta \eps_n]\sum_{i=1}^n \int_{U_i} \tau(x_i) v(x_i)^2 d\gamma(x)=[1+\LT\beta\eps_n  ]D^0_{h_n}[v].
\end{align*}
Therefore
\begin{align}
    D^0[\mathcal{I}v]\leq \left(1+C_{\M,\tau}h_n\right)D^0_{h_n}[v]. \label{eq:L-D1}
\end{align}
Next we seek a lower bound for $D^1_{h_n}[\mathcal{I}v].$ 
We have 
\begin{align*}
    D^1_{h_n}[v] &= \frac{m+2}{n^2\nu_m h_n^{m+2}} \sum_{i=1}^n \sum_{j=1}^n \sqrt{\kappa(x_i)\kappa(x_j)} \mathbf{1}\{d_{\M}(x_i,x_j)<h_n\} |v(x_i)-v(x_j)|^2 \\
    &=\frac{m+2}{n^2\nu_m h_n^{m+2}} \sum_{i=1}^n \sum_{j=1}^n \sqrt{\kappa(x_i)\kappa(x_j)} \mathbf{1}\{d_{\M}(x_i,x_j)<h_n\} |P^* v(x_i)-P^*v(x_j)|^2 \\
    &=\frac{m+2}{\nu_m h_n^{m+2}} \sum_{i=1}^n \sum_{j=1}^n \sqrt{\kappa(x_i)\kappa(x_j)} \mathbf{1}\{d_{\M}(x_i,x_j)<h_n\} \int_{U_i}\int_{U_j}|P^*v(x)-P^*v(y)|^2 d\gamma(y)d\gamma(x) \\
    & \geq [1-\text{Lip}(\kappa)\alpha \eps_n]\frac{m+2}{\nu_m h_n^{m+2}} \sum_{i=1}^n \sum_{j=1}^n  \mathbf{1}\{d_{\M}(x_i,x_j)<h_n\} \int_{U_i}\int_{U_j}|P^*v(x)-P^*v(y)|^2 \sqrt{\kappa(x)\kappa(y)}d\gamma(y)d\gamma(x) \\
    &=[1-\text{Lip}(\kappa)\alpha \eps_n]\frac{m+2}{\nu_m h_n^{m+2}}  \int_{\mathcal{M}} \int_{V(x)} |P^*v(x)-P^*v(y)|^2 \sqrt{\kappa(x)\kappa(y)}d\gamma(y)d\gamma(x),
\end{align*}
where if $x\in U_i$ then $V(x)=\bigcup _{j:j\sim i} U_j$. Notice that $V(x)\supset B_{h_n-2\epsilon_n}(x)$ and hence,
\begin{align}
    D^1_{h_n}[v] &\geq [1-\text{Lip}(\kappa)\alpha \eps_n]\frac{m+2}{\nu_m h_n^{m+2}}  \int_{\mathcal{M}} \int_{B_{h_n-2\epsilon_n}} |P^*v(x)-P^*v(y)|^2 \sqrt{\kappa(x)\kappa(y)}d\gamma(y)d\gamma(x)\nonumber \\
     &= [1-\text{Lip}(\kappa)\alpha \eps_n]\frac{m+2}{\nu_m h_n^{m+2}}  E_{h_n-2\epsilon_n}[P^*v]. \label{eq:boundDh&E} 
\end{align}
Combining inequality \eqref{eq:eq3} with \eqref{eq:boundDh&E} gives  
\begin{align}
    D^1[\mathcal{I}v]&=\|\nabla (\mathcal{I}v)\|^2_{\kappa}\nonumber \\ &=\|\nabla(\Lambda_{h_n-2\epsilon_n}P^*v)\|^2_{\kappa}\nonumber\\
    & \le   \left[1+\LK\alpha(h_n-2\epsilon_n)\right]\left[1+Cm^2K(h_n-2\epsilon_n)^2\right]^2\frac{m+2}{\nu_m(h_n-\eps_n)^{m+2}} E_{h_n-2\epsilon_n}[P^*v]    \nonumber  \\
    &\leq \left[1+\LK\alpha(h_n-2\epsilon_n)\right]\left[1+Cm^2K(h_n-2\epsilon_n)^2\right]^2\left(\frac{h_n}{h_n-2\epsilon_n}\right)^{m+2} [1-\text{Lip}(\kappa)\alpha \eps_n]^{-1}D^1_h[v] \nonumber\\
    & \leq \left[1+C_{\M,\kappa}\left(h_n+\frac{\epsilon_n}{h_n}\right)\right]D^1_h[v] \label{eq:L-D2},
\end{align}
where we have used Lemma \ref{lemma:En} and the fact that $\eps_n\ll h_n$.
The second statement follows by combining \eqref{eq:L-D1} and \eqref{eq:L-D2}. 
\end{proof}

\begin{corollary}[Lower Bound] \label{cor:evalLB}
Suppose $k:=k_n$ is such that $h_n\sqrt{\lambda^{(k_n)}}\ll 1$ for $n$ large. Then 
\begin{align*}
    \lambda^{(k)}_{n} \geq \left[1-C_{\M,\tau,\kappa}\left(\frac{\epsilon_n}{h_n}+h_n\sqrt{\lambda^{(k)}}\right)\right] \lambda^{(k)}.
\end{align*}
\end{corollary}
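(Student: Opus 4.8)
The plan is to mirror the proof of Corollary~\ref{cor:evalUB}, swapping the roles of the discrete and continuum spaces: instead of pushing continuum eigenfunctions onto the point cloud via $P$, I will push discrete test functions into $H^1(\gamma)$ via the smoothed interpolation operator $\mathcal{I} = \Lambda_{h_n - 2\eps_n}\circ P^*$ and feed the resulting subspace into the minimax characterization of $\lambda^{(k)}$. First I would set $V\subset L^2(\gamma_n)$ to be the span of orthonormal eigenfunctions $v_1,\dots,v_k$ of $L^{\tau,\kappa}_n$ associated with $\lambda_n^{(1)},\dots,\lambda_n^{(k)}$, so that $D_{h_n}[v]\le \lambda_n^{(k)}\|v\|_2^2$ for every $v\in V$. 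The crucial structural step is to check that $\mathcal{I}|_V$ is injective and hence $\mathcal{I}(V)$ is a $k$-dimensional subspace of $H^1(\gamma)$: by Lemma~\ref{lemma:evalLB}(1),
\begin{align*}
    \|\mathcal{I}v\|_{L^2} \ge \|v\|_2 - C_\kappa h_n\sqrt{D_{h_n}[v]} \ge \bigl(1 - C_\kappa h_n\sqrt{\lambda_n^{(k)}}\bigr)\|v\|_2 ,
\end{align*}
so injectivity follows as soon as $h_n\sqrt{\lambda_n^{(k)}}\ll 1$. Since $\eps_n\ll h_n$, the hypothesis $h_n\sqrt{\lambda^{(k_n)}}\ll 1$ implies $\eps_n\sqrt{\lambda^{(k_n)}}\ll 1$, so Corollary~\ref{cor:evalUB} applies and yields $\lambda_n^{(k)}\le 2\lambda^{(k)}$ for $n$ large; hence $h_n\sqrt{\lambda_n^{(k)}}\lesssim h_n\sqrt{\lambda^{(k)}}\ll 1$ as required.

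With $\mathcal{I}(V)$ a legitimate $k$-dimensional competitor in $\lambda^{(k)} = \min_{\dim\mathcal{V}=k}\max_{f\in\mathcal{V}\setminus 0} D[f]/\|f\|_{L^2}^2$, I would then combine Lemma~\ref{lemma:evalLB}(2) in the numerator with the bound above in the denominator:
\begin{align*}
    \lambda^{(k)} \le \max_{v\in V\setminus 0}\frac{D[\mathcal{I}v]}{\|\mathcal{I}v\|_{L^2}^2}
    \le \frac{\bigl[1 + C_{\M,\tau,\kappa}\bigl(h_n + \eps_n/h_n\bigr)\bigr]\,\lambda_n^{(k)}}{\bigl(1 - C_\kappa h_n\sqrt{\lambda_n^{(k)}}\bigr)^2}.
\end{align*}
Rearranging and expanding the denominator to first order gives
\begin{align*}
    \lambda_n^{(k)} \ge \Bigl[1 - C_{\M,\tau,\kappa}\bigl(h_n + \tfrac{\eps_n}{h_n} + h_n\sqrt{\lambda_n^{(k)}}\bigr)\Bigr]\lambda^{(k)}.
\end{align*}

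To conclude I would replace $\sqrt{\lambda_n^{(k)}}$ by $\sqrt{\lambda^{(k)}}$ using $\lambda_n^{(k)}\le 2\lambda^{(k)}$ once more, and absorb the lone $h_n$ term into $h_n\sqrt{\lambda^{(k)}}$; this is legitimate because $\L^{\tau,\kappa}$ is uniformly positive definite (since $\tau \ge 1/\beta$), so the smallest eigenvalue $\lambda^{(1)}$ is a fixed positive constant and $h_n \le h_n\sqrt{\lambda^{(k)}}/\sqrt{\lambda^{(1)}} \lesssim h_n\sqrt{\lambda^{(k)}}$. This produces exactly the claimed inequality. The only genuinely new ingredient --- and the main point to get right --- is the mild circularity in handling $\lambda_n^{(k)}$: both the injectivity of $\mathcal{I}|_V$ and the passage from $\sqrt{\lambda_n^{(k)}}$ to $\sqrt{\lambda^{(k)}}$ require an a priori upper bound on $\lambda_n^{(k)}$, which is precisely what the already-proven Corollary~\ref{cor:evalUB} supplies. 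Everything else is a transcription of the upper-bound argument with the continuum and discrete spaces interchanged.
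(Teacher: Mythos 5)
Your proof is correct and the core argument---pushing the discrete eigenspace $V$ into $H^1(\gamma)$ via $\mathcal{I}$, verifying injectivity through Lemma~\ref{lemma:evalLB}(1), and feeding $\mathcal{I}(V)$ into the minimax characterization of $\lambda^{(k)}$ with Lemma~\ref{lemma:evalLB}(2)---is exactly the paper's. The one place you diverge is in how you break the circularity in $\lambda_n^{(k)}$: you invoke the already-proven Corollary~\ref{cor:evalUB} to obtain $\lambda_n^{(k)}\le 2\lambda^{(k)}$ for $n$ large, using this both for injectivity and to pass from $\sqrt{\lambda_n^{(k)}}$ to $\sqrt{\lambda^{(k)}}$ at the end. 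The paper instead dispatches this with a short WLOG observation: since the claimed lower bound is trivially true when $\lambda_n^{(k)}\ge\lambda^{(k)}$, one may assume $\lambda_n^{(k)}<\lambda^{(k)}$ throughout, which immediately gives $h_n\sqrt{\lambda_n^{(k)}}\le h_n\sqrt{\lambda^{(k)}}$ and keeps the proof self-contained without appealing to the upper-bound corollary. Both routes are valid; the paper's is slightly more economical, while yours makes the dependence between the two corollaries explicit. Your observation that the lone $h_n$ term can be absorbed into $h_n\sqrt{\lambda^{(k)}}$ because $\lambda^{(1)}\ge\min\tau>0$ is a correct justification of a step the paper performs silently.
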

\begin{proof}
Since we are interested in proving lower bounds for $\lambda^{(k)}_{n}$,   the result is trivial if $\lambda_n^{(k)} \geq \lambda^{(k)} $. Therefore we shall assume that $\lambda^{(k)}_n<\lambda^{(k)}$. Let $V$ be the span of eigenvectors $v_1,\ldots,v_k$ of $L^{\tau,\kappa}_n$ associated with eigenvalues $\lambda_n^{(1)},\ldots,\lambda^{(k)}_{n}$. Lemma \ref{lemma:evalLB}(1) implies  for $v\in V$
\begin{align*}
    \|\mathcal{I}v\|_{L^2} \geq \|v\|_2-C_{\M,\kappa}h_n\sqrt{D_{h_n}[v]}  
    \geq \left[1-C_{\M,\kappa}h_n\sqrt{\lambda^{(k)}_{n}}\right] \|v\|_{2}
    \geq \left[1-C_{\M,\kappa}h_n\sqrt{\lambda^{(k)}}\right] \|v\|_{2}. 
\end{align*}
Therefore, the assumption that $h_n\sqrt{\lambda^{(k)}}\ll 1 $ implies that  $\mathcal{I}|_V$ is injective and $\mathcal{V}=\mathcal{I}(V)$ has dimension $k$. Lemma \ref{lemma:evalLB} then gives 
\begin{align*}
    \lambda^{(k)} \leq \underset{f\in \mathcal{V}\backslash0}{\operatorname{max}}\frac{D[f]}{\|f\|^2_{L^2}}=\underset{v\in V\backslash0}{\operatorname{max}}\frac{D[\mathcal{I}v]}{\|\mathcal{I}v\|^2_{L^2}} 
    &\leq \underset{v\in V\backslash0}{\operatorname{max}}\frac{\left[1+C_{\M,\tau,\kappa}\left(h_n+\frac{\eps_n}{h_n} \right)\right]D_{h_n}[u]}{\left(1-C_{\M,\kappa}h_n\sqrt{\lambda^{(k)}}\right)^2 \|u\|_{2}} \\
    & = \underset{v\in V\backslash0}{\operatorname{max}} \frac{\left[1+C_{\M,\tau,\kappa}\left(h_n+\frac{\eps_n}{h_n} \right)\right]}{\left(1-C_{\M,\kappa}h_n\sqrt{\lambda^{(k)}}\right)^2 }\,\,  \lambda^{(k)}_{n}\\
    & \leq \left[1+C_{\M,\tau,\kappa}\left(\frac{\epsilon_n}{h_n}+h_n\sqrt{\lambda^{(k)}}\right)\right] \lambda^{(k)}_{n}.
\end{align*}
Therefore 
\begin{align*}
    \lambda^{(k)}_{n}\geq \left[1+C_{\M,\tau,\kappa}\left(\frac{\epsilon_n}{h_n}+h_n\sqrt{\lambda^{(k)}}\right)\right]^{-1} \lambda^{(k)} \geq \left[1-C_{\M,\tau,\kappa}\left(\frac{\epsilon_n}{h_n}+h_n\sqrt{\lambda^{(k)}}\right)\right] \lambda^{(k)}.
\end{align*}
\end{proof}  
Combining Lemma \ref{lemma:evalUB} and \ref{lemma:evalLB} we have: 
\begin{theorem}\label{thm:evalRate2}
Suppose  $k:=k_{n}$ is such that $h_n\sqrt{\lambda^{(k_n)}}\ll 1$ for $n$ large. Then
\begin{align*}
    \frac{|\lambda^{(k)}_{n}-\lambda^{(k)}|}{\lambda^{(k)}} \leq C_{\M,\tau,\kappa}\left[\frac{\eps_n}{h_n}+h_n\sqrt{\lambda^{(k)}}\right],
\end{align*}
where $C_{\M,\tau,\kappa}$ is a constant depending on $\M,\tau,\kappa$. 
\end{theorem}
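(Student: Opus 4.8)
The plan is to obtain the two-sided estimate by stitching together the upper and lower bounds already established in Corollaries \ref{cor:evalUB} and \ref{cor:evalLB}, so that essentially no new analytic work is required beyond some bookkeeping of error terms. First I would fix $k:=k_n$ with $h_n\sqrt{\lambda^{(k_n)}}\ll 1$ and observe that, because the scaling \eqref{eq:hnScaling2} gives $\eps_n\ll h_n$, this hypothesis also forces $\eps_n\sqrt{\lambda^{(k_n)}}\ll 1$; hence both corollaries apply simultaneously.

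Corollary \ref{cor:evalUB} then gives
\[
\lambda_n^{(k)}-\lambda^{(k)} \le C_{\M,\tau,\kappa}\Big(h_n + \tfrac{\eps_n}{h_n} + \sqrt{\lambda^{(k)}}\,\eps_n\Big)\lambda^{(k)},
\]
while Corollary \ref{cor:evalLB} gives
\[
\lambda^{(k)}-\lambda_n^{(k)} \le C_{\M,\tau,\kappa}\Big(\tfrac{\eps_n}{h_n} + h_n\sqrt{\lambda^{(k)}}\Big)\lambda^{(k)}.
\]
Taking the larger of the two right-hand sides and dividing by $\lambda^{(k)}>0$ (which is legitimate since $\L^{\tau,\kappa}$ is positive definite) yields
\[
\frac{|\lambda_n^{(k)}-\lambda^{(k)}|}{\lambda^{(k)}} \le C_{\M,\tau,\kappa}\Big(h_n + \tfrac{\eps_n}{h_n} + h_n\sqrt{\lambda^{(k)}} + \sqrt{\lambda^{(k)}}\,\eps_n\Big).
\]

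The remaining step is to absorb the two apparently extra terms into $\tfrac{\eps_n}{h_n}+h_n\sqrt{\lambda^{(k)}}$. Since $\eps_n\ll h_n$, for $n$ large we have $\eps_n\le h_n$ and hence $\sqrt{\lambda^{(k)}}\,\eps_n\le h_n\sqrt{\lambda^{(k)}}$. For the lone $h_n$ term, I would use that $\tau\ge 1/\beta$ implies $\langle f,\L^{\tau,\kappa}f\rangle_{L^2}=D[f]\ge\|f\|_\tau^2\ge \beta^{-1}\|f\|_{L^2}^2$, so by the minimax characterization $\lambda^{(1)}\ge\beta^{-1}$ is a positive constant depending only on $\M,\tau,\kappa$; therefore $h_n\le (\lambda^{(1)})^{-1/2}\,h_n\sqrt{\lambda^{(k)}}$. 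Both extra terms are thus dominated by $h_n\sqrt{\lambda^{(k)}}$, and after renaming the constant one obtains the claimed bound.

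I do not expect any serious obstacle at this stage: all the real difficulty has already been absorbed into the Dirichlet-energy comparison Lemmas \ref{lemma:evalUB}--\ref{lemma:evalLB}, the kernel-smoothing estimates of Lemma \ref{lemma:auxbounds}, Lemma \ref{lemma:En}, and the transport-map construction of Proposition \ref{prop:transmap}. The only point deserving a moment's care is verifying that the ``spurious'' terms $h_n$ and $\sqrt{\lambda^{(k)}}\eps_n$ appearing in the upper bound are genuinely harmless, which reduces to the two elementary facts $\eps_n\ll h_n$ and $\lambda^{(1)}>0$; the latter is precisely where the lower bound on $\tau$ is used.
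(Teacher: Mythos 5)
Your proposal is correct and follows exactly the route the paper takes: the theorem is stated as an immediate consequence of Corollaries \ref{cor:evalUB} and \ref{cor:evalLB}, and your bookkeeping (using $\eps_n \ll h_n$ and the lower bound $\lambda^{(1)} \geq 1/\beta$, which follows from $\tau \geq 1/\beta$, to absorb the spurious $h_n$ and $\sqrt{\lambda^{(k)}}\eps_n$ terms) is precisely the detail the paper leaves implicit in its one-line ``Combining Lemma \ref{lemma:evalUB} and \ref{lemma:evalLB}'' statement.
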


\section{Convergence of Eigenfunctions with Rate} \label{sec:AefunRate}
In this section we prove Theorem \ref{thm:efunRate}.  Before we proceed, we introduce some additional notations. For any interval $J$ of $\mathbb{R}$, denote $H_{J}(\gamma)$ the subspace of $H^1(\gamma)$ that is spanned by eigenfunctions of $\L^{\tau,\kappa}$ associated with eigenvalues in $J$ and $\mathbb{P}_J(\gamma)$ the orthogonal projection from $L^2(\gamma)$ onto $H_J(\gamma)$. Similarly we use the notation $H_{J}(\gamma_n)$ and $\mathbb{P}_J(\gamma_n)$ for $L^{\tau,\kappa}_n$. To ease notation, we will denote $H_{(-\infty,\lambda)}(\gamma)$ and $H_{(-\infty,\lambda)}(\gamma_n)$ as  $H_{\lambda}(\gamma)$ and $H_{\lambda}(\gamma_n),$ respectively. We shall also denote both projections as  $\mathbb{P}_J$ when no confusion arises. 

To start with, we need several auxiliary results. 

\begin{lemma} \label{lemma:efun0}
Let $v\in H_{\lambda}(\gamma_n)$. 
\begin{enumerate}
    \item $\|P\mathcal{I}v-v\|_2\leq C_{\kappa} h_n \sqrt{D_{h_n}[v]} .$
    \item $D[\mathcal{I}v] \geq \left[1-C_{\M,\tau,\kappa}\left(\frac{\eps_n}{h_n}+h_n\sqrt{\lambda}\right)\right]D_{h_n}[v].$ 
\end{enumerate}
\end{lemma}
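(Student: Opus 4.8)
The plan is to prove Part~(1) as the discrete‑side dual of Lemma~\ref{lemma:evalLB}(1), and to prove Part~(2) by running Lemma~\ref{lemma:evalUB}(2) ``backwards'' through the interpolant $\mathcal{I}v$ and absorbing the resulting error with a spectral truncation. For Part~(1) I would use only two elementary properties of the averaging map $P$. First, $P$ is a left inverse of $P^*$: since $P^*v\equiv v(x_i)$ on $U_i$ and $\gamma(U_i)=1/n$, we get $PP^*v(x_i)=n\int_{U_i}P^*v\,d\gamma=v(x_i)$, so $PP^*v=v$. Second, $P$ is an $L^2$ contraction: by the Cauchy--Schwarz inequality and $\gamma(U_i)=1/n$, $\|Pg\|_2^2=\frac1n\sum_i\bigl|n\int_{U_i}g\,d\gamma\bigr|^2\le\sum_i\int_{U_i}|g|^2\,d\gamma=\|g\|_{L^2}^2$. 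Hence $\|P\mathcal{I}v-v\|_2=\|P(\mathcal{I}v-P^*v)\|_2\le\|\mathcal{I}v-P^*v\|_{L^2}$, and the right‑hand side is at most $Ch_n\|\delta v\|\le C_\kappa h_n\sqrt{D_{h_n}[v]}$ by \eqref{eq:IuP*u} and the estimate $\|\delta v\|\le\sqrt{\alpha D_{h_n}[v]}$ already used in the proof of Lemma~\ref{lemma:evalLB}(1).

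For Part~(2) I would deliberately \emph{avoid} trying to bound $D[\mathcal{I}v]$ from below by separately estimating $\|\mathcal{I}v\|_\tau^2$ and $\|\nabla\mathcal{I}v\|_\kappa^2$: the companion of \eqref{eq:eq3} bounds $\|\nabla(\Lambda_r g)\|_\kappa^2$ from \emph{above} by a multiple of $r^{-m-2}E_r[g]$, and mollification at scale $\sim h_n$ genuinely discards the portion of the graph Dirichlet energy of $P^*v$ living below scale $h_n$, so no matching lower bound holds for arbitrary $v$. The point is that this lost energy is negligible precisely when $v\in H_\lambda(\gamma_n)$ with $h_n\sqrt\lambda\ll1$, and the cleanest way to cash this in is to compare $v$ not with $P\mathcal{I}v$ directly (whose discrete Dirichlet energy one cannot usefully control) but with its spectral truncation.

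Concretely: first, since $\mathcal{I}v\in H^1(\gamma)$, Lemma~\ref{lemma:evalUB}(2) applied to $f=\mathcal{I}v$ gives $D_{h_n}[P\mathcal{I}v]\le\bigl[1+C_{\M,\tau,\kappa}(h_n+\eps_n/h_n)\bigr]D[\mathcal{I}v]$, so it suffices to control $D_{h_n}[v]$ by $D_{h_n}[P\mathcal{I}v]$. Second, let $w:=\mathbb{P}_\lambda(\gamma_n)\,P\mathcal{I}v\in H_\lambda(\gamma_n)$ be the $\langle\cdot,\cdot\rangle_2$‑orthogonal projection of $P\mathcal{I}v$ onto the bottom spectral subspace; dropping the high modes cannot increase the discrete Dirichlet energy, so $D_{h_n}[w]\le D_{h_n}[P\mathcal{I}v]$, while $v-w=\mathbb{P}_\lambda(\gamma_n)(v-P\mathcal{I}v)$ (because $v\in H_\lambda(\gamma_n)$) yields $\|v-w\|_2\le\|v-P\mathcal{I}v\|_2\le C_\kappa h_n\sqrt{D_{h_n}[v]}$ by Part~(1) and the $\|\cdot\|_2$‑contractivity of $\mathbb{P}_\lambda(\gamma_n)$. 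Since $v-w$ again lies in $H_\lambda(\gamma_n)$, on which $L^{\tau,\kappa}_n\preceq\lambda I$, this gives $D_{h_n}[v-w]\le\lambda\|v-w\|_2^2\le C_\kappa^2\lambda h_n^2\,D_{h_n}[v]$, so the extra factor $\lambda$ supplied by the spectral constraint combines with the $h_n^2$ from Part~(1) into $(h_n\sqrt\lambda)^2$. Finally, $\sqrt{D_{h_n}[\cdot]}$ is a norm (as $L^{\tau,\kappa}_n$ is positive definite), whence the triangle inequality gives $\sqrt{D_{h_n}[v]}\le\sqrt{D_{h_n}[w]}+\sqrt{D_{h_n}[v-w]}\le\sqrt{D_{h_n}[P\mathcal{I}v]}+C_\kappa\sqrt\lambda\,h_n\sqrt{D_{h_n}[v]}$; rearranging (valid once $C_\kappa\sqrt\lambda h_n<1$) and combining with the first step produces $D_{h_n}[v]\le\bigl[1+C_{\M,\tau,\kappa}(\eps_n/h_n+h_n\sqrt\lambda)\bigr]D[\mathcal{I}v]$ (absorbing the bare $h_n$ into $h_n\sqrt\lambda$, up to a $\tau$‑dependent constant, since $\lambda>\lambda_n^{(1)}\ge\beta^{-1}$), and inverting via $1/(1+x)\ge1-x$ gives the stated bound.

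I expect the only genuinely non‑routine step to be this spectral‑truncation device: the naive defect $e=v-P\mathcal{I}v$ has discrete Dirichlet energy that one cannot bound better than $O(h_n^{-2})\|e\|_2^2=O(D_{h_n}[v])$, which is useless, so one must instead work with $v-w$, whose Dirichlet energy is controlled by the eigenvalue bound on $H_\lambda(\gamma_n)$. Everything else is constant bookkeeping — dependence on $\alpha,\beta$, $\mathrm{Lip}(\tau)$, $\mathrm{Lip}(\kappa)$ and the curvature bound — already assembled in Lemmas~\ref{lemma:evalUB}--\ref{lemma:evalLB}.
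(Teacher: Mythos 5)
Your proposal is correct and takes essentially the same approach as the paper: the key device in Part~(2) — projecting $P\mathcal{I}v$ onto $H_\lambda(\gamma_n)$, using that this cannot increase $D_{h_n}$, then controlling the defect $v-\mathbb{P}_\lambda P\mathcal{I}v$ via the eigenvalue bound $L_n^{\tau,\kappa}\preceq\lambda I$ on $H_\lambda(\gamma_n)$ together with Part~(1), a triangle inequality in the $L_n^{\tau,\kappa}$-energy norm, and Lemma~\ref{lemma:evalUB}(2) — is exactly the paper's argument with $w$ renamed. The only cosmetic difference is in Part~(1): you derive the bound from the elementary identities $PP^*=\mathrm{Id}$ and $\|Pg\|_2\le\|g\|_{L^2}$ applied to $\mathcal{I}v-P^*v$ and then invoke \eqref{eq:IuP*u}, whereas the paper cites the result directly from \cite{burago2013graph}[Lemma 6.4(2)].
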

\begin{proof}
1. By \cite{burago2015graph}[Lemma 6.4(2)], 
\begin{align}
    \|P\mathcal{I}v-v\|_2 \leq Ch_n\|\delta v\|, \label{eq:PIu-u}
\end{align}
where  $\|\delta v\| $ is defined in \eqref{eq:deltau}
The result follows by noticing that $\|\delta v\|\leq \sqrt{\alpha D_{h_n}[v]}$.\par

2. We first bound $D_{h_n}[P\mathcal{I}v]$ in terms of $D_{h_n}[v]$. Denoting $P\mathcal{I}v$ as $w$, we have 
\begin{align*}
    D_{h_n}[w]=\big\langle w, L^{\tau,\kappa}_n w\big\rangle_{2}
    &= \big\langle w-\mathbb{P}_{\lambda}w +\mathbb{P}_{\lambda}w, L^{\tau,\kappa}_n (w-\mathbb{P}_{\lambda}w +\mathbb{P}_{\lambda}w) \big\rangle_{2} \\
    &= \big\langle w-\mathbb{P}_{\lambda}w,  L^{\tau,\kappa}_n(w-\mathbb{P}_{\lambda}w) \big\rangle_2+ \big\langle \mathbb{P}_{\lambda}w, L^{\tau,\kappa}_n \mathbb{P}_{\lambda}w \big\rangle_2+\big\langle w-\mathbb{P}_{\lambda}w, L^{\tau,\kappa}_n\mathbb{P}_{\lambda}w\big\rangle_2\\
    &=\big\langle w-\mathbb{P}_{\lambda}w,  L^{\tau,\kappa}_n(w-\mathbb{P}_{\lambda}w) \big\rangle_2+ \big\langle \mathbb{P}_{\lambda}w, L^{\tau,\kappa}_n \mathbb{P}_{\lambda}w \big\rangle_2 \\
    &\geq \big\langle \mathbb{P}_{\lambda}w, L^{\tau,\kappa}_n \mathbb{P}_{\lambda}w \big\rangle_2,
\end{align*}
where we have used that $\mathbb{P}_{\lambda}w$ and  $L^{\tau,\kappa}_n\mathbb{P}_{\lambda}w \in H_{\lambda}(\gamma_n)$ are orthogonal to $w-\mathbb{P}_{\lambda}w$. Since $L^{\tau,\kappa}_n$ is nonsingular, $\langle \cdot, L^{\tau,\kappa}_n \cdot \rangle_{2}$ defines an inner product \ and the triangle inequality implies 
\begin{align*}
    \sqrt{D_{h_n}[w]} \geq \sqrt{\big\langle \mathbb{P}_{\lambda}w, L^{\tau,\kappa}_n \mathbb{P}_{\lambda}w \big\rangle_2} \geq  \sqrt{\big\langle v, L^{\tau,\kappa}_nv \rangle_{2}} -\sqrt{\langle v-\mathbb{P}_{\lambda}w, L^{\tau,\kappa}_n(v-\mathbb{P}_{\lambda}w) \big\rangle_2}. 
\end{align*}
Now we bound the second term above. Since $v\in H_{\lambda}$, we have $v=\mathbb{P}_{\lambda}v$ and
\begin{align*}
    \big\langle v-\mathbb{P}_{\lambda}w, L^{\tau,\kappa}_n(v-\mathbb{P}_{\lambda}w) \big\rangle_2
    &=\big\langle \mathbb{P}_{\lambda}(v-w), L^{\tau,\kappa}_n\mathbb{P}_{\lambda}(v-w) \big\rangle_2  \\
    &\leq \lambda \| \mathbb{P}_{\lambda}(v-w)\|_2^2  
    \leq \lambda \|v-w\|_2^2  \leq C_{\kappa} \lambda h_n^2 D_{h_n}[v],
\end{align*}
where the last step follows from \eqref{eq:PIu-u}. Hence
\begin{align*}
    \sqrt{D_{h_n}[P\mathcal{I}v]} \geq \left[1-C_{\kappa}\sqrt{\lambda }h_n\right] \sqrt{D_{h_n}[v]},
\end{align*}
and the result follows from Lemma \ref{lemma:evalUB}, which says
\begin{align*}
    D_{h_n}[P\mathcal{I}v] \leq \left[1+C_{\M,\tau,\kappa}\left(h_n+\frac{\eps_n}{h_n}\right)\right]D[\mathcal{I}v]. 
\end{align*}
\end{proof}

We fix orthonormal eigenfunctions $\{v_k\}_{k=1}^n$ and $\{f_k\}_{k=1}^{\infty}$ for $L^{\tau,\kappa}_n$ and $\L^{\tau,\kappa}$. The following lemma bounds the projection error when $J$ is a half-interval. 
\begin{lemma} \label{lemma:efun1}
Suppose $k:=k_n$ is such that $h_n\sqrt{\lambda^{(k_n)}}\ll 1$.  Then, for any $a>0,$
\begin{align*}
    \|\mathcal{I}v_k-\mathbb{P}_{\lambda^{(k)}+a}\mathcal{I}v_k\|_{L^2}^2 &\leq C_{\M,\tau,\kappa} a^{-1}k\lambda^{(k)}\left(\frac{\eps_n}{h_n}+h_n\sqrt{\lambda^{(k)}}\right), \\
    D[\mathcal{I}v_k-\mathbb{P}_{\lambda^{(k)}+a}\mathcal{I}v_k] &\leq C_{\M,\tau,\kappa} a^{-1}(\lambda^{(k)}+a)k\lambda^{(k)}\left(\frac{\eps_n}{h_n}+h_n\sqrt{\lambda^{(k)}}\right).  
\end{align*}
\end{lemma}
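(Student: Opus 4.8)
The plan is to exploit the spectral decomposition of $\mathcal{I}v_k$ against the eigenbasis $\{f_j\}_{j=1}^\infty$ of $\L^{\tau,\kappa}$, writing $\mathcal{I}v_k = \sum_j \langle \mathcal{I}v_k, f_j\rangle_{L^2} f_j$. Then $\mathcal{I}v_k - \mathbb{P}_{\lambda^{(k)}+a}\mathcal{I}v_k = \sum_{j:\,\lambda^{(j)}\ge \lambda^{(k)}+a}\langle \mathcal{I}v_k,f_j\rangle_{L^2}f_j$, so the squared $L^2$-norm is $\sum_{\lambda^{(j)}\ge\lambda^{(k)}+a}\langle\mathcal{I}v_k,f_j\rangle_{L^2}^2$ and the Dirichlet energy is $\sum_{\lambda^{(j)}\ge\lambda^{(k)}+a}\lambda^{(j)}\langle\mathcal{I}v_k,f_j\rangle_{L^2}^2$. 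The first quantity is controlled by the second divided by $\lambda^{(k)}+a$ for the energy bound, and the key is to bound $\sum_{\lambda^{(j)}\ge\lambda^{(k)}+a}(\lambda^{(j)}-\lambda^{(k)})\langle\mathcal{I}v_k,f_j\rangle_{L^2}^2$, since on the relevant index set $\lambda^{(j)}-\lambda^{(k)}\ge a$, giving $\|\mathcal{I}v_k-\mathbb{P}_{\lambda^{(k)}+a}\mathcal{I}v_k\|_{L^2}^2 \le a^{-1}\sum_j(\lambda^{(j)}-\lambda^{(k)})_+\langle\mathcal{I}v_k,f_j\rangle_{L^2}^2 \le a^{-1}\bigl(D[\mathcal{I}v_k]-\lambda^{(k)}\|\mathcal{I}v_k\|_{L^2}^2 + \lambda^{(k)}\sum_{\lambda^{(j)}<\lambda^{(k)}}(\lambda^{(k)}-\lambda^{(j)})\langle\cdots\rangle^2/\cdots\bigr)$; more cleanly, $\sum_j|\lambda^{(j)}-\lambda^{(k)}|\langle\mathcal{I}v_k,f_j\rangle_{L^2}^2 \le D[\mathcal{I}v_k] - \lambda^{(k)}\|\mathcal{I}v_k\|_{L^2}^2 + 2\lambda^{(k)}\sum_{\lambda^{(j)}<\lambda^{(k)}}\langle\mathcal{I}v_k,f_j\rangle_{L^2}^2$, and each piece must be estimated.

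The core estimate is therefore: $D[\mathcal{I}v_k] - \lambda^{(k)}\|\mathcal{I}v_k\|_{L^2}^2$ is small, of order $\lambda^{(k)}(\eps_n/h_n + h_n\sqrt{\lambda^{(k)}})$. For this I would use Lemma~\ref{lemma:evalUB}(2) to get $D_{h_n}[Pf]\le(1+\delta_n)D[f]$ and Lemma~\ref{lemma:evalLB}(2), Lemma~\ref{lemma:efun0}(2) to get $D[\mathcal{I}v_k] \le (1+\delta_n)D_{h_n}[v_k] = (1+\delta_n)\lambda_n^{(k)}$, where $\delta_n \asymp h_n + \eps_n/h_n$, together with $\|\mathcal{I}v_k\|_{L^2}^2 \ge (1-C h_n\sqrt{\lambda^{(k)}})^2\|v_k\|_2^2 = (1-Ch_n\sqrt{\lambda^{(k)}})^2$ from Lemma~\ref{lemma:evalLB}(1), and finally the eigenvalue comparison $\lambda_n^{(k)} \le (1 + C(\eps_n/h_n + h_n\sqrt{\lambda^{(k)}}))\lambda^{(k)}$ from Theorem~\ref{thm:evalRate2} (or Corollary~\ref{cor:evalUB}). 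Chaining these gives $D[\mathcal{I}v_k] - \lambda^{(k)}\|\mathcal{I}v_k\|_{L^2}^2 \le \lambda^{(k)}\bigl[(1+\delta_n)(1+C(\eps_n/h_n+h_n\sqrt{\lambda^{(k)}})) - (1-Ch_n\sqrt{\lambda^{(k)}})^2\bigr] \lesssim \lambda^{(k)}(\eps_n/h_n + h_n\sqrt{\lambda^{(k)}})$.

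The remaining term $\sum_{\lambda^{(j)}<\lambda^{(k)}}\langle\mathcal{I}v_k,f_j\rangle_{L^2}^2$ — the ``downward leakage'' of $\mathcal{I}v_k$ onto lower eigenspaces — is where the factor $k$ enters, and this is the main obstacle. The idea is that each $f_j$ with $j<k$ is well approximated by its discrete counterpart $v_j$ (Theorem~\ref{thm:efunRate} applied at level $j$, but inductively/simultaneously this is circular, so instead I would bound $\langle\mathcal{I}v_k,f_j\rangle_{L^2}$ directly via $\langle\mathcal{I}v_k,f_j\rangle_{L^2}\approx\langle v_k, Pf_j\rangle_2$ up to an $O(h_n\sqrt{\lambda^{(k)}\lambda^{(j)}})$ interpolation error, and then $Pf_j$ is close to a combination of $v_1,\dots,v_j$ which are orthogonal to $v_k$). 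Summing over the $j<k$ low modes produces at most $k$ such contributions, each of size $O(\eps_n/h_n + h_n\sqrt{\lambda^{(k)}})$ after using the already-established eigenvalue and almost-orthogonality bounds, which accounts for the $k\lambda^{(k)}$ prefactor in the statement; this is essentially the argument of \cite{burago2013graph}[Lemma 6.5] adapted to the weighted operators, and the bookkeeping of how the weights $\tau,\kappa$ and the factor $(1+\LT\beta h_n)$, $(1+\LK\alpha h_n)$ propagate through is the delicate part. The Dirichlet-energy version of both estimates then follows by inserting the factor $\lambda^{(j)}\le\lambda^{(k)}+a$ on the truncated tail and repeating the computation, giving the second displayed bound.
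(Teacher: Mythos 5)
Your ``core estimate'' on $D[\mathcal{I}v_k]-\lambda^{(k)}\|\mathcal{I}v_k\|_{L^2}^2$ is correct and uses the right ingredients (Lemmas~\ref{lemma:evalLB}, \ref{lemma:efun0}, Corollary~\ref{cor:evalUB} and Theorem~\ref{thm:evalRate}), but it only controls the \emph{signed} sum $\sum_j(\lambda^{(j)}-\lambda^{(k)})\langle\mathcal{I}v_k,f_j\rangle_{L^2}^2$, and you correctly identify the ``downward leakage'' $\sum_{\lambda^{(j)}<\lambda^{(k)}}\langle\mathcal{I}v_k,f_j\rangle_{L^2}^2$ as the remaining obstacle. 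That is precisely where the gap is. The sketch you offer---$\langle\mathcal{I}v_k,f_j\rangle\approx\langle v_k,Pf_j\rangle$ and then ``$Pf_j$ close to a combination of $v_1,\dots,v_j$ orthogonal to $v_k$''---is a Davis--Kahan eigenspace-stability argument, and its quantitative strength degenerates as $\lambda^{(k)}-\lambda^{(j)}$ shrinks (e.g.\ when $j=k-1$ sits in a cluster of eigenvalues). No uniform eigengap is available here, and the circularity with Theorem~\ref{thm:efunRate} that you flag is also real. Asserting ``$k$ contributions each of size $O(\eps_n/h_n+h_n\sqrt{\lambda^{(k)}})$'' does not constitute a proof of the key bound, so as written the plan does not close.

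The paper's proof bypasses the downward leakage entirely, and this is the missing idea. It introduces the auxiliary operator $\tilde{\L}^{\tau,\kappa}f:=\L^{\tau,\kappa}\mathbb{P}_{\lambda^{(k)}+a}f+\lambda^{(k)}\bigl(f-\mathbb{P}_{\lambda^{(k)}+a}f\bigr)$, which agrees with $\L^{\tau,\kappa}$ on $H_{\lambda^{(k)}+a}(\gamma)$. Therefore, writing $f=\mathcal{I}v_k$ and $g=f-\mathbb{P}_{\lambda^{(k)}+a}f$, one has
$\langle f,(\L^{\tau,\kappa}-\tilde{\L}^{\tau,\kappa})f\rangle_{L^2}=\langle g,\L^{\tau,\kappa}g\rangle_{L^2}-\lambda^{(k)}\|g\|_{L^2}^2$,
which is $\geq a\|g\|_{L^2}^2$ and $\geq\frac{a}{\lambda^{(k)}+a}D[g]$; only the high-frequency tail appears, and no low-frequency coefficient is ever estimated. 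The factor $k$ then comes from a trace-type inequality on the $k$-dimensional subspace $\mathcal{V}=\mathcal{I}(V)$: with $A=\L^{\tau,\kappa}|_{\mathcal{V}}$ and $\tilde A=\tilde{\L}^{\tau,\kappa}|_{\mathcal{V}}$ one shows, via the min--max principle and both directions of the eigenvalue rate, that $\lambda_{\mathcal{V}}^{(j)}-\tilde\lambda_{\mathcal{V}}^{(j)}\leq C\lambda^{(k)}(\eps_n/h_n+h_n\sqrt{\lambda^{(k)}})$ for $j\leq k$, and then the comparison lemma of Burago--Ivanov--Kurylev (their Lemma 7.2, exploiting $A\geq\tilde A$) yields $\langle f,(A-\tilde A)f\rangle\leq k\,\max_j(\lambda_{\mathcal{V}}^{(j)}-\tilde\lambda_{\mathcal{V}}^{(j)})$. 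Without this auxiliary-operator device or an equivalent replacement, your route would essentially have to reprove that lemma to control the low-frequency part.
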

\begin{proof}
Let $V$ be the span of $v_1,\ldots,v_k$ and $\mathcal{V}=\mathcal{I}(V)$. Since $h_n\sqrt{\lambda^{(k)}}\ll 1$, Theorem \ref{thm:evalRate} implies $\lambda^{(k)}_{n} \leq C\lambda^{(k)}$ and then by Lemma \ref{lemma:evalLB}, for any $v\in V$,
\begin{align*}
    \|\mathcal{I}v\|_{L^2}&\geq \left[1-C_{\M,\tau,\kappa}h_n\sqrt{\lambda^{(k)}_{n}}  \right] \|v\|_2\geq \left[1-C_{\M,\tau,\kappa}h_n\sqrt{\lambda^{(k)}}  \right] \|v\|_2\\
    D[\mathcal{I}v] & \leq \left[1+C_{\M,\tau,\kappa}\left(h_n+\frac{\eps_n}{h_n}\right)\right] D_{h_n}[v].
\end{align*}
The assumption $h_n\sqrt{\lambda^{(k)}}\ll 1$ also implies that $\mathcal{I}|_V$ is injective and $\mathcal{V}$ is $k$-dimensional. 
Let $\lambda_{\mathcal{V}}^{(1)},\ldots,\lambda_{\mathcal{V}}^{(k)}$ be the eigenvalues of $A:=\L^{\tau,\kappa}|_{\mathcal{V}}$.  The minimax principle implies that, for $j\leq k$, 
\begin{align}
    \lambda_{\mathcal{V}}^{(j)} \leq \frac{\left[1+C_{\M,\tau,\kappa}\left(h_n+\frac{\eps_n}{h_n}\right)\right]}{\left[1-C_{\M,\tau,\kappa}h_n\sqrt{\lambda^{(k)}}  \right]} \lambda_n^{(j)} 
    \leq \lambda_n^{(j)} + C_{\M,\tau,\kappa}\lambda^{(k)} \left(\frac{\eps_n}{h_n}+h_n\sqrt{\lambda^{(k)}}\right).  \label{eq:lambdaH}
\end{align}
Define another operator $\tilde{\L}^{\tau,\kappa}$  by  
\begin{align*}
    \tilde{\L}^{\tau,\kappa}f=\L^{\tau,\kappa} \mathbb{P}_{\lambda^{(k)}+a} f +\lambda^{(k)}(f-\mathbb{P}_{\lambda^{(k)}+a} f).
\end{align*}
Let $\{f_i\}$'s be the eigenvectors of $\L^{\tau,\kappa}$ associated with eigenvalues $\{\lambda^{(i)}\}$. We observe that $\tilde{\L}_{\tau,\kappa}$ is self-adjoint with respect to the $L^2(\gamma)$ inner product and shares the same eigenvectors with corresponding eigenvalues $\lambda^{(1)},\ldots,\lambda^{(k)},\lambda^{(k)},\ldots$. Let $\tilde{A}:=\tilde{\L}^{\tau,\kappa}|_{\mathcal{V}}$ and $\tilde{\lambda}_{\mathcal{V}}^{(1)},\ldots,\tilde{\lambda}_{\mathcal{V}}^{(k)}$ be its eigenvalues. Let $f\in \mathcal{V}$ and $g=f-\mathbb{P}_{\lambda^{(k)}+a}f$. Since $\L^{\tau,\kappa}\mathbb{P}_{\lambda^{(k)}+a}f =\tilde{\L}_{\tau,\kappa}\mathbb{P}_{\lambda^{(k)}+a}f$, we have by orthogonality
\begin{align}
    \langle f,\L^{\tau,\kappa}f \rangle_{L^2}- \langle f,\tilde{\L}^{\tau,\kappa}f\rangle_{L^2}
	&=\langle g,\L^{\tau,\kappa}g \rangle_{L^2}- \langle g,\tilde{\L}^{\tau,\kappa}g\rangle_{L^2}\nonumber\\
	&=\langle g,\L^{\tau,\kappa}g \rangle_{L^2}- \lambda^{(k)} \|g\|^2_{L^2} \geq \frac{a}{\lambda^{(k)}+a}\langle g,\L^{\tau,\kappa}g \rangle_{L^2},   \label{eq:A>A'}
\end{align}
where the last inequality follows from the fact that $\langle g,\L^{\tau,\kappa}g \rangle_{L^2} \geq (\lambda^{(k)}+a)\|g\|^2_{L^2}$. By the minimax principle, we have $\tilde{\lambda}_{\mathcal{V}}^{(j)} \geq \lambda^{(j)}$ for $j\leq k$ and by Theorem \ref{thm:evalRate} we have 
\begin{align*}
    \tilde{\lambda}_{\mathcal{V}}^{(j)}\geq  \lambda^{(j)}_{n}-C_{\M,\tau,\kappa}\lambda^{(k)}\left(\frac{\eps_n}{h_n}+h_n\sqrt{\lambda^{(k)}}\right) .
\end{align*}
Together with \eqref{eq:lambdaH}, we get
\begin{align*}
    \lambda_{\mathcal{V}}^{(j)}-\tilde{\lambda}_{\mathcal{V}}^{(j)} \leq C_{\M,\tau,\kappa}\lambda^{(k)}\left(\frac{\eps_n}{h_n}+h_n\sqrt{\lambda^{(k)}}\right), 
\end{align*}
and by \cite{burago2015graph}[Lemma 7.2],  for any $f\in \mathcal{V},$ 
\begin{align*}
    \langle f,\L^{\tau,\kappa}f \rangle_{L^2}-\langle f,\tilde{\L}^{\tau,\kappa}f \rangle_{L^2}=\langle f,Af \rangle_{L^2}- \langle f,\tilde{A}f\rangle_{L^2} \leq  k\underset{1\leq j\leq k}{\operatorname{max}} \{\lambda_{\mathcal{V}}^{(j)}-\tilde{\lambda}_{\mathcal{V}}^{(j)}\}\leq C_{\M,\tau,\kappa}k\lambda^{(k)}\left(\frac{\eps_n}{h_n}+h_n\sqrt{\lambda^{(k)}}\right),
\end{align*}
where we have used the fact that $A\geq \tilde{A}$ from \eqref{eq:A>A'}. 
Hence \eqref{eq:A>A'} implies
\begin{align*}
    D[g]&=\langle g,\L^{\tau,\kappa}g \rangle_{L^2} \leq C_{\M,\tau,\kappa}a^{-1}(\lambda^{(k)}+a)k \lambda^{(k)}\left(\frac{\eps_n}{h_n}+h_n\sqrt{\lambda^{(k)}}\right),\\
    \|g\|_{L^2}^2 &\leq C_{\M,\tau,\kappa} a^{-1}k\lambda^{(k)}\left(\frac{\eps_n}{h_n}+h_n\sqrt{\lambda^{(k)}}\right).  
\end{align*}
\end{proof}

The next lemma bounds the projection error when $J$ is a finite interval. 
\begin{lemma} \label{lemma:efun2}
Suppose $k:=k_n$ is such that $h_n\sqrt{\lambda^{(k_n)}}\ll 1$. Let $a\leq b \leq c \leq \lambda^{(k)}$ be constants so that the interval $(\lambda^{(k)}+a,\lambda^{(k)}+b)$ does not contain any eigenvalue of $\L^{\tau,\kappa}$. Then  
\begin{align*}
    \|\mathcal{I}v_k-\mathbb{P}_{(\lambda^{(k)}-c,\lambda^{(k)}+a]}\mathcal{I}v_k\|^2_{L^2} \leq C_{\M,\tau,\kappa}c^{-1}b^{-1}k\left[\lambda^{(k)}\right]^2 \left(\frac{\eps_n}{h_n}+h_n\sqrt{\lambda^{(k)}}\right)+c^{-1}a.
\end{align*}
\end{lemma}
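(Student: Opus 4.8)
The idea is to bound the full projection error $\|\mathcal{I}v_k - \mathbb{P}_{(\lambda^{(k)}-c,\lambda^{(k)}+a]}\mathcal{I}v_k\|_{L^2}^2$ by splitting it according to the spectral subspaces of $\L^{\tau,\kappa}$ lying above $\lambda^{(k)}+a$ and below $\lambda^{(k)}-c$, and then controlling each piece separately. Writing $P_{\text{hi}} := I - \mathbb{P}_{\lambda^{(k)}+a}$ (projection onto eigenspaces with eigenvalue $>\lambda^{(k)}+a$, which equals $>\lambda^{(k)}+b$ by the spectral-gap hypothesis) and $P_{\text{lo}} := \mathbb{P}_{(-\infty,\lambda^{(k)}-c]}$ for the low part, orthogonality of these three blocks gives
\begin{align*}
\|\mathcal{I}v_k - \mathbb{P}_{(\lambda^{(k)}-c,\lambda^{(k)}+a]}\mathcal{I}v_k\|_{L^2}^2 = \|P_{\text{hi}}\mathcal{I}v_k\|_{L^2}^2 + \|P_{\text{lo}}\mathcal{I}v_k\|_{L^2}^2.
\end{align*}
The first term is already handled: since the gap hypothesis means $P_{\text{hi}}\mathcal{I}v_k = \mathcal{I}v_k - \mathbb{P}_{\lambda^{(k)}+b}\mathcal{I}v_k$, the first estimate of Lemma \ref{lemma:efun1} (applied with $b$ in place of $a$) bounds it by $C_{\M,\tau,\kappa}b^{-1}k\lambda^{(k)}(\frac{\eps_n}{h_n}+h_n\sqrt{\lambda^{(k)}})$, which is dominated by the $c^{-1}b^{-1}k[\lambda^{(k)}]^2(\cdots)$ term in the claimed bound because $c\le\lambda^{(k)}$.

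The main work is the low part $\|P_{\text{lo}}\mathcal{I}v_k\|_{L^2}^2$. Here the natural route is an energy/Rayleigh-quotient argument: any component of $\mathcal{I}v_k$ lying in eigenspaces with eigenvalue $\le\lambda^{(k)}-c$ contributes a Dirichlet energy that is at most $(\lambda^{(k)}-c)\|P_{\text{lo}}\mathcal{I}v_k\|_{L^2}^2$, whereas the full energy $D[\mathcal{I}v_k]$ is, by Lemma \ref{lemma:evalLB}(2) and the eigenvalue comparison Theorem \ref{thm:evalRate} (valid since $h_n\sqrt{\lambda^{(k_n)}}\ll1$), at least $\lambda^{(k)}_n\|v_k\|_2^2(1 - C(\cdots))$, i.e. essentially $\lambda^{(k)}$ up to the small error. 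The deficit between the true energy $\approx\lambda^{(k)}\|\mathcal{I}v_k\|_{L^2}^2$ and what a function supported below $\lambda^{(k)}-c$ could provide is at least $c\|P_{\text{lo}}\mathcal{I}v_k\|_{L^2}^2$; balancing these shows $\|P_{\text{lo}}\mathcal{I}v_k\|_{L^2}^2 \lesssim c^{-1}(D[\mathcal{I}v_k] - \lambda^{(k)}\|\mathcal{I}v_k\|_{L^2}^2 + \text{error})$. The error term from the energy comparison is of order $\lambda^{(k)}(\frac{\eps_n}{h_n}+h_n\sqrt{\lambda^{(k)}})$ times $\|\mathcal{I}v_k\|_{L^2}^2$, producing the $c^{-1}a$ summand once one also pays attention to the $b$-level gap — more carefully, one must use the second estimate in Lemma \ref{lemma:efun1}, $D[\mathcal{I}v_k - \mathbb{P}_{\lambda^{(k)}+b}\mathcal{I}v_k]\le C b^{-1}(\lambda^{(k)}+b)k\lambda^{(k)}(\cdots)$, to absorb the high-frequency energy before running the low-frequency comparison, which is where the extra factor $\lambda^{(k)}/c$ and hence the $c^{-1}b^{-1}k[\lambda^{(k)}]^2$ structure appears.

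Concretely I would: (i) decompose $\mathcal{I}v_k = P_{\text{lo}}\mathcal{I}v_k + P_{\text{mid}}\mathcal{I}v_k + P_{\text{hi}}\mathcal{I}v_k$ into the three orthogonal blocks; (ii) bound $\|P_{\text{hi}}\mathcal{I}v_k\|^2$ and $D[P_{\text{hi}}\mathcal{I}v_k]$ via Lemma \ref{lemma:efun1}; (iii) write $D[\mathcal{I}v_k] = D[P_{\text{lo}}] + D[P_{\text{mid}}] + D[P_{\text{hi}}]$ and use $D[P_{\text{lo}}]\le(\lambda^{(k)}-c)\|P_{\text{lo}}\|^2$, together with the lower bound $D[\mathcal{I}v_k]\ge(\lambda^{(k)} - C_{\M,\tau,\kappa}\lambda^{(k)}(\frac{\eps_n}{h_n}+h_n\sqrt{\lambda^{(k)}}))\|v_k\|_2^2$ from Lemma \ref{lemma:evalLB}(2) combined with Theorem \ref{thm:evalRate}; (iv) also use the norm comparison $\|\mathcal{I}v_k\|_{L^2}\ge(1-C h_n\sqrt{\lambda^{(k)}})$ so that $\|\mathcal{I}v_k\|_{L^2}^2$ and $1$ are interchangeable up to harmless factors; (v) solve the resulting linear inequality for $\|P_{\text{lo}}\mathcal{I}v_k\|_{L^2}^2$, collecting terms into the stated form. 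The main obstacle is the bookkeeping in step (v): one has to be careful that the $P_{\text{mid}}$ block (eigenvalues in $(\lambda^{(k)}-c,\lambda^{(k)}+a]$) is the ``good'' piece and does not help or hurt the energy comparison in the wrong direction, and that splitting the single error $\frac{\eps_n}{h_n}+h_n\sqrt{\lambda^{(k)}}$ across the $c^{-1}b^{-1}$ and $c^{-1}a$ contributions is done consistently — in particular the $c^{-1}a$ term is exactly the (unavoidable) slack coming from allowing the interval to reach up to $\lambda^{(k)}+a$ rather than stopping at an eigenvalue, so it should be tracked as the contribution of the genuine eigenvalue of $\tilde{A}$ closest to but below $\lambda^{(k)}+a$, analogous to the argument in \cite{burago2013graph}.
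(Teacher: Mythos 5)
Your decomposition of $\mathcal{I}v_k$ into the three orthogonal spectral blocks, the use of Lemma \ref{lemma:efun1} at level $b$ for the high block $f_+$, and the energy-balancing argument for the low block $f_-$ all match the paper's proof in structure and in spirit; your accounting for the origin of the $c^{-1}a$ term and the role of $c\le\lambda^{(k)}$ in absorbing the $f_+$ contribution is also accurate. There is, however, one concrete gap at the heart of the argument.

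You invoke ``Lemma \ref{lemma:evalLB}(2) combined with Theorem \ref{thm:evalRate}'' to obtain the lower bound
\begin{align*}
D[\mathcal{I}v_k] \ge \Bigl[1 - C_{\M,\tau,\kappa}\Bigl(\tfrac{\eps_n}{h_n}+h_n\sqrt{\lambda^{(k)}}\Bigr)\Bigr]\,\lambda^{(k)},
\end{align*}
but Lemma \ref{lemma:evalLB}(2) states $D[\mathcal{I}v]\le \bigl[1+C(\cdots)\bigr]D_{h_n}[v]$, an \emph{upper} bound valid for every $v\in L^2(\gamma_n)$. It cannot give the inequality you need. The lower bound is genuinely the content of Lemma \ref{lemma:efun0}(2), which says $D[\mathcal{I}v]\ge\bigl[1-C_{\M,\tau,\kappa}(\tfrac{\eps_n}{h_n}+h_n\sqrt{\lambda})\bigr]D_{h_n}[v]$ \emph{for $v\in H_\lambda(\gamma_n)$} --- a restriction that is essential (for a highly oscillatory $v$ the interpolation $\mathcal{I}v$ smooths the energy away and no such lower bound can hold), and which is satisfied here because $v_k$ is an eigenvector of $L^{\tau,\kappa}_n$ with eigenvalue $\lambda^{(k)}_n$. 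Your plan never invokes this restriction or the lemma that exploits it, so as written the crucial step $D[\mathcal{I}v_k]\gtrsim\lambda^{(k)}_n$ is unjustified. Replacing the citation of Lemma \ref{lemma:evalLB}(2) by Lemma \ref{lemma:efun0}(2), and noting $D_{h_n}[v_k]=\lambda^{(k)}_n$ before applying Theorem \ref{thm:evalRate}, repairs the argument and brings it into full agreement with the paper's proof.
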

\begin{proof}
Let $f=\mathcal{I}v_k$ and decompose it as 
\begin{align*}
    f=\mathbb{P}_{(\lambda^{(k)}-c,\lambda^{(k)}+a]} f + \mathbb{P}_{(-\infty,\lambda^{(k)}-c]} f + \mathbb{P}_{(\lambda^{(k)}+a,\infty)} f=:f_0+f_-+f_+\, .
\end{align*}
Orthogonality implies 
\begin{align*}
    \langle f,\L^{\tau,\kappa}f \rangle_{L^2}=\langle f_0,\L^{\tau,\kappa}f_0 \rangle_{L^2}+\langle f_-,\L^{\tau,\kappa}f_- \rangle_{L^2}+\langle f_+,\L^{\tau,\kappa}f_+ \rangle_{L^2},
\end{align*}
and we have by assumption that $f_+=\mathbb{P}_{[\lambda^{(k)}+b,\infty)}f$. 
By Lemma \ref{lemma:efun1}, we have 
\begin{align}
    \|f_+\|_{L^2}^2& \leq C_{\M,\tau,\kappa}b^{-1} k \lambda^{(k)}\left(\frac{\eps_n}{h_n}+h_n\sqrt{\lambda^{(k)}}\right), \label{eq:f+}\\
    \langle f_+,\L^{\tau,\kappa}f_+ \rangle_{L^2}&\leq C_{\M,\tau,\kappa} b^{-1}(\lambda^{(k)}+b) k \lambda^{(k)}\left(\frac{\eps_n}{h_n}+h_n\sqrt{\lambda^{(k)}}\right). \nonumber 
\end{align}
By Lemma \ref{lemma:efun0} (2), we have 
\begin{align*}
    \langle f,\L^{\tau,\kappa}f \rangle_{L^2}=D[\mathcal{I}v_k] \geq \left[1-C_{\M,\tau,\kappa} \left(\frac{\eps_n}{h_n}+h_n\sqrt{\lambda^{(k)}}\right)\right]D_{h_n}[v_k] 
    =\left[1-C_{\M,\tau,\kappa} \left(\frac{\eps_n}{h_n}+h_n\sqrt{\lambda^{(k)}}\right)\right] \lambda^{(k)}_{n}.
\end{align*}
Then 
\begin{align}
    \langle f_0,\L^{\tau,\kappa}f_0 \rangle_{L^2}+\langle f_-,\L^{\tau,\kappa}f_- \rangle_{L^2}
    &=\langle f,\L^{\tau,\kappa}f\rangle_{L^2}-\langle f_+,\L^{\tau,\kappa}f_+  \rangle_{L^2} \nonumber\\
    &\geq  \lambda^{(k)}_{n}- C_{\M,\tau,\kappa}b^{-1}(\lambda^{(k)}+b)k\lambda^{(k)} \left(\frac{\eps_n}{h_n}+h_n\sqrt{\lambda^{(k)}}\right). \label{eq:f0+f-} 
\end{align}
We also have 
\begin{align*}
    \langle f_0,\L^{\tau,\kappa}f_0\rangle_{L^2} &\leq (\lambda^{(k)}+a) \|f_0\|_{L^2}, \\
    \langle f_-,\L^{\tau,\kappa}f_-\rangle_{L^2} &\leq (\lambda^{(k)}-c)\|f_-\|_{L^2},
\end{align*}
which implies 
\begin{align*}
    \langle f_0,\L^{\tau,\kappa}f_0 \rangle_{L^2}+\langle f_-,\L^{\tau,\kappa}f_- \rangle_{L^2} 
    &\leq \lambda^{(k)}(\|f_0\|_{L^2}^2+\|f_-\|_{L^2}^2) + a\|f_0\|_{L^2}^2-c \|f_-\|^2_{L^2} \\
    &\leq \lambda^{(k)} \|f\|_{L^2}^2+a\|f\|_{L^2}^2-c \|f_-\|_{L^2}^2.
\end{align*}
By Lemma \ref{lemma:evalLB}(1), we have 
\begin{align*}
    \|f\|_{L^2} = \|\mathcal{I}v_k\|_{L^2} \leq \left[1+C_{\kappa}h_n\sqrt{\lambda^{(k)}_{n}}\right]\|v_k\|_2 \leq 1+C_{\kappa}h_n\sqrt{\lambda^{(k)}},
\end{align*}
which gives 
\begin{align*}
    \langle f_0,\L^{\tau,\kappa}f_0 \rangle_{L^2}+\langle f_-,\L^{\tau,\kappa}f_- \rangle_{L^2} \leq (\lambda^{(k)}+a) (1+C_{\kappa}h_n^2\lambda^{(k)})-c \|f_-\|_{L^2}^2. 
\end{align*}
Combining with  \eqref{eq:f0+f-} we have
\begin{align*}
    (\lambda^{(k)}+a) (1+C_{\kappa}h_n^2\lambda^{(k)})-c \|f_-\|_{L^2}^2 \geq \lambda^{(k)}_{n}- C_{\M,\tau,\kappa}b^{-1}(\lambda^{(k)}+b)k\lambda^{(k)} \left(\frac{\eps_n}{h_n}+h_n\sqrt{\lambda^{(k)}}\right),
\end{align*}
and then 
\begin{align}
    \|f_-\|_{L^2}^2 &\leq   C_{\M,\tau,\kappa}c^{-1}b^{-1}(\lambda^{(k)}+b)k\lambda^{(k)} \left(\frac{\eps_n}{h_n}+h_n\sqrt{\lambda^{(k)}}\right)+c^{-1}a+c^{-1}\left|\lambda^{(k)}_{n}-\lambda^{(k)}\right|\nonumber\\
    &\leq C_{\M,\tau,\kappa}c^{-1}b^{-1}k\left[\lambda^{(k)}\right]^2 \left(\frac{\eps_n}{h_n}+h_n\sqrt{\lambda^{(k)}}\right)+c^{-1}a\label{eq:f-},
\end{align}
where the assumption $b\leq \lambda^{(k)}$ is used in the last step. 
The result then follows by combining \eqref{eq:f+} and \eqref{eq:f-} and noticing that $\|\mathcal{I}v_k-\mathbb{P}_{(\lambda^{(k)}-c,\lambda^{(k)}+a]}\mathcal{I}v_k\|^2_{L2} =\|f_+\|_{L^2}^2+\|f_-\|_{L^2}^2$.
\end{proof}

Now we are ready to prove Theorem \ref{thm:efunRate}.
\begin{theorem}[Eigenfunction Approximation] \label{thm:efunRate2}
Let $\lambda$ be an eigenvalue of $\L^{\tau,\kappa}$ with multiplicity $\ell$, i.e., 
\begin{align*}
    \lambda^{(k_n-1)}<\lambda^{(k_n)}=\lambda=\ldots=\lambda^{(k_n+\ell-1)}<\lambda^{(k_n+\ell)}.
\end{align*}
Suppose that $h_n\sqrt{\lambda^{(k_n)}}\ll 1$ and $\eps_n\ll h_n$ for $n$ large.  Let $\psi^{(k_n)}_{n},\ldots,\psi^{(k_n+\ell-1)}_{n}$ be orthonormal eigenvectors of $L^{\tau,\kappa}_n$ associated with eigenvalues 
$\lambda^{(k_n)}_{n},\ldots,\lambda^{(k_n+\ell-1)}_{n}$. Then there exists orthonormal eigenfunctions $\psi^{(k_n)},\ldots,\psi^{(k_n+\ell-1)}$ of $\L^{\tau,\kappa}$ so that for
$j=k_n,\ldots,k_n+\ell-1$
\begin{align*}
    \|\psi^{(j)}_{n}\circ T_n-\psi^{(j)}\|_{L^2}^2 &\leq C_{\M,\tau,\kappa}  j^3  \left(\frac{\eps_n}{h_n}+h_n\sqrt{\lambda^{(j)}}\right),\\
    \|\psi^{(j)}_{n}\circ \mathcal{T}_n-\psi^{(j)}\|_{L^2}^2 &\leq C_{\M,\tau,\kappa} (\log n)^{mc_m}  j^3  \left(\frac{\eps_n}{h_n}+h_n\sqrt{\lambda^{(j)}}\right).
\end{align*}
\end{theorem}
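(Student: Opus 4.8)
The plan is to transfer the eigenvalue estimate of Theorem \ref{thm:evalRate2} into an eigenfunction estimate by the standard ``spectral projection'' argument, following the scheme in \cite{burago2013graph}[Section 7] but tracking the extra $\tau,\kappa$ factors and the dependence on the index $j$. The key intermediate object is the interpolation $\mathcal{I}v$: for each discrete eigenfunction $v=\psi_n^{(j)}$ with eigenvalue $\lambda_n^{(j)}\approx\lambda^{(j)}$, we have already controlled (i) $\|\mathcal{I}v\|_{L^2}\approx\|v\|_2=1$ and $D[\mathcal{I}v]\approx\lambda_n^{(j)}$ via Lemmas \ref{lemma:evalLB} and \ref{lemma:efun0}, and (ii) how much of $\mathcal{I}v$ leaks outside a spectral window $(\lambda^{(j)}-c,\lambda^{(j)}+a]$ of $\L^{\tau,\kappa}$ via Lemma \ref{lemma:efun2}. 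Choosing the window so that its endpoints fall into spectral gaps of $\L^{\tau,\kappa}$ of size comparable to the smallest gap near $\lambda^{(j)}$, we get that $\mathbb{P}_{\{\lambda^{(j)}\}}\mathcal{I}v$ is within $O\bigl(j^{3/2}\sqrt{\eps_n/h_n+h_n\sqrt{\lambda^{(j)}}}\bigr)$ of $\mathcal{I}v$ in $L^2$. Here the combinatorial factor $j^{3}$ (hence $j^{3/2}$ after taking square roots) arises from two sources in Lemma \ref{lemma:efun2}: the factor $k$ appearing there (dimension of the auxiliary space) and the inverses of the spectral gaps $a,b,c$, which by Weyl's law $\lambda^{(i)}\asymp i^{2/m}$ and a pigeonhole argument can be taken of order $\lambda^{(j)}/j\asymp j^{2/m-1}$, so $c^{-1}b^{-1}\lambda^{(j)}\asymp j$ and altogether $c^{-1}b^{-1}k[\lambda^{(j)}]^2\asymp j\cdot j\cdot j = j^3$ up to constants. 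I would spell out this gap selection as a lemma (or cite the analogous statement in \cite{trillos2019error,burago2013graph}).

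Next I would pass from the projection statement to an actual matching of orthonormal bases. Restricting to the eigenspace $E_\lambda=\mathrm{span}\{\psi^{(k_n)},\ldots,\psi^{(k_n+\ell-1)}\}$, the vectors $\{\mathbb{P}_{\{\lambda\}}\mathcal{I}\psi_n^{(j)}\}_{j=k_n}^{k_n+\ell-1}$ are ``almost orthonormal'' in $E_\lambda$: their Gram matrix is $I_\ell + O(\delta)$ with $\delta=O\bigl(j^{3/2}\sqrt{\eps_n/h_n+h_n\sqrt{\lambda^{(j)}}}\bigr)$, because $\langle \mathcal{I}\psi_n^{(i)},\mathcal{I}\psi_n^{(j)}\rangle_{L^2}\approx\langle\psi_n^{(i)},\psi_n^{(j)}\rangle_2=\delta_{ij}$ up to the interpolation error (this uses $\|P\mathcal{I}v-v\|_2$ small from Lemma \ref{lemma:efun0}(1) together with the near-isometry of $P$ on low-frequency functions). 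A Gram--Schmidt / polar-decomposition correction inside the finite-dimensional space $E_\lambda$ then produces a genuine orthonormal family $\{\psi^{(j)}\}$ of eigenfunctions with $\|\mathbb{P}_{\{\lambda\}}\mathcal{I}\psi_n^{(j)}-\psi^{(j)}\|_{L^2}=O(\delta)$, hence $\|\mathcal{I}\psi_n^{(j)}-\psi^{(j)}\|_{L^2}=O(\delta)$. Finally, to replace $\mathcal{I}\psi_n^{(j)}$ by $\psi_n^{(j)}\circ T_n$, note $\psi_n^{(j)}\circ T_n = P^*\psi_n^{(j)}$ and $\mathcal{I}v=\Lambda_{h_n-2\eps_n}P^*v$, so $\|\mathcal{I}v-P^*v\|_{L^2}\le Ch_n\|\delta v\|\le Ch_n\sqrt{\alpha D_{h_n}[v]}=O(h_n\sqrt{\lambda^{(j)}})$ by \eqref{eq:IuP*u}; this is absorbed into $\delta$. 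Squaring gives the first displayed bound.

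For the second bound, with the nearest-neighbor map $\mathcal{T}_n$, the only change is that $\psi_n^{(j)}\circ\mathcal{T}_n=\sum_i\psi_n^{(j)}(x_i)\mathbf{1}_{V_i}$ uses Voronoi cells $V_i$ rather than transport cells $U_i$. Since $\gamma(U_i)=1/n$ exactly while $\gamma(V_i)$ may differ, and $U_i,V_i\subset B_\M(x_i,C\eps_n)$ overlap heavily, I would estimate $\|\psi_n^{(j)}\circ T_n-\psi_n^{(j)}\circ\mathcal{T}_n\|_{L^2}^2=\sum_i\psi_n^{(j)}(x_i)^2\,\gamma(U_i\triangle V_i)$ and bound $\gamma(U_i\triangle V_i)$ by the measure of the ``boundary layer'', which by Proposition \ref{prop:transmap} and a packing argument is $O((\log n)^{mc_m}/n)$ per cell in the worst case; summing against $\sum_i\psi_n^{(j)}(x_i)^2=n$ produces the extra factor $(\log n)^{mc_m}$. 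The main obstacle is getting the index dependence right: keeping careful track of how Weyl's law converts ``spectral gap of size $\asymp\lambda^{(j)}/j$'' into the $j^3$ prefactor, and ensuring the near-orthonormality constant $\delta$ really is uniform over the $\ell$ eigenfunctions in the degenerate block (so that the finite-dimensional rotation does not blow up), require the hypothesis $h_n\sqrt{\lambda^{(k_n)}}\ll1$ to be used quantitatively at each step. The $(\log n)^{mc_m}$ comparison between $T_n$ and $\mathcal{T}_n$ is the genuinely new technical piece relative to \cite{burago2013graph,trillos2019error} and is where I expect most of the work.
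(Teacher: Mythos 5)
The first displayed bound is handled correctly and by essentially the same route as the paper: apply Lemma~\ref{lemma:efun2} with a spectral window of width $\asymp\delta_\lambda$, use the near-isometry of $\mathcal{I}$ plus Gram--Schmidt inside the $\lambda$-eigenspace to produce the orthonormal family $\{\psi^{(j)}\}$, then trade $\mathcal{I}\psi_n^{(j)}$ for $P^*\psi_n^{(j)}=\psi_n^{(j)}\circ T_n$ via \eqref{eq:IuP*u}, and finally apply Weyl's law to convert $\delta_\lambda^{-2}j[\lambda^{(j)}]^2$ into $j^3$. Your bookkeeping of the exponents is right.

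Your argument for the second bound (the $\mathcal{T}_n$ version) does not work. First, the proposed identity $\|\psi_n^{(j)}\circ T_n-\psi_n^{(j)}\circ\mathcal{T}_n\|_{L^2}^2=\sum_i\psi_n^{(j)}(x_i)^2\,\gamma(U_i\triangle V_i)$ is not correct: on $U_i\cap V_k$ with $i\neq k$ the pointwise difference is $\psi_n^{(j)}(x_i)-\psi_n^{(j)}(x_k)$, so the true expression is $\sum_{i\neq k}\gamma(U_i\cap V_k)\,|\psi_n^{(j)}(x_i)-\psi_n^{(j)}(x_k)|^2$, which is controlled by the oscillation of the discrete eigenfunction between neighboring cells, not by $\psi_n^{(j)}(x_i)^2$ alone. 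Second, even granting the (loose) upper bound $\lesssim\sum_i\psi_n^{(j)}(x_i)^2\,\gamma(U_i\triangle V_i)$, your numbers do not close: $\max_i\gamma(U_i\triangle V_i)=O(\eps_n^m)=O((\log n)^{mc_m}/n)$ and $\sum_i\psi_n^{(j)}(x_i)^2=n$ gives $\|\psi_n^{(j)}\circ T_n-\psi_n^{(j)}\circ\mathcal{T}_n\|_{L^2}^2=O((\log n)^{mc_m})$, an \emph{additive} term that is not small (indeed it is worse than the trivial $O(1)$ bound), whereas the theorem asserts $(\log n)^{mc_m}$ enters as a \emph{multiplicative} factor on the already-small quantity $j^3(\eps_n/h_n+h_n\sqrt{\lambda^{(j)}})$. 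What is missing is that the Voronoi error must be fed through the Dirichlet-energy control of $\psi_n^{(j)}$ --- the oscillation $|\psi_n^{(j)}(x_i)-\psi_n^{(j)}(x_k)|$ between adjacent cells is small precisely because $D_{h_n}[\psi_n^{(j)}]=\lambda_n^{(j)}$ is bounded --- and this is exactly what the paper imports from \cite{trillos2019error}[Lemma 17 and the proof of Theorem 6], which yields the bound $\|\psi_n^{(j)}\circ\mathcal{T}_n-\psi^{(j)}\|_{L^2}\lesssim\lambda_j^{(m+1)/4}\eps_n+(\log n)^{mc_m/2}\|P^*\psi_n^{(j)}-\psi^{(j)}\|_{L^2}$ with the logarithm appearing multiplicatively.

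A minor remark on the first part: like the paper, you invoke Weyl's law to assert $\delta_\lambda\asymp j^{2/m-1}$, but Weyl's law controls only the average gap; you describe this as a ``pigeonhole argument,'' which is closer to a fix than the paper's bare ``hence,'' but neither statement is rigorous as written for general manifolds where near-degeneracies can shrink individual gaps.
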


\begin{proof}
For each $j=k,\ldots,k+\ell-1$, let $a=(\eps_n/h_n+h_n\sqrt{\lambda^{(j)}})$ and $b=c=\frac{\delta_{\lambda}}{2}$, where 
\begin{align*}
\delta_{\lambda}=\operatorname{min}\{\lambda^{(k_n)}-\lambda^{(k_n-1)}, \lambda^{(k_n+\ell)}-\lambda^{(k_n-\ell-1)}\}
\end{align*}
so that the assumptions of Lemma \ref{lemma:efun2} are satisfied. Indeed, $a\leq b\leq c \leq \lambda^{(k)}$ and the interval $(\lambda^{(j)}+a,\lambda^{(j)}+b)$ does not contain any eigenvalue of $\L^{\tau,\kappa}$ and $\mathbb{P}_{(\lambda^{(j)}-c,\lambda^{(j)}+a)}=\mathbb{P}_{\{\lambda^{(j)}\}}$. Hence we obtain
\begin{align*}
    \|\mathcal{I}\psi_n^{(j)}-\tilde{\psi}^{(j)}\|_{L^2}^2 \leq C_{\M,\tau,\kappa} \delta_{\lambda}^{-2} j \left[\lambda^{(j)}\right]^2 \left(\frac{\eps_n}{h_n}+h_n\sqrt{\lambda^{(j)}}\right),
\end{align*}
where $\tilde{\psi}^{(j)}=\mathbb{P}_{\{\lambda\}}\mathcal{I}\psi_n^{(j)}$  is a $\lambda$-eigenfunction of $\L^{\tau,\kappa}$. 
Lemma \ref{lemma:evalLB}(1) implies that $\mathcal{I}$ is almost an isometry  on the span of $\psi_n^{(k)},\ldots,\psi^{(k+\ell-1)}_n$ and by the polarization identity we get that the $\mathcal{I}\psi_n^{(j)}$'s are almost orthonormal up to  $C_{\M,\tau,\kappa}h_n\sqrt{\lambda^{(j)}}$. This implies the $\tilde{\psi}^{(j)}$'s are almost orthogonal up to $C_{\M,\tau,\kappa}\delta_{\lambda}^{-2}j[\lambda^{(j)}]^2(\eps_n/h_n+h_n\sqrt{\lambda^{(j)}})$. Hence letting $\{\psi^{(j)}\}_{j=k}^{k+\ell-1}$ be the Gram-Schmidt orthogonalization of $\{\tilde{\psi}^{(j)}\}_{j=k}^{k+\ell-1}$, we get
\begin{align*}
    \|\mathcal{I}\psi_n^{(j)}-\psi^{(j)}\|_{L^2}^2 \leq C_{\M,\tau,\kappa} \delta_{\lambda}^{-2} j \left[\lambda^{(j)}\right]^2 \left(\frac{\eps_n}{h_n}+h_n\sqrt{\lambda^{(j)}}\right).
\end{align*}
Using  \eqref{eq:IuP*u}  that $\|\mathcal{I}v-P^*v \|^2_{L^2} \leq Ch_n^2D_{h_n}[v]$ gives 
\begin{align*}
    \|P^*\psi_n^{(j)}-\psi^{(j)}\|_{L^2}^2 \leq C_{\M,\tau,\kappa} \delta_{\lambda}^{-2} j \left[\lambda^{(j)}\right]^2 \left(\frac{\eps_n}{h_n}+h_n\sqrt{\lambda^{(j)}}\right).
\end{align*}
By Weyl's law that $\lambda^{(j)} \asymp j^{\frac{2}{m}}$ and hence $\delta_{\lambda} \asymp j^{\frac{2}{m}-1} \asymp j^{-1}\lambda^{(j)}$, we conclude that 
\begin{align*}
    \|P^*\psi_n^{(j)}-\psi^{(j)}\|_{L^2}^2 \leq C_{\M,\tau,\kappa}  j^3  \left(\frac{\eps_n}{h_n}+h_n\sqrt{\lambda^{(j)}}\right),
\end{align*}
which is the first assertion of the theorem by noticing that $P^{*}\psi_n^{(j)}=\psi_n^{(j)}\circ T_n$. 
Now by Lemma 17 and the proof of Theorem 6 in \cite{trillos2019error}, we have  
\begin{align*}
    \|\psi_n^{(j)}\circ \mathcal{T}_n-\psi^{(j)} \|_{L^2} &\leq C_{\M}\left[\lambda_j^{\frac{m+1}{4}} \eps_n + (\log n)^{\frac{mc_m}{2}} \|P^*\psi_n^{(j)}-\psi^{(j)}\|_{L^2} \right]\\
    &\leq C_{\M,\tau,\kappa}\left[j^{\frac{m+1}{2m}}\eps_n+(\log n)^{\frac{mc_m}{2}}j^{\frac{3}{2}}\sqrt{\frac{\eps_n}{h_n}+h_n\sqrt{\lambda^{(j)}}}\right]\\
    &\leq C_{\M,\tau,\kappa}(\log n)^{\frac{mc_m}{2}}j^{\frac{3}{2}}\sqrt{\frac{\eps_n}{h_n}+h_n\sqrt{\lambda^{(j)}}},
\end{align*}
where we have used that $\eps_n\ll h_n$ in the last step. 
\end{proof}

\section{Convergence of Gaussian Mat\'ern Field} \label{sec:APTL2}

Now we are ready to prove Theorem \ref{thm:Rate}. Theorem \ref{thm:Rate2} can be proved in the same fashion using the second assertion of Theorem \ref{thm:efunRate2}.
\begin{theorem} \label{thm:rate}
Suppose $\tau$ is Lipschitz, $\kappa\in C^1(\M)$ and both are bounded below by positive constants. 
Let $s>m$ and 
\begin{align*}
    \frac{(\log n)^{c_m}}{n^{1/m}} \ll h_n \ll \frac{1}{n^{1/2s}},
\end{align*}
where $c_m=3/4$ if $m=2$ and $c_m=1/m$ otherwise. 
Then, with probability one, 
\begin{align*}
    \mathbb{E}\|u_n\circ T_n-u\|_{L^2} \xrightarrow{n\rightarrow{\infty}} 0.
\end{align*}
If further $s>(5m+1)/2$ and
\begin{align}
    h_n\asymp \sqrt{\frac{(\log n)^{c_m}}{n^{1/m}}}. \label{eq:hnScaling}
\end{align}
Then, with probability one, 
\begin{align*}
    \mathbb{E}\|u_n\circ T_n-u\|_{L^2} =O\left(\sqrt{h_n}\right)= O\left(\frac{(\log n)^{c_m/4}}{n^{1/4m}}\right). 
\end{align*}
\end{theorem}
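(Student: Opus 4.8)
\emph{Proof proposal.} The plan is to couple $u$ and $u_n$ through the shared coefficients $\{\xi^{(i)}\}$, to work on the almost sure event of Proposition \ref{prop:transmap} (where $\eps_n\ll h_n$ by \eqref{eq:hnscaling}), and to reduce matters to the spectral estimates of Theorems \ref{thm:evalRate} and \ref{thm:efunRate}. The key device is a frequency cutoff $k_n$ in the range $n^{m/(2s)}\ll k_n\ll h_n^{-m}$, which is nonempty precisely because of the upper bound $h_n\ll n^{-1/(2s)}$ in \eqref{eq:hnscaling}, and on which $h_n\sqrt{\lambda^{(k_n)}}\asymp h_n k_n^{1/m}\ll1$ by Weyl's law, so that Theorems \ref{thm:evalRate} and \ref{thm:efunRate} are available for indices $i\le k_n$. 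Writing $\rho:=\tau^{s/2-m/4}\kappa^{m/4}$ for the (fixed, Lipschitz, and, since $s>m$, strictly positive) normalizing prefactor, so that $u=\rho\,g$ with $g:=\sum_{i\ge1}[\lambda^{(i)}]^{-s/2}\xi^{(i)}\psi^{(i)}$ and $u_n\circ T_n=(\rho\circ T_n)(g_n\circ T_n)$ with $g_n:=\sum_{i=1}^n[\lambda_n^{(i)}]^{-s/2}\xi^{(i)}\psi_n^{(i)}$, and letting $g^{k_n},g_n^{k_n}$ be the truncations at level $k_n$, I would decompose $u_n\circ T_n-u$ as the sum of (A) $(\rho\circ T_n-\rho)(g_n\circ T_n)$, (B) $\rho\,((g_n-g_n^{k_n})\circ T_n)$, (C) $\rho\,(g_n^{k_n}\circ T_n-g^{k_n})$, and (D) $\rho\,(g^{k_n}-g)$, and bound $\mathbb{E}$ of each $L^2$-norm conditionally on the point cloud.

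Terms (A), (B), (D) are routine. Orthonormality of the $\psi^{(i)}$ in $L^2(\gamma)$ and independence of the $\xi^{(i)}$ give $\mathbb{E}\|g^{k_n}-g\|_{L^2}^2=\sum_{i>k_n}[\lambda^{(i)}]^{-s}\asymp k_n^{1-2s/m}$ by Weyl's law, so (D) $\lesssim k_n^{1/2-s/m}\to0$ since $s>m$. For (B), the lower bound half of Theorem \ref{thm:evalRate} gives $\lambda_n^{(k_n)}\gtrsim\lambda^{(k_n)}\asymp k_n^{2/m}$, and monotonicity of the $\lambda_n^{(i)}$ yields $\mathbb{E}\|(g_n-g_n^{k_n})\circ T_n\|_{L^2}^2=\sum_{i=k_n+1}^n[\lambda_n^{(i)}]^{-s}\le n[\lambda_n^{(k_n)}]^{-s}\lesssim n\,k_n^{-2s/m}$, so (B) $\lesssim\sqrt{n}\,k_n^{-s/m}\to0$ since $k_n\gg n^{m/(2s)}$. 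The same estimates give $\sum_{i=1}^n[\lambda_n^{(i)}]^{-s}=O(1)$, hence $\mathbb{E}\|g_n\circ T_n\|_{L^2}^2=O(1)$; combined with $\|\rho\circ T_n-\rho\|_\infty\lesssim\eps_n$ (from the Lipschitzness of $\rho$ and $d_{\M}(x,T_n(x))\le\eps_n$) this gives (A) $\lesssim\eps_n\to0$.

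The crux is (C). Writing the $i$-th summand of $g_n^{k_n}\circ T_n-g^{k_n}$ as $([\lambda_n^{(i)}]^{-s/2}-[\lambda^{(i)}]^{-s/2})(\psi_n^{(i)}\circ T_n)+[\lambda^{(i)}]^{-s/2}(\psi_n^{(i)}\circ T_n-\psi^{(i)})$, using $\|\psi_n^{(i)}\circ T_n\|_{L^2}=\|\psi_n^{(i)}\|_{L^2(\gamma_n)}=1$, the mean value theorem, and Theorems \ref{thm:evalRate}--\ref{thm:efunRate}, the triangle inequality produces
\begin{align*}
  \mathbb{E}\|\text{(C)}\|_{L^2}\;\lesssim\;\sum_{i=1}^{k_n}[\lambda^{(i)}]^{-\frac{s}{2}}\Bigl(\tfrac{\eps_n}{h_n}+h_n\sqrt{\lambda^{(i)}}\Bigr)\;+\;\sum_{i=1}^{k_n}[\lambda^{(i)}]^{-\frac{s}{2}}\,i^{\frac32}\sqrt{\tfrac{\eps_n}{h_n}+h_n\sqrt{\lambda^{(i)}}},
\end{align*}
which is the trade-off already displayed in \eqref{eq:tradeoff}--\eqref{eq:efunsumbound}. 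The first sum is $o(1)$ by Weyl's law and $s>m$ (and is $O(\eps_n/h_n+h_n)=O(\sqrt{h_n})$ under \eqref{eq:hnscaling2}, since there $\eps_n\asymp h_n^2$ and $s>m+1$), so it is harmless. For the second, dominant, sum I would argue in two ways. For \emph{convergence} ($s>m$): fix $\ell\in\mathbb{N}$ and split at $\ell$; the tail $\sum_{\ell<i\le k_n}$ is $\le2\sum_{i>\ell}[\lambda^{(i)}]^{-s/2}\asymp\sum_{i>\ell}i^{-s/m}$, a convergent tail since $s/m>1$, while for $\ell$ fixed the head $\sum_{i\le\ell}$ tends to $0$ because $\eps_n/h_n\to0$ and $h_n\to0$ in Theorem \ref{thm:efunRate}; thus $\limsup_n\mathbb{E}\|u_n\circ T_n-u\|_{L^2}\lesssim\sum_{i>\ell}i^{-s/m}$ for every $\ell$, and $\ell\to\infty$ gives the first claim. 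For the \emph{rate} ($s>(5m+1)/2$, $h_n\asymp\sqrt{\eps_n}$): bound $\sqrt{\eps_n/h_n+h_n\sqrt{\lambda^{(i)}}}\le\sqrt{\eps_n/h_n}+\sqrt{h_n}\,i^{1/(2m)}$, so by Weyl's law the second sum is $\lesssim\sqrt{\eps_n/h_n}\sum_{i\ge1}i^{3/2-s/m}+\sqrt{h_n}\sum_{i\ge1}i^{3/2+1/(2m)-s/m}$; the two series converge exactly when $s>5m/2$ and $s>(5m+1)/2$, respectively, so (C) $\lesssim\sqrt{\eps_n/h_n}+\sqrt{h_n}\asymp\sqrt{h_n}$ because $\eps_n\asymp h_n^2$ under \eqref{eq:hnscaling2}. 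Choosing $k_n$ in the allowed window so that moreover $\sqrt{n}\,k_n^{-s/m}\lesssim\sqrt{h_n}$---possible since $s>(5m+1)/2>(2m+1)/2$ leaves room between $n^{m/(2s)}$ and $h_n^{-m}\asymp n^{1/2}(\log n)^{-mc_m/2}$---one collects (A)--(D) into $\mathbb{E}\|u_n\circ T_n-u\|_{L^2}=O(\sqrt{h_n})=O\bigl((\log n)^{c_m/4}n^{-1/(4m)}\bigr)$. Theorem \ref{thm:Rate2} then follows verbatim from the $\mathcal{T}_n$-estimate of Theorem \ref{thm:efunRate}, whose extra factor $(\log n)^{mc_m/2}$ multiplies (C) and turns $\sqrt{h_n}$ into $(\log n)^{(2m+1)c_m/4}n^{-1/(4m)}$.

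The main obstacle is the coupled choice of $k_n$ in \eqref{eq:tradeoff}: $k_n$ must be large enough ($\gg n^{m/(2s)}$) to kill the truncation tails (B) and (D), yet small enough ($\ll h_n^{-m}$) for the spectral estimates to remain valid up to level $k_n$, and reconciling these constraints is exactly what the bandwidth window \eqref{eq:hnscaling} provides. The secondary difficulty is the polynomial factor $i^{3/2}$ in the eigenfunction estimate of Theorem \ref{thm:efunRate}, which makes $\sum_i[\lambda^{(i)}]^{-s/2}i^{3/2}\sqrt{\eps_n/h_n+h_n\sqrt{\lambda^{(i)}}}$ diverge unless $s>(5m+1)/2$; the fixed-$\ell$ truncation sidesteps this for bare convergence but not for a rate, which is why the two regimes of the theorem carry different hypotheses on $s$.
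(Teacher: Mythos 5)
Your proposal is correct and is essentially the paper's argument: same almost-sure event, same frequency cutoff window $n^{m/(2s)}\ll k_n\ll h_n^{-m}$, same reduction to Theorems \ref{thm:evalRate}--\ref{thm:efunRate}, same fixed-$\ell$ splitting for bare convergence, and the same $s>(5m+1)/2$ threshold driven by the $i^{3/2}$ factor in the eigenfunction estimate. The only difference is cosmetic: you handle the prefactor change $(\rho\circ T_n-\rho)$ once, against the full discrete series $g_n\circ T_n$, whereas the paper telescopes through four intermediate functions and applies the prefactor change to the truncated continuum series $\sum_{i\le k_n}[\lambda^{(i)}]^{-s/2}\xi^{(i)}\psi^{(i)}$; both give the bound $O(\eps_n)$. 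Your treatment of the eigenvalue discrepancy in (C) uses the triangle inequality over $i$ (needing $s>m+1$ for the rate) where the paper uses orthogonality of $\{\psi_n^{(i)}\circ T_n\}$ and Pythagoras (needing only $s>1+m/2$), but both thresholds are well below $(5m+1)/2$, so the conclusion and the final exponent $n^{-1/(4m)}(\log n)^{c_m/4}$ are identical.
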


\begin{proof}
Suppose we are in a realization where the conclusion of Proposition \ref{prop:transmap} holds.
Suppose $k_n$ is chosen so that $n^{m/2s}\ll k_n \ll h_n^{-m}$. Note that this is possible given the scaling of $h_n$. Then by Theorem \ref{thm:efunRate2} we can fix orthonormal eigenfunctions $\{\psi^{(i)}\}_{i=1}^{\infty}$ of $\L^{\tau,\kappa}$ and  $\{\psi^{(n)}_i\}_{i=1}^n$ of $L^{\tau,\kappa}_n$ for each $n$ so that 
\begin{align}
     \|\psi_n^{(i)}\circ T_n-\psi^{(i)}\|_{L^2} \leq C_{\M,\tau,\kappa}i^{\frac{3}{2}}  \sqrt{\frac{\eps_n}{h_n}+h_n\sqrt{\lambda^{(i)}}},\label{eq:efunBound}
\end{align}
for $i=1,\ldots,k_n$. Recall $u_n$ and $u$  have the following representations 
\begin{align*}
    u_n&:= \tau_n^{\frac{s}{2}-\frac{m}{4}}\kappa_n^{\frac{m}{2}}\sum_{i=1}^n \left[\lambda_n^{(i)}\right]^{-\frac{s}{2}} \xi^{(i)} \psi_n^{(i)}, \\
    u& =\tau^{\frac{s}{2}-\frac{m}{4}}\kappa^{\frac{m}{2}}\sum_{i=1}^{\infty} \left[\lambda^{(i)}\right]^{-\frac{s}{2}} \xi^{(i)} \psi^{(i)}.
\end{align*}
Since $\tau_n$ is the restriction of $\tau$ to $\M_n$, we have $\tau_n\circ T_n=\tau\circ T_n$ and similarly for $\kappa_n$. Therefore $u_n\circ T_n$ has the expression
\begin{align*}
    u_n\circ T_n=[\tau\circ T_n]^{\frac{s}{2}-\frac{m}{4}}[\kappa\circ T_n]^{\frac{m}{2}}\sum_{i=1}^n\left[\lambda_n^{(i)}\right]^{-\frac{s}{2}}\xi^{(i)}\psi_n^{(i)}\circ T_n.
\end{align*}
To bound the expected $L^2$ distance between $u_n\circ T_n$ and $u$, we introduce four intermediate functions. 
\begin{align*}
   u_n^{k_n}&:= [\tau\circ T_n]^{\frac{s}{2}-\frac{m}{4}}[\kappa\circ T_n]^{\frac{m}{2}} \sum_{i=1}^{k_n} \left[\lambda_n^{(i)}\right]^{-\frac{s}{2}} \xi^{(i)} \psi_n^{(i)}\circ T_n, \\
   \tilde{u}_n^{k_n}& := [\tau\circ T_n]^{\frac{s}{2}-\frac{m}{4}}[\kappa\circ T_n]^{\frac{m}{2}} \sum_{i=1}^{k_n} \left[\lambda^{(i)}\right]^{-\frac{s}{2}} \xi^{(i)} \psi_n^{(i)}\circ T_n, \\
   \tilde{u}^{k_n}&:=[\tau\circ T_n]^{\frac{s}{2}-\frac{m}{4}}[\kappa\circ T_n]^{\frac{m}{2}} \sum_{i=1}^{k_n} \left[\lambda^{(i)}\right]^{-\frac{s}{2}} \xi^{(i)} \psi^{(i)}\\
   u^{k_n}&:= \tau^{\frac{s}{2}-\frac{m}{4}}\kappa^{\frac{m}{2}} \sum_{i=1}^{k_n} \left[\lambda^{(i)}\right]^{-\frac{s}{2}} \xi^{(i)} \psi^{(i)}.
\end{align*}
It then suffices to bound the difference between any two consecutive functions. By Theorem \ref{thm:evalRate} and  Weyl's law we have that  $\lambda^{(k_n)}_{n} \gtrsim \lambda^{(k_n)} \gtrsim k_n^{2/m} $, which gives
\begin{align}
    \mathbb{E}\|u_n\circ T_n-u_n^{k_n}\|_{L^2}
    &\leq \beta^{\frac{s}{2}-\frac{m}{4}}\alpha^{\frac{m}{2}}\left(\sum_{i=k_n+1}^n  \left[\lambda_n^{(i)}\right]^{-s} \right) ^{\frac12} \lesssim \left( n\left[\lambda^{(k_n)}_{n}\right]^{-s}\right)^{\frac12} 
    \lesssim  \sqrt{n}k_n^{-\frac{s}{m}}. \label{eq:1}
\end{align}
Similarly, 
\begin{align}
    \mathbb{E}\|u^{k_n}-u\|_{L^2}
    &\lesssim   \left( \sum_{i=k_n+1}^{\infty} \left[\lambda^{(i)}\right]^{-s}\right)^{\frac12} 
    \lesssim \left( \sum_{i=k_n+1}^{\infty} i^{-\frac{2s}{m}} \right)^{\frac12} 
    \lesssim \left( \int _{k_n}^{\infty} x^{-\frac{2s}{m}} \right)^{\frac12} 
    \lesssim  k_n^{\frac{1}{2}-\frac{s}{m}}.  \label{eq:2}
\end{align}
Both \eqref{eq:1} and \eqref{eq:2} converges to 0 by the choice of $k_n$. 

Next, since both $\lambda_n^{(i)}$ and $\lambda^{(i)}$ are bounded below by $\operatorname{min}\tau>0$, by Lipschitz continuity of $x^{-s/2}$ away from 0 we have, for $i=1,\ldots,k_n,$
\begin{align*}
    \left|\left[\lambda_n^{(i)}\right]^{-\frac{s}{2}}-\left[\lambda^{(i)}\right]^{-\frac{s}{2}}\right| \lesssim \left(\left[\lambda_n^{(i)}\right] \wedge \left[\lambda^{(i)}\right]\right)^{-\frac{s}{2}-1} \left|\lambda_n^{(i)}-\lambda^{(i)} \right| 
    \lesssim \left[\lambda^{(i)}\right]^{-\frac{s}{2}} \left(\frac{\eps_n}{h_n}+h_n\sqrt{\lambda^{(i)}}\right)
\end{align*}
Hence 
\begin{align}
    \mathbb{E}\|u_n^{k_n}-\tilde{u}_n^{k_n}\|_{L^2}
    &\lesssim  \left(\sum_{i=1}^{k_n} \left(\left[\lambda_n^{(i)}\right]^{-\frac{s}{2}}-\left[\lambda^{(i)}\right]^{-\frac{s}{2}}\right)^2\right)^{\frac12} \nonumber\\
    &\lesssim  \left(\sum_{i=1}^{k_n} \left[\lambda^{(i)}\right]^{-s}\left(\frac{\eps_n}{h_n}+h_n\sqrt{\lambda^{(i)}}\right)^2\right)^{\frac12}  \label{eq:3}\\
	&\lesssim \left(\frac{\eps_n}{h_n} +h_n k_n^{\frac{1}{m}} \right)\left(\sum_{i=1}^{k_n} i^{-\frac{2s}{m}}\right)^{\frac12}. \nonumber
\end{align}
The last expression goes to 0 since $s>m$, $\eps_n\ll h_n$ and $k_n\ll h_n^{-m}$.

Now by Lipschitz continuity of $\tau$ and $\kappa$, and Proposition \ref{prop:transmap} that $d(x,T_n(x))\leq \eps_n$, we have for all $x\in\M$
\begin{align*}
    \left|\tau(T_n(x))^{\frac{s}{2}-\frac{m}{2}}\kappa(T_n(x))^{\frac{m}{2}} - \tau(x)^{\frac{s}{2}-\frac{m}{2}}\kappa(x)^{\frac{m}{2}} \right| \lesssim \eps_n. 
\end{align*}
Therefore 
\begin{align}
    \mathbb{E}\|\tilde{u}^{k_n}-u^{k_n}\|_{L^2}\lesssim \left\|(\tau\circ T_n)^{\frac{s}{2}-\frac{m}{4}}(\kappa\circ T_n)^{\frac{m}{2}}-\tau^{\frac{s}{2}-\frac{m}{4}}\kappa^{\frac{m}{2}}\right\|_{\infty}  \left(\sum_{i=1}^{k_n}\left[\lambda^{(i)}\right]^{-s}\right) \lesssim \eps_n, \label{eq:4}
\end{align}
which converges to zero. 

Finally, for fixed $\ell \in \mathbb{N}$, we have by using the fact that $\|\psi_n^{(i)}\circ T_n\|_{L^2}=\|\psi_n^{(i)}\|_2=1$   
\begin{align}
    \mathbb{E}\|\tilde{u}_n^{k_n}-\tilde{u}^{k_n} \|_{L^2}
    &\lesssim   \sum_{i=1}^{k_n} \left[\lambda^{(i)}\right]^{-\frac{s}{2}} \|\psi_n^{(i)}\circ T_n-\psi^{(i)}\|_{L^2} \label{eq:5} \\
    &\lesssim \sum_{i=1}^{\ell} \left[\lambda^{(i)}\right]^{-\frac{s}{2}} \|\psi_n^{(i)}\circ T_n-\psi^{(i)}\|_{L^2} + \sum_{i=\ell+1}^{k_n}\left[\lambda^{(i)}\right]^{-\frac{s}{2}}.   \nonumber 
\end{align}
By \eqref{eq:efunBound} we have $\|\psi_n^{(i)}\circ T_n-\psi^{(i)}\|_{L^2}\xrightarrow{n\rightarrow \infty} 0$ for $i=1,\dots, \ell$ since $\ell$ is fixed. Therefore we have 
\begin{align*}
    \underset{n\rightarrow \infty}{\operatorname{lim\, sup}}\,\mathbb{E}\|\tilde{u}_n^{k_n}-\tilde{u}^{k_n} \|_{L^2} \lesssim  \sum_{i=\ell+1}^{\infty}\left[\lambda^{(i)}\right]^{-\frac{s}{2}}.
\end{align*}
Since $\ell$ is arbitrary, the last expression goes to 0 as $\ell\rightarrow \infty$ under the assumption $s>m$. 
Hence by combining all the pieces we get 
$\mathbb{E}\|u_n\circ T_n-u\|_{L^2} \xrightarrow{n \rightarrow \infty} 0.$

Now in order to obtain rates of convergence, we need the additional assumption that $s>\frac{5}{2}m+\frac12$ and to refine in particular the estimates for $\mathbb{E}\|u_n^{k_n}-\tilde{u}_n^{k_n}\|_{L^2}$ and $\mathbb{E}\|\tilde{u}_n^{k_n}-\tilde{u}^{k_n} \|_{L^2}$. By \eqref{eq:3}, we have 
\begin{align}
    \mathbb{E}\|u_n^{k_n}-\tilde{u}_n^{k_n}\|_{L^2} 
	\lesssim \left(\frac{\eps_n}{h_n} +h_n \right)\left(\sum_{i=1}^{k_n} \left[\lambda^{(i)}\right]^{-s+1}\right)^{\frac12}\lesssim \left(\frac{\eps_n}{h_n} +h_n \right), \label{eq:6}
\end{align}
where the last step follows by the assumption $s>\frac{5}{2}m+\frac12$. 
By \eqref{eq:efunBound}, we can further bound \eqref{eq:5} by
\begin{align}
    \mathbb{E}\|\tilde{u}_n^{k_n}-\tilde{u}^{k_n} \|_{L^2}
    &\lesssim   \sum_{i=1}^{k_n} \left[\lambda^{(i)}\right]^{-\frac{s}{2}} i^{\frac32} \sqrt{\frac{\eps_n}{h_n}+h_n\sqrt{\lambda^{(i)}}} \nonumber \\
    & \lesssim \sqrt{\frac{\eps_n}{h_n} +h_n}\sum_{i=1}^{k_n}i^{\frac{3}{2}} \left[\lambda^{(i)}\right]^{-\frac{s}{2}+\frac14}\nonumber\\
    &\lesssim  \sqrt{\frac{\eps_n}{h_n} +h_n}\sum_{i=1}^{k_n}i^{\frac{3}{2}-\frac{s}{m}+\frac{1}{2m}} \lesssim  \sqrt{\frac{\eps_n}{h_n} +h_n}, \label{eq:7}
\end{align}
where the last step follows from that $s>\frac{5}{2}m+\frac12$. Now combining \eqref{eq:1}, \eqref{eq:2}, \eqref{eq:4}, \eqref{eq:6}, \eqref{eq:7}, we see that the error is dominated by     
\begin{align*}
    \mathbb{E} \|u_n\circ T_n-u\|_{L^2}  &\lesssim \sqrt{n}k_n^{-\frac{s}{m}}+\sqrt{\frac{\eps_n}{h_n} +h_n} .
\end{align*}
Therefore by setting
\begin{align*}
    h_n\asymp (\log n)^{\frac{c_m}{2}}n^{-\frac{1}{2m}}, \quad k_n \asymp n^{\frac{2m+1}{4s}}
\end{align*}
we have 
\begin{align*}
    \mathbb{E} \|u_n\circ T_n-u\|_{L^2} \lesssim (\log n)^{\frac{c_m}{4}} n^{-\frac{1}{4m}}. 
\end{align*}
\end{proof}

We end this section with a remark that Theorem \ref{thm:rate} can be stated in terms of the $TL^2$ metric proposed in \cite{trillos2016continuum}. Let $\mathcal{P}(\M)$ be the space of Borel probability measures on $\M$. Define the $TL^2$ space as 
\begin{align*}
    TL^2:= \left\{(\mu,f): \mu \in \mathcal{P}(\M), f \in L^2(\mu)\right\},
\end{align*}
endowed with the metric 
\begin{align*}
    d_{TL^2}\Big((\mu_1,f_1),(\mu_2,f_2) \Big):= \underset{\omega \in \mathscr{C}(\mu_1,\mu_2) }{\operatorname{inf}} \, \left[ \int_{\M}\int_{\M} \left(d_{\M}(x,y)^2+|f_1(x)-f_2(y)|^2\right) d\omega(x,y) \right]^{\frac12},   
\end{align*}
where $\mathscr{C}$ is the set of couplings between $\mu_1$ and $\mu_2$ and $d_{\M}$ is the geodesic distance on $\M$. The $d_{TL^2}$ metric is a natural generalization of $L^2$ convergence of functions and weak convergence of  measures \cite{trillos2016continuum}, which allows comparison of functions defined over the point cloud  with functions defined on $\M$. It bypasses the need to  consider  a specific transport map and thus may be of independent interest. 

The assertions of Theorem \ref{thm:rate} continue to hold if $\mathbb{E}\|u_n\circ T_n-u\|_{L^2}$ is replaced by $\mathbb{E}\left[d_{TL^2}\Big((\gamma_n,u_n),(\gamma,u) \Big)\right]$. The proof follows immediately from the definition since the transport maps $T_n$ induce a coupling defined as $\omega_{T_n}:=(I\times T_n)_{\sharp} \gamma$, the push-forward of $\gamma$ under $I\times T_n:\M\rightarrow \M\times \M_n$, where $(I\times T_n)(x)= \bigl(x,T_n(x) \bigr)$. Hence we see that 
\begin{align*}
    d_{TL^2}\Big((\gamma_n,u_n),(\gamma,u)\Big)&\leq  \left[\int_{\M} \left(d_{\M} \bigl(x,T_n(x)\bigr)^2+|u_n \bigl(T_n(x)\bigr)-u(x)|^2\right) d\gamma(x)\right]^{\frac{1}{2}}\\
    &\lesssim  \eps_n+\|u_n\circ T_n-u\|_{L^2}.
\end{align*}

\end{appendix}

\end{document}